\newtheorem{de}{Definition}[section]
\newtheorem{theo}{Theorem}[section]
\newtheorem{prop}[theo]{Proposition}
\newtheorem{lemma}[theo]{Lemma}
\newtheorem{cor}[theo]{Corollary}
\newtheorem{obs}[theo]{Observation}
\title{Completely Independent Spanning Trees in Some Regular Networks}
\author[1]{Benoit Darties}
\author[1,2]{Nicolas Gastineau\thanks{Author partially supported by the Burgundy Council}}
\author[1]{Olivier Togni}
\affil[1]{\textit{LE2I, UMR CNRS 6306} \\ \textit{Universit\'e de Bourgogne, 21078 Dijon cedex, France} }
\affil[2]{\textit{LIRIS, UMR CNRS 5205} \\ \textit{Université Claude Bernard Lyon 1, Université de Lyon, F-69622 , France} }
\newcommand{\inn}{\text{IN}}
\begin{document}
\maketitle

\begin{abstract}
Let $k\ge 2$ be an integer and $T_1,\ldots, T_k$ be spanning trees of a graph $G$. If for any pair of vertices $(u,v)$ of $V(G)$, the paths from $u$ to $v$ in each $T_i$, $1\le i\le k$, do not contain common edges and common vertices, except the vertices $u$ and $v$, then $T_1,\ldots, T_k$ are completely independent spanning trees in $G$. For $2k$-regular graphs which are $2k$-connected, 
such as the Cartesian product of a complete graph of order $2k-1$ and a cycle and some Cartesian products of three cycles (for $k=3$), the maximum number of completely independent spanning trees contained in these graphs is determined and it turns out that this maximum is not always $k$.
\paragraph{Keywords:} Spanning tree, Cartesian product, Completely independent spanning tree.
\end{abstract}

\section{Introduction}
Let $k\ge 2$ be an integer and $T_1,\ldots, T_k$ be spanning trees in a graph $G$. The spanning trees $T_1,\ldots, T_k$ are {\em edge-disjoint} if $E(T_1)\cap \cdots \cap E(T_k)=\emptyset$.
For a given tree $T$ and a given pair of vertices $(u,v)$ of $T$, let $P_{T}(u,v)$ be the set of vertices in the unique path between $u$ and $v$ in $T$.
The spanning trees $T_1,\ldots, T_k$ are {\em internally disjoint} if for any pair of vertices $(u,v)$ of $V(G)$, $P_{T_1}(u,v)\cap\cdots\cap P_{T_k}(u,v)=\{u,v\}$.
Finally, the spanning trees $T_1,\ldots, T_k$ are {\em completely independent spanning trees} if they are pairwise edge disjoint and internally disjoint.

Disjoint spanning trees have been extensively studied as they are of practical interest for fault-tolerant broadcasting or load-balancing communication systems in interconnection networks : a spanning-tree is often used in various network operations; computing completely independent spanning-trees guarantees a continuity of service, as each can be immediately used as backup spanning tree if a node or link failure occurs on the current spanning tree. Thus, computing $k$ completely independent spanning trees allows to handle up to $k-1$ simultaneous independent node or link failures. In this context, a network is often modeled by a graph $G$ in which the set of vertices $V(G)$ corresponds to the nodes set and the set of edges $E(G)$ to the set of direct links between nodes.


Completely independent spanning trees were introduced by T. Hasunuma \cite{HA2001} and then have been studied on different classes of graphs, such as underlying graphs of line graphs \cite{HA2001}, maximal planar graphs \cite{HA2002}, Cartesian product of two cycles \cite{HA2012} and complete graphs, complete bipartite and tripartite graphs \cite{PAI2013}.
Moreover, the decision problem that consists in determining if there exist two completely independent spanning trees in a graph $G$ is NP-hard \cite{HA2002}. 

Other works on disjoint spanning trees include independent spanning trees which focus on finding spanning trees $T_1,\ldots, T_k$ rooted at $r$, such that for any vertex $v$ the paths from $r$ to $v$ in $T_1,\ldots, T_k$ are pairwise openly disjoint. the main difference is that $T_1,\ldots, T_k$ are rooted at $r$ and only the paths to $r$ are considered. Thus  $T_1,\ldots, T_k$ may share common edges, which is not admissible with completely independent spanning trees.  Independent spanning trees have been studied in several topologies, including product graphs \cite{OIBI1996}, de Bruijn and Kautz digraphs \cite{GH1997,HN2001}, and chordal rings \cite{IKOI1999}. Related works also include Edge-disjoint spanning trees, i.e. spanning-trees which are pairwise edge disjoint only. Edge-disjoint spanning trees have been studied on many classes of graphs, including hypercubes \cite{BA1999}, Cartesian product of cycles \cite{WA2001} and Cartesian product of two graphs \cite{KU2003}.

We use the following notations : for a tree, a vertex that is not a leaf is called an {\em inner vertex}. For a vertex $u$ of a graph $G$, let $d_G(u)$ be its degree in $G$, i.e. the number of edges of $G$ incident with it.

For clarity, we recall the definition of  the Cartesian product of two graphs : Given two graphs $G$ and $H$, the Cartesian product of $G$ and $H$, denoted $G\square H$, is the graph with vertex set $V(G)\times V(H)$ and edge set $\{(u,u')(v,v')|( u=v \land u'v'\in E(H))\lor ( u'=v' \land uv\in E(G)) \}$.

The following theorem gives an alternative definition \cite{HA2001} of completely independent spanning trees.
\begin{theo}[\cite{HA2001}]
Let $k\ge 2$ be an integer. $T_1,\ldots,T_k$ are completely independent spanning trees in a graph $G$ if and only if they are edge-disjoint spanning trees of $G$ and for any $v\in V(G)$, there is at most one $T_i$ such that $d_{T_i}(v)>1$.
\label{allf}
\end{theo}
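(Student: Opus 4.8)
The plan is to prove the two implications separately, with essentially all the work in the forward direction. For the easy direction, suppose $T_1,\dots,T_k$ are edge-disjoint spanning trees in which every vertex is an inner vertex of at most one tree. Take any pair $(u,v)$ and any $x\in P_{T_1}(u,v)\cap\cdots\cap P_{T_k}(u,v)$ with $x\neq u,v$. Then $x$ lies strictly between $u$ and $v$ on the unique $u$-to-$v$ path of each $T_i$, hence has two distinct neighbours on that path and so $d_{T_i}(x)\geq 2$ for every $i$; thus $x$ is an inner vertex of all $k\geq 2$ trees, a contradiction. Hence $P_{T_1}(u,v)\cap\cdots\cap P_{T_k}(u,v)=\{u,v\}$ for every pair, so the trees are internally disjoint, and being also edge-disjoint they are completely independent.

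For the forward direction, assume $T_1,\dots,T_k$ are completely independent. They are edge-disjoint by definition, so it suffices to show that no vertex is an inner vertex of two of them. Suppose, for contradiction, that a vertex $v$ is an inner vertex of both $T_1$ and $T_2$. Then $T_1-v$ has a set $\mathcal A=\{A_1,\dots,A_p\}$ of connected components with $p\geq 2$, and $T_2-v$ has a set $\mathcal B=\{B_1,\dots,B_q\}$ of connected components with $q\geq 2$; each vertex $x\neq v$ lies in exactly one $A_i$ and exactly one $B_j$. Form the bipartite ``intersection'' graph $H$ with parts $\mathcal A$ and $\mathcal B$ and an edge $A_iB_j$ whenever $A_i\cap B_j\neq\emptyset$; every $A_i$ and every $B_j$ is incident to at least one edge of $H$.

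The decisive step is a dichotomy on $H$. If $H$ contains two disjoint edges $A_iB_j$ and $A_{i'}B_{j'}$ with $i\neq i'$ and $j\neq j'$, choose $u\in A_i\cap B_j$ and $w\in A_{i'}\cap B_{j'}$; then $u\neq v\neq w$, the vertices $u,w$ lie in different components of $T_1-v$ and in different components of $T_2-v$, so the path between $u$ and $w$ in $T_1$ and the path between $u$ and $w$ in $T_2$ both pass through $v$ internally, giving $v\in P_{T_1}(u,w)\cap P_{T_2}(u,w)$ with $v\neq u,w$, which contradicts internal disjointness. Otherwise $H$ has no matching of size $2$, and then all its edges are incident with a single vertex, say $A_1$ (the symmetric case being a single $B_1$); this forces $B_j\subseteq A_1$ for every $j$, hence $V(G)\setminus\{v\}=\bigcup_{j}B_j\subseteq A_1$, contradicting $p\geq 2$ (and the symmetric case contradicts $q\geq 2$). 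In every case we obtain a contradiction, which finishes the proof.

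I expect the main obstacle to be stating and justifying this dichotomy cleanly, namely that two ``incompatible'' partitions of $V(G)\setminus\{v\}$, each into at least two parts, must contain a $2\times2$ configuration of vertices witnessing simultaneous separation in both trees; this is where one uses that $H$ without a matching of size $2$ has a vertex cover of size $1$. By contrast, the tree-path manipulations (that a vertex strictly inside a path has degree at least $2$, and that a vertex separating two vertices in a tree lies on their path) are routine and only need to be invoked carefully.
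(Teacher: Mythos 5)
The paper does not actually prove this statement: Theorem~\ref{allf} is imported from \cite{HA2001} and used as a black box, so there is no in-paper proof to compare against. Your argument is correct and self-contained. The backward direction is the standard degree observation, and the forward direction is sound: partitioning $V(G)\setminus\{v\}$ once by the components of $T_1-v$ and once by those of $T_2-v$, the intersection graph $H$ either has a matching of size two, in which case the two chosen vertices are separated by $v$ in both trees and $v$ is a common internal vertex of the two paths, or all edges of $H$ meet a single part, which forces the other partition to be trivial and contradicts $d_{T_i}(v)\ge 2$; the reduction from ``no matching of size two'' to ``all edges share an endpoint'' is immediate in a bipartite graph. One point you should make explicit: your forward direction derives the contradiction from $v$ lying internally on the $T_1$- and $T_2$-paths alone, i.e.\ from a nontrivial \emph{pairwise} intersection, whereas the displayed definition of internal disjointness in this paper literally takes the intersection of all $k$ paths. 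The pairwise reading is the one intended in \cite{HA2001} and is needed for the forward implication to go through, and it is also what your easy direction actually establishes (a vertex interior to even two of the paths is already an inner vertex of two trees); stating at the outset that internal disjointness is meant pairwise removes the only ambiguity in the write-up.
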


It has been conjectured that in any $2k$-connected graph, there are $k$ completely independent spanning trees \cite{HA2002}. This conjecture has been refuted, as there exist $2k$-connected graphs which do not contain two completely independent spanning trees \cite{FE2011}, for any integer $k$. However, the given counterexamples are not $2k$-regular.
\begin{prop}[\cite{FE2011}]
For any $k\ge 2$, there exist $2k$-connected graphs that do not contain two completely independent spanning trees.
\end{prop}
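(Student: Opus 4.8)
I would first use Theorem~\ref{allf} to turn the non‑existence of two completely independent spanning trees into a structural property. Suppose $G$ with $|V(G)|\ge 3$ has two completely independent spanning trees $T_1,T_2$, and let $\inn(T_i)$ denote the set of inner vertices of $T_i$. By Theorem~\ref{allf} no vertex is inner in both trees, so $\inn(T_1)\cap\inn(T_2)=\emptyset$; deleting the leaves of a tree leaves a nonempty subtree, so each $G[\inn(T_i)]$ is connected; and every leaf of $T_i$ has its unique $T_i$-neighbour in $\inn(T_i)$, so $\inn(T_i)$ dominates $G$. Thus $G$ admits two disjoint connected dominating sets. I would also extract the finer consequence of edge‑disjointness: each vertex of $\inn(T_2)$, being a leaf of $T_1$, sends exactly one $T_1$-edge into $\inn(T_1)$, and symmetrically each vertex of $\inn(T_1)$ sends exactly one $T_2$-edge into $\inn(T_2)$; since $T_1$ and $T_2$ are edge‑disjoint, these $|\inn(T_1)|+|\inn(T_2)|$ edges are distinct, and an easy component‑wise version of the same count shows that the bipartite graph $G[\inn(T_1),\inn(T_2)]$ has no component that is a tree. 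So it suffices to construct, for every $k\ge 2$, a $2k$-connected graph $G_k$ that has \emph{no} pair of disjoint sets $I_1,I_2$ with $G[I_1],G[I_2]$ connected, $I_1,I_2$ dominating, and $G[I_1,I_2]$ free of tree components.

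\textbf{Step 2: the construction.}
I would build $G_k$ from a small ``core'' $H$ together with many ``pendant‑type'' gadgets, each of minimum degree $2k$, attached to $H$ according to a combinatorial design on $V(H)$. The design is chosen so that: (i) the $2k$-fold attachments supply, between any two vertices of $G_k$, the $2k$ internally disjoint paths needed for $2k$-connectivity; and (ii) dominating every gadget while staying connected forces any connected dominating set to use $H$ in an essentially rigid way — enough that no second dominating set can be chosen disjoint from the first without the bipartite graph between them splitting off a tree component (typically: a gadget vertex whose single edge into one of the two sets has no edge‑disjoint partner into the other). Designing $H$ and the attachment pattern so that (i) and (ii) hold simultaneously is precisely the construction of \cite{FE2011}.

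\textbf{Step 3: verification, and the main obstacle.}
Given such a $G_k$ I would check $\kappa(G_k)=2k$ by Menger's theorem on the $2k$ attachment paths, note that $G_k$ is not $2k$-regular (the core vertices have large degree — consistent with the remark preceding this Proposition), and verify the rigidity property, whence by Step~1 the graph $G_k$ has no two completely independent spanning trees. The crux is Step~2: $2k$-connectivity forces minimum degree $\ge 2k$ and an abundance of internally disjoint paths, and this ``thickness'' tends to create many small connected dominating sets that can be chosen pairwise disjoint — as one sees already for joins such as $K_{2k}+\overline{K_N}$ or $C_{2k}+\overline{K_N}$, which turn out to possess two completely independent spanning trees. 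Getting the connectivity requirement and the anti‑splitting rigidity to coexist is the entire difficulty, and this is where the argument of \cite{FE2011} does its work.
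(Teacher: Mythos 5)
There is a genuine gap: you never actually construct the counterexample graph. Your Step~1 is correct and is exactly the right reduction --- it is the content of Observation~\ref{FE20}, which the paper itself recalls from \cite{FE2011}: if $T_1,T_2$ are completely independent spanning trees then $\inn(T_1)$ and $\inn(T_2)$ are disjoint connected dominating sets, so it suffices to exhibit a $2k$-connected graph with no two disjoint dominating sets of the required kind. But Step~2, which is where the proposition would actually be proved, contains no graph: you describe a ``core plus gadgets'' template and then write that choosing the core and the attachment pattern ``is precisely the construction of \cite{FE2011}.'' Since the proposition is attributed to \cite{FE2011} in the first place, deferring the construction to that reference is circular; as a standalone argument your text establishes nothing beyond the (correct) necessary condition of Step~1. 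Your Step~3 openly concedes this (``The crux is Step~2 \dots this is where the argument of \cite{FE2011} does its work'').

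For comparison, the paper also does not reprove the proposition --- it cites it and summarizes the idea in one sentence: build a $2k$-connected graph with many degree-$2k$ vertices whose neighbourhoods coincide, and argue that these vertices cannot all be adjacent to an inner vertex of each tree. The concrete realization (which is what your sketch is missing) is short: take a clique on a set $A$ with $|A|=4k-1$ and, for every $2k$-subset $B\subseteq A$, attach a large independent set of vertices each joined exactly to $B$. The result is $2k$-connected, and since $\inn(T_1)\cap A$ and $\inn(T_2)\cap A$ are disjoint, one of them has at most $2k-1$ elements, so some $2k$-subset $B$ of $A$ misses it entirely and the vertices attached to $B$ violate Observation~\ref{FE20}. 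Without some such explicit choice of graph and pigeonhole argument, your proposal is a correct reduction followed by an unproved existence claim, not a proof. (Your additional ``no tree component in $G[\inn(T_1),\inn(T_2)]$'' refinement is valid but is not needed; the domination obstruction alone suffices.)
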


The proof of the previous proposition consists in constructing a $2k$-connected graph with a large proportion of vertices of degree $2k$ adjacent to the same vertices and proving that these vertices of degree $2k$ can not be all adjacent to inner vertices in a fixed tree. 

This article is organized as follows. Section~\ref{sec:necessaryConds} presents necessary conditions on $2r$-regular graphs in order to have $r$ completely independent spanning trees.
Section~\ref{sec:cartprod} presents the maximum number of completely independent spanning trees in $K_{m}\square C_n$, for $n\ge3$ and $m\ge 3$. In particular, we exhibit the first $2r$-regular graphs which are $2r$-connected and which do not contain $r$ completely independent spanning trees.
In Section~\ref{sec:3Dgrid}, we determine three completely independent spanning trees in some Cartesian products of three cycles $C_{n_1}\square C_{n_2} \square C_{n_3}$, for $3\le n_1\le n_2 \le n_3$.

\section{Necessary conditions on $2r$-regular graphs}
\label{sec:necessaryConds}

\begin{prop}
If in a $2r$-regular graph $G$ there exist $r$ completely independent spanning trees, then every spanning tree has maximum degree at most $r+1$.
\label{pr+1}
\end{prop}
\begin{proof}
By Theorem \ref{allf}, every vertex should be of degree 1 in every spanning tree except in one spanning tree. Hence, in a spanning tree, a vertex is either of degree 1 (a leaf) or has degree between 2 and $r+1$ (an inner vertex), as $2r-(r-1)=r+1$.
\end{proof}

Let $\inn(T)$ be the set of inner vertices in a tree $T$.
\begin{prop}
If in a $2r$-regular graph $G$ of order $n$ there exist $r$ completely independent spanning trees, then there exists a spanning tree $T$ among them such that $|\inn(T)|\le \lfloor n/r \rfloor$.
\label{tinf}
\end{prop}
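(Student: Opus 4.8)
The plan is to exploit the alternative characterization of completely independent spanning trees (Theorem~\ref{allf}) to show that the inner-vertex sets of the $r$ trees are pairwise disjoint, and then apply an averaging argument.

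First I would fix $r$ completely independent spanning trees $T_1,\ldots,T_r$ of $G$. By Theorem~\ref{allf}, for every vertex $v\in V(G)$ there is at most one index $i$ with $d_{T_i}(v)>1$; equivalently, $v$ can be an inner vertex in at most one of the trees. Hence the sets $\inn(T_1),\ldots,\inn(T_r)$ are pairwise disjoint subsets of $V(G)$, which gives
\[
\sum_{i=1}^{r} |\inn(T_i)| \;=\; \Big|\bigcup_{i=1}^{r}\inn(T_i)\Big| \;\le\; |V(G)| \;=\; n .
\]

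Next I would apply the pigeonhole/averaging principle: since the $r$ nonnegative quantities $|\inn(T_i)|$ sum to at most $n$, at least one of them is at most $n/r$. As $|\inn(T_i)|$ is an integer, the tree $T$ achieving the minimum satisfies $|\inn(T)|\le \lfloor n/r\rfloor$, which is the claimed bound.

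I do not expect a genuine obstacle here; the only points requiring a little care are (i) invoking Theorem~\ref{allf} in the correct direction to get disjointness of the inner-vertex sets rather than of the edge sets, and (ii) passing from the real bound $n/r$ to the floor, which is immediate by integrality. The argument is short and uses no property of $G$ beyond $|V(G)|=n$ and the existence of the $r$ completely independent spanning trees.
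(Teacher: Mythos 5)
Your proposal is correct and follows essentially the same route as the paper: both derive $\sum_{i=1}^r|\inn(T_i)|\le n$ from Theorem~\ref{allf} (each vertex is inner in at most one tree) and then conclude by averaging, the paper merely phrasing the final step as a contradiction with all $|\inn(T_i)|>\lfloor n/r\rfloor$ rather than as a direct pigeonhole plus integrality argument.
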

\begin{proof}Let $T_1, \ldots, T_r$ be completely independent spanning trees in $G$ and suppose that $|\inn(T_i)|> \lfloor n/r \rfloor$ for every $i\in\{1,\ldots,r\}$. By Theorem \ref{allf}, we have $\sum\limits_{i=1}^{r} |\inn(T_i)|\le n$. 
With our hypothesis, we have $\sum\limits_{i=1}^{r} |\inn(T_i)|\ge (\lfloor n/r \rfloor +1) r>n$, and a contradiction. 
\end{proof}

\begin{prop}\label{pbound}
If in a $2r$-regular graph $G$ of order $n$ there exist $r$ completely independent spanning trees $T_1, \ldots, T_r$, then for every integer $i$, $1\le i\le r$, $$\left\lceil\frac{n-2}{r}\right\rceil \le |\inn(T_i)|\le n-\left\lceil\frac{n-2}{r}\right\rceil (r-1).$$
\label{dv2}
\end{prop}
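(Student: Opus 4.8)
The plan is to derive the lower bound first, by a degree-counting argument on a single spanning tree, and then obtain the upper bound as an immediate consequence of that lower bound together with the disjointness of the inner-vertex sets guaranteed by Theorem~\ref{allf}.

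First I would fix one of the trees, say $T_i$, and set $I=\inn(T_i)$ and $L=V(G)\setminus I$ for its set of leaves, so that $|I|+|L|=n$. Since $T_i$ is a spanning tree it has exactly $n-1$ edges, so the sum of its vertex degrees equals $2(n-1)$. Each leaf contributes exactly $1$ to this sum, while by Proposition~\ref{pr+1} each inner vertex has degree at most $r+1$ in $T_i$. Hence $2(n-1)\le |L|+(r+1)|I|=(n-|I|)+(r+1)|I|=n+r|I|$, which rearranges to $r|I|\ge n-2$. Since $|I|$ is a non-negative integer, this gives $|\inn(T_i)|\ge\left\lceil (n-2)/r\right\rceil$, and the same argument applies verbatim to every $T_j$.

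For the upper bound I would invoke Theorem~\ref{allf}: for each vertex $v$ there is at most one tree in which $d_{T_j}(v)>1$, so the sets $\inn(T_1),\ldots,\inn(T_r)$ are pairwise disjoint and $\sum_{j=1}^{r}|\inn(T_j)|\le n$. Applying the lower bound just established to the $r-1$ trees $T_j$ with $j\neq i$ then yields $|\inn(T_i)|\le n-\sum_{j\neq i}|\inn(T_j)|\le n-\left\lceil (n-2)/r\right\rceil(r-1)$, which is exactly the claimed upper bound.

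I do not expect a genuine obstacle here, since Propositions~\ref{pr+1} and Theorem~\ref{allf} do the heavy lifting; the only points requiring a little care are the rounding step (using integrality of $|\inn(T_i)|$ to pass from $n-2\le r|I|$ to the ceiling) and correctly separating the degree contributions of leaves and inner vertices in a tree with $n-1$ edges.
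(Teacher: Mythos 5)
Your proof is correct and follows essentially the same route as the paper: the lower bound via the degree-sum identity $\sum_v d_{T_i}(v)=2n-2$ combined with the bound $d_{T_i}(v)\le r+1$ on inner vertices from Proposition~\ref{pr+1}, and the upper bound by applying that lower bound to the other $r-1$ trees together with $\sum_j|\inn(T_j)|\le n$ from Theorem~\ref{allf}. Your write-up is in fact slightly more explicit than the paper's about separating leaf and inner-vertex contributions and about the integrality step giving the ceiling.
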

\begin{proof}
In a spanning tree $T$ of a graph of order $n$ we recall that $\sum\limits_{v\in V(T)} d_T (v)=2n-2$. By Proposition \ref{pr+1}, we have $\sum\limits_{v\in V(T)}d_T(v)\le |\inn(T)|r+n$ and we obtain $\lceil\frac{n-2}{r}\rceil \le |\inn(T)|$. By Theorem \ref{allf}, $\sum\limits_{i=1}^{r}|\inn(T_i)|\le n$. For a fixed integer $i$, using the previous inequality, we obtain $|\inn(T_i)|\le n-\lceil\frac{n-2}{r}\rceil (r-1)$.
\end{proof}

\begin{de}
Let $G$ be a $2r$-regular graph of order $n$ for which there exist $r$ completely independent spanning trees $T_1, \ldots, T_r$. 
A {\em lost edge} is an edge of $G$ that is in none of the spanning trees $T_1, \ldots, T_r$. 
We let $E^l$ be the set of lost edges, i.e. $E^{l}=E(G)-\bigcup\limits_{1\le i\le r}E(T_i)$. 
Let also $E^{l}_{T_i}=\{uv\in E(G)| u, v \in \inn(T_i),\ uv \notin E(T_i)\}$, for $i\in \{1,\ldots ,r\}$, i.e. $E^{l}_{T_i}$ is the subset of edges of $E^{l}$ that have their two extremities in $\inn(T_i)$. 
\end{de}

\begin{prop}
If in a $2r$-regular graph $G$ of order $n$ there exist $r$ completely independent spanning trees $T_1, \ldots, T_r$, then $|E^{l}|=r$.
\end{prop}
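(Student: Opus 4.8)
The plan is to argue by a direct double count of the edges of $G$. Since $G$ is $2r$-regular of order $n$, the handshake lemma gives $|E(G)| = \frac{1}{2}\sum_{v\in V(G)} d_G(v) = \frac{1}{2}\cdot 2rn = rn$. This is the one structural fact about $G$ we will use.

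Next I would invoke the definition of completely independent spanning trees: by hypothesis $T_1,\ldots,T_r$ are pairwise edge-disjoint, so the sets $E(T_1),\ldots,E(T_r)$ partition $\bigcup_{1\le i\le r} E(T_i)$. Each $T_i$ is a spanning tree of a connected graph on $n$ vertices, hence $|E(T_i)| = n-1$. Therefore $\left|\bigcup_{1\le i\le r} E(T_i)\right| = \sum_{i=1}^{r} |E(T_i)| = r(n-1)$.

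Finally, since $E^{l} = E(G) \setminus \bigcup_{1\le i\le r} E(T_i)$ and the union is a subset of $E(G)$, we get $|E^{l}| = |E(G)| - r(n-1) = rn - r(n-1) = r$, which is exactly the claim.

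There is really no hard step here; the only thing to be careful about is that edge-disjointness (needed so that the union has size exactly $r(n-1)$ rather than something smaller) is part of the very definition of completely independent spanning trees and so is available for free, and that $G$ being $2r$-regular is what pins down $|E(G)|$ exactly. If anything, the "obstacle" is purely bookkeeping: making sure the regularity and the spanning-tree edge count are combined with the right signs.
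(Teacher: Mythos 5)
Your proof is correct and is essentially the same as the paper's: both count $|E(G)|=rn$ from $2r$-regularity, use edge-disjointness to get $\sum_{i=1}^{r}|E(T_i)|=r(n-1)$, and subtract. No issues.
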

\begin{proof}
We have $\sum\limits_{i=1}^{r} |E(T_i)|+|E^{l}|=E(G)=rn$ and $\sum\limits_{i=1}^{r} |E(T_i)|=r(n-1)$. Hence, $|E^{l}|=r.$
\end{proof}

Since each edge of $E^l_{T_i}$ is also in $E^l$ and each edge of $E^l$ is in at most one set $E^l_{T_i}$ for some integer $i$, we have the following observation.

\begin{obs}
In a $2r$-regular graph $G$ of order $n$ for which there exist $r$ completely independent spanning trees $T_1, \ldots, T_r$, we have $\sum\limits_{1\le i\le r}|E^{l}_{T_i}|\le |E^{l}|= r$.
\label{enadj}
\end{obs}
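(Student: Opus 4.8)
The plan is simply to verify the two set-theoretic facts announced in the sentence preceding the statement and then combine them with the previous proposition. First I would check the containment $E^{l}_{T_i}\subseteq E^{l}$ for each $i$. Take an edge $uv\in E^{l}_{T_i}$. By definition $uv\in E(G)$, $uv\notin E(T_i)$, and $u,v\in\inn(T_i)$. Since $u$ and $v$ are inner vertices of $T_i$, Theorem~\ref{allf} forces $d_{T_j}(u)=d_{T_j}(v)=1$ for every $j\neq i$, so $uv$ cannot be an edge of any $T_j$ with $j\neq i$ either. Hence $uv\in E(G)-\bigcup_{1\le j\le r}E(T_j)=E^{l}$.

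Next I would show that the sets $E^{l}_{T_1},\ldots,E^{l}_{T_r}$ are pairwise disjoint. Suppose, for $i\neq j$, that some edge $uv$ lies in both $E^{l}_{T_i}$ and $E^{l}_{T_j}$. Then $u\in\inn(T_i)$ and $u\in\inn(T_j)$, i.e. $d_{T_i}(u)>1$ and $d_{T_j}(u)>1$ with $i\neq j$, contradicting Theorem~\ref{allf}. So no edge belongs to two of the sets $E^{l}_{T_i}$.

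Combining the two facts, $\bigcup_{1\le i\le r}E^{l}_{T_i}\subseteq E^{l}$ and the union is disjoint, whence $\sum_{1\le i\le r}|E^{l}_{T_i}|=\bigl|\bigcup_{1\le i\le r}E^{l}_{T_i}\bigr|\le|E^{l}|$, and $|E^{l}|=r$ is exactly the previous proposition. There is no genuine obstacle; the only points requiring care are that the definition of $E^{l}_{T_i}$ already excludes edges of $E(T_i)$ (so the containment in $E^{l}$ is legitimate) and that a single shared inner endpoint is enough to invoke the Theorem~\ref{allf} contradiction for disjointness.
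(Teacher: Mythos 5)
Your proof is correct and follows essentially the same route as the paper, which justifies the observation in one sentence by exactly the two facts you verify: each $E^{l}_{T_i}$ is contained in $E^{l}$, the sets $E^{l}_{T_i}$ are pairwise disjoint because a common edge would give a vertex that is inner in two trees (contradicting Theorem~\ref{allf}), and $|E^{l}|=r$ comes from the preceding proposition. The only step you leave implicit in the containment argument is that an edge of a spanning tree on more than two vertices cannot have both endpoints of degree~$1$, which is what lets you pass from $d_{T_j}(u)=d_{T_j}(v)=1$ to $uv\notin E(T_j)$; this is trivial but worth a clause.
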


\begin{de}
The {\em potential extra degree} of a spanning tree $T$ in a $2r$-regular graph $G$ of order $n$ is $\text{ped}(T)=|\inn(T)| r-n+2$.
\end{de}
With Proposition~\ref{pbound}, we have the following easy observation:
\begin{obs}
Let $G$ be a graph, for which there exist $r$ completely independent spanning trees $T_1, \ldots, T_r$. Then, for every $i$, $0\le i \le r$, $\text{ped}(T_i)\ge 0$.
\end{obs}
Note also that, by definition, the number of inner vertices of $T_i$ of degree at most $r$ is bounded by $\text{ped}(T_i)$.
\begin{prop}
If in a $2r$-regular graph $G$ of order $n$ there exist $r$ completely independent spanning trees, then there exists a spanning tree $T$ among them such that $\text{ped}(T)\le 2$ and $E^{l}_{T}\le 1$, with strict inequalities if $r$ does not divide $n$.
\end{prop}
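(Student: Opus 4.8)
The plan is to reduce everything to a single local identity for $\text{ped}$ and then finish by a pigeonhole argument on $\sum_i|\inn(T_i)|$.

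\medskip
\noindent\textbf{Step 1 (a formula for $\text{ped}$).} First I would show that for every spanning tree $T$ of a graph of order $n$,
$$\text{ped}(T)=\sum_{v\in\inn(T)}\bigl(r+1-d_T(v)\bigr).$$
This follows from the handshake identity $\sum_{v\in V(T)}d_T(v)=2n-2$ together with the fact that $T$ has exactly $n-|\inn(T)|$ leaves, each of degree $1$: subtracting the leaf contribution gives $\sum_{v\in\inn(T)}d_T(v)=n-2+|\inn(T)|$, and then $\sum_{v\in\inn(T)}(r+1-d_T(v))=(r+1)|\inn(T)|-(n-2+|\inn(T)|)=r|\inn(T)|-n+2=\text{ped}(T)$.

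\medskip
\noindent\textbf{Step 2 (relating $\text{ped}(T_i)$ to $E^l_{T_i}$).} Fix $i$ and let $v\in\inn(T_i)$. By Theorem~\ref{allf}, $d_{T_j}(v)=1$ for every $j\neq i$, so among the $2r$ edges of $G$ incident with $v$, exactly $d_{T_i}(v)+(r-1)$ belong to some tree (and they are distinct since the $T_j$ are edge-disjoint), leaving exactly $r+1-d_{T_i}(v)$ lost edges incident with $v$. Summing the formula of Step~1 over $v\in\inn(T_i)$ therefore counts each edge of $E^l_{T_i}$ twice and each lost edge with exactly one endpoint in $\inn(T_i)$ once; in particular
$$\text{ped}(T_i)\ \ge\ 2\,|E^l_{T_i}|.$$
Hence $\text{ped}(T_i)\le 2$ implies $|E^l_{T_i}|\le 1$, and $\text{ped}(T_i)\le 1$ implies $|E^l_{T_i}|=0$; so it suffices to bound $\text{ped}$.

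\medskip
\noindent\textbf{Step 3 (averaging and the divisibility case).} By Theorem~\ref{allf}, $\sum_{i=1}^r|\inn(T_i)|\le n$, so
$$\sum_{i=1}^{r}\text{ped}(T_i)=r\sum_{i=1}^{r}|\inn(T_i)|-rn+2r\ \le\ 2r.$$
Since each $\text{ped}(T_i)$ is a nonnegative integer, some $T$ among $T_1,\dots,T_r$ satisfies $\text{ped}(T)\le 2$, and then $|E^l_T|\le 1$ by Step~2. Now suppose $r\nmid n$ and, for contradiction, that $\text{ped}(T_i)\ge 2$ for every $i$; combined with $\sum_i\text{ped}(T_i)\le 2r$ this forces $\text{ped}(T_i)=2$, i.e. $r|\inn(T_i)|=n$, for all $i$ — contradicting $r\nmid n$. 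Hence some $T$ has $\text{ped}(T)\le 1<2$, and then $|E^l_T|=0<1$ by Step~2.

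\medskip
The only step that requires genuine care is Step~2: the edge count around an inner vertex must correctly invoke edge-disjointness and the fact that $v$ is a leaf in every tree other than $T_i$; everything else is the handshake identity and pigeonhole, so I expect no real obstacle beyond bookkeeping.
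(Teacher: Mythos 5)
Your proof is correct and follows essentially the same route as the paper: the paper invokes Proposition~\ref{tinf} (itself a pigeonhole on $\sum_i|\inn(T_i)|\le n$) to get a tree with $|\inn(T)|\le\lfloor n/r\rfloor$, hence $\text{ped}(T)\le\lfloor n/r\rfloor r-n+2\le 2$ with strictness when $r\nmid n$, and then bounds $|E^l_T|$ by observing that both endpoints of a lost edge of $E^l_T$ have degree at most $r$ in $T$, giving $\text{ped}(T)\ge 2|E^l_T|$. Your only (harmless) variations are averaging $\text{ped}$ directly rather than $|\inn|$, and making the incidence count behind $\text{ped}(T)\ge 2|E^l_T|$ fully explicit.
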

\begin{proof}
By Proposition \ref{tinf}, there exists a tree $T$ among them such that $|\inn(T)|\le \lfloor n/r \rfloor$.
Hence, $\text{ped}(T)\le \lfloor n/r \rfloor r-n+2\le 2$, with strict inequality if $r$ does not divide $n$.
For every edge $uv$ in $E_{T}^{l}$, both $u$ and $v$ are adjacent to one inner vertex of every spanning tree other than $T$. Hence, both $u$ and $v$ have degree at most $r$ in $T$ and thus $\text{ped}(T)\ge2|E_{T}^{l}|$.
\end{proof}

Note that the inequality $\text{ped}(T)\ge2|E_{T}^{l}|$ can be strict.

\begin{cor}
Suppose that $G$ is a $2r$-regular graph of order $n$ for which there exist $r$ completely independent spanning trees $T_1, \ldots, T_r$, for $r\ge 3$ and $n\equiv 0 \pmod r$. 
Then, for every integer $i$, $1\le i \le r$, $|\inn(T_i)|=n/r$ and $\text{ped}(T_i)=2$.
\end{cor}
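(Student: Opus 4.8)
The plan is short, because the statement falls out of combining two inequalities already in hand: the per-tree lower bound of Proposition~\ref{pbound} and the global counting bound coming from Theorem~\ref{allf}. The idea is to observe that under the hypotheses the lower bound of Proposition~\ref{pbound} simplifies to exactly $n/r$, and then to use the global bound to force this value to be attained by \emph{every} $T_i$ simultaneously.

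First I would evaluate $\lceil (n-2)/r\rceil$ under the assumptions $r\ge 3$ and $n\equiv 0\pmod r$. Writing $n=r\cdot(n/r)$ gives $(n-2)/r = n/r - 2/r$, and since $r\ge 3$ we have $0<2/r<1$, so $\lceil (n-2)/r\rceil = n/r$. Proposition~\ref{pbound} then reads $|\inn(T_i)|\ge n/r$ for every $i\in\{1,\ldots,r\}$. (This is the only place the hypothesis $r\ge 3$ is used: for $r=2$ the ceiling would be $n/2-1$ and the argument below would not close.)

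Next I would sum these $r$ inequalities to obtain $\sum_{i=1}^{r}|\inn(T_i)|\ge r\cdot(n/r)=n$. On the other hand, by Theorem~\ref{allf} every vertex is an inner vertex of at most one of the $T_i$, so $\sum_{i=1}^{r}|\inn(T_i)|\le n$. Hence every inequality in the chain is an equality, and in particular $|\inn(T_i)| = n/r$ for each $i$.

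Finally I would substitute into the definition of the potential extra degree: $\text{ped}(T_i)=|\inn(T_i)|\,r-n+2=(n/r)\,r-n+2=2$ for every $i$, which completes the proof. I do not expect any genuine obstacle here; the proof is purely a matter of assembling the earlier estimates, with the one subtlety being the simplification of the ceiling, which is exactly where $r\ge 3$ enters.
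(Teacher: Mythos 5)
Your proof is correct and matches the derivation the paper intends: the corollary is stated without proof as an immediate consequence of Proposition~\ref{pbound}, whose upper bound $|\inn(T_i)|\le n-\lceil (n-2)/r\rceil(r-1)$ evaluates to $n/r$ under your (correct) simplification of the ceiling, so invoking the global sum $\sum_i|\inn(T_i)|\le n$ directly, as you do, is the same argument since that upper bound was itself obtained from this sum. No issues.
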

\begin{obs}\label{degretree}
For a $2r$-regular graph $G$ of order $n$ for which there exist $r$ completely independent spanning trees $T_1, \ldots, T_r$, for every tree $T_i$, $1\le i\le r$, and every edge $e$ in $E^{l}_{T_i}$, the extremities of $e$ have degree at most $r$ in $T_i$.
\end{obs}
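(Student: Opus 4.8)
The plan is to prove the statement by a direct degree count at each endpoint of the lost edge, essentially isolating the argument that already appears inside the proof of the preceding proposition.

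Fix an integer $i$ with $1\le i\le r$ and an edge $e=uv\in E^{l}_{T_i}$. By definition of $E^{l}_{T_i}$, both $u$ and $v$ are inner vertices of $T_i$ and $uv\notin E(T_i)$; moreover, as noted just before Observation~\ref{enadj}, every edge of $E^{l}_{T_i}$ is a lost edge, so $uv\in E^{l}$, i.e.\ $uv$ belongs to none of $T_1,\ldots,T_r$.

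The key first step is to control the degree of $u$ in the trees other than $T_i$. Since $u\in\inn(T_i)$ we have $d_{T_i}(u)\ge 2$, so by Theorem~\ref{allf} the vertex $u$ has degree exactly $1$ in every $T_j$ with $j\ne i$ (it cannot have degree $0$, each $T_j$ being spanning). I would then partition the $2r$ edges of $G$ incident with $u$ according to which tree, if any, contains them: exactly $d_{T_i}(u)$ of them lie in $T_i$, exactly one lies in $T_j$ for each of the $r-1$ indices $j\ne i$, and the edge $uv$ is lost. Because the trees are pairwise edge-disjoint and disjoint from $E^{l}$, and because $uv\notin E(T_j)$ for every $j$ (not in $T_i$ by hypothesis, and not in any $T_j$ with $j\ne i$ since it is lost), all these edges are pairwise distinct. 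Hence $d_{T_i}(u)+(r-1)+1\le 2r$, that is, $d_{T_i}(u)\le r$. Applying the same argument to $v$, which is likewise an inner vertex of $T_i$ incident with the lost edge $uv$, yields $d_{T_i}(v)\le r$, which is the claim.

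There is no real obstacle here: the only points needing care are the appeal to Theorem~\ref{allf} to force $u$ and $v$ to be leaves of all the other trees, and the verification that the edges counted at $u$ are genuinely distinct, which is exactly where edge-disjointness of the $T_j$ together with $uv\in E^{l}$ is used. Both are routine, and this observation essentially just records, for later reference, a fact already obtained in the proof of the preceding proposition.
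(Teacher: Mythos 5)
Your proof is correct and follows essentially the same route as the paper: the observation is stated without proof there, but the identical counting argument (each endpoint of a lost edge is an inner vertex of $T_i$, hence a leaf of every other tree by Theorem~\ref{allf}, which together with the lost edge accounts for $r$ of its $2r$ incident edges) appears verbatim in the proof of the proposition establishing $\text{ped}(T)\ge 2|E^l_T|$. Your write-up just makes the edge-disjointness bookkeeping explicit.
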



\section{Cartesian product of a complete graph and a cycle}
\label{sec:cartprod}
Let $m\ge 3$ and $n\ge 2$ be integers.
In this section, the considered graphs are $K_{m}\square P_n$, and $K_{m}\square C_n$ $n\ge 3$.

Let $V(K_{m}\square P_n)=V(K_{m}\square C_n)=\{u_i^j, \ 0\le i\le m-1, 0\le j\le n-1\}$ and $E(K_{m}\square P_n)=\{u_i^ju_k^j,\ 0\le i, k\le m-1, i\ne k, 0\le j\le n-1\}\cup\{u_i^ju_{i}^{j+1},\ 0\le i\le m-1, 0\le j\le n-2\})$. $E(K_{m}\square C_n)=E(K_{m}\square P_n)\cup\{u_i^0u_i^{n-1},\ 0\le i\le m-1\}$.

For $j\in\{0,\ldots,n-1\}$, the subgraphs $K^j$ induced by $\{u_i^j,\ 0\le i\le m-1$ are thus complete graphs on $n$ vertices that we call {\em $K$-copies}. In order to study the distribution of inner vertices of the spanning trees among the $K$-copies, we let $V_j(T)=\inn(T)\cap V(K^j)$ and $n_j(T)=|V_j(T)|$ for any spanning tree $T$ of $K_{m}\square C_n$.

In the remaining, the subscript of $u_i^j$ is considered modulo $m$ and its superscript and the subscripts of $V_j(T)$ and $n_j(T)$ are considered modulo $n$.

%
%

\begin{prop}
Let $n$ and $r$ be integers, $n\ge 2$, $r\ge 2$. There exist $r$ completely independent spanning trees in $K_{2r}\square P_n$.
\end{prop}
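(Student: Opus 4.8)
The plan is to construct the $r$ trees explicitly, exploiting the product structure. Write $K_{2r}\square P_n$ with $K$-copies $K^0,\dots,K^{n-1}$, each a clique on $2r$ vertices $u_0^j,\dots,u_{2r-1}^j$, joined by the path edges $u_i^ju_i^{j+1}$. The key idea is to split the $2r$ vertices of each copy into $r$ pairs and assign pair $\{u_{2\ell}^j,u_{2\ell+1}^j\}$ a ``rôle'' in tree $T_{\ell+1}$: in $T_{\ell+1}$ these two vertices will be the only inner vertices coming from copy $K^j$, and every other vertex of $K^j$ will be attached to this pair as a leaf. Concretely, for tree $T_{\ell}$ (indices mod $r$), I would take as its ``spine'' the path $u_{2\ell-2}^0 u_{2\ell-1}^0 u_{2\ell-2}^1 u_{2\ell-1}^1 \cdots u_{2\ell-2}^{n-1} u_{2\ell-1}^{n-1}$, alternating between the two ``active'' vertices of consecutive copies using one clique edge inside each copy and one path edge between copies, and then hang all remaining $2r-2$ vertices of each copy $K^j$ as leaves off one of its two active vertices (say, split them evenly, $r-1$ on each, to keep degrees balanced at $\le r+1$ as Proposition~\ref{pr+1} requires). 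This is clearly a spanning tree of $K_{2r}\square P_n$.

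The next step is to verify the three required properties. \emph{Spanning tree:} each $T_\ell$ visits every vertex (the active ones on the spine, the rest as leaves) and has exactly $2rn-1$ edges forming no cycle, by construction. \emph{Internally disjoint (via Theorem~\ref{allf}):} a vertex $u_i^j$ is an inner vertex only in the tree $T_\ell$ for which $i\in\{2\ell-2,2\ell-1\}$; in every other tree it is a leaf. Since the index $i$ determines $\ell$ uniquely, no vertex is inner in two different trees, which is exactly the alternative characterisation of completely independent spanning trees. \emph{Edge-disjoint:} the spine of $T_\ell$ uses the clique edges $u_{2\ell-2}^ju_{2\ell-1}^j$ and the path edges incident to the active pair of $T_\ell$; the leaf edges of $T_\ell$ are clique edges of $K^j$ joining an active vertex of $T_\ell$ to a non-active vertex. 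One checks that the pair $\{2\ell-2,2\ell-1\}$ labelling the trees is disjoint across $\ell$, so two trees never use the same clique edge, and the only path edges used by $T_\ell$ are those incident to vertices in its active pair, again disjoint across $\ell$. Hence $E(T_\ell)\cap E(T_{\ell'})=\emptyset$ for $\ell\ne\ell'$.

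The main obstacle I anticipate is not the existence of \emph{some} assignment but getting the bookkeeping of edges exactly right so that every path edge $u_i^ju_i^{j+1}$ is used by at most one tree and no clique edge is double-used: the naïve alternating spine described above uses, for each $i$, the path edges $u_i^ju_i^{j+1}$ only when $i$ is active, so the path edges partition cleanly by the index $i$, but one must make sure the leaf-attachment choices inside a copy don't accidentally reuse a clique edge that the spine of a \emph{different} tree needs — this forces the ``$r-1$ leaves on each of the two active vertices'' split to avoid the clique edges of the form $u_{2\ell'-2}^ju_{2\ell'-1}^j$. A clean way to sidestep the case analysis is to describe $T_\ell$ purely in terms of which edges it owns (clique edges with both endpoints in, or one endpoint in, the active pair of $T_\ell$; path edges incident to the active pair of $T_\ell$) and then argue (i) these edge sets are pairwise disjoint by a counting/partition argument, and (ii) each forms a spanning tree by checking connectivity and the edge count $2rn-1$. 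I would also note that the $P_n$ case is genuinely easier than $C_n$ precisely because the path has two ``free ends'' (the copies $K^0$ and $K^{n-1}$), so the alternating spine need not close up, removing the parity obstruction that the cycle case will have to confront later.
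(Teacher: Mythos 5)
Your overall strategy is the same as the paper's: each tree $T_\ell$ gets a designated pair of positions that are its only inner vertices in every $K$-copy, the $r$ pairs partition $\{0,\dots,2r-1\}$, and the remaining $2r-2$ vertices of each copy are hung as leaves, $r-1$ off each member of the pair. Two points need repair before this is a proof. First, the spine you write down, $u_{2\ell-2}^0u_{2\ell-1}^0u_{2\ell-2}^1u_{2\ell-1}^1\cdots$, is not a walk in $K_{2r}\square P_n$: the step from $u_{2\ell-1}^0$ to $u_{2\ell-2}^1$ changes both coordinates at once, so it is not an edge of the Cartesian product. Your verbal description (one clique edge inside each copy, one path edge between consecutive copies) does correspond to a valid zigzag $u_a^0u_b^0u_b^1u_a^1u_a^2u_b^2\cdots$, and that zigzag works, but the vertex sequence must be written in that order. (The paper instead runs two parallel rails $u_{i-1}^0\cdots u_{i-1}^{n-1}$ and $u_{r+i-1}^0\cdots u_{r+i-1}^{n-1}$ joined by a single clique edge in $K^0$; either spine is fine.)

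Second, and more substantively, the leaf assignment is where the real content lies, and you have not resolved it. The constraint you name --- avoiding the edges $u_{2\ell'-2}^ju_{2\ell'-1}^j$ of other spines --- is vacuous, since such an edge has both endpoints outside your active pair and so can never be a leaf edge of $T_\ell$. The real constraint is that the leaf stars of \emph{different} trees inside the same copy be pairwise edge-disjoint; with your spine each tree consumes $2r-1$ clique edges per copy, so the $r$ trees must exactly decompose the $\binom{2r}{2}$ edges of $K^j$, and a careless ``even split'' can fail: for $r=2$ with pairs $\{0,1\}$ and $\{2,3\}$, if both trees claim $u_0^ju_2^j$ and $u_1^ju_3^j$ you get a collision. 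A clean fix compatible with your consecutive pairs is to split, for each two pairs $\{a,a'\}$ and $\{b,b'\}$, the four connecting edges into the two perfect matchings $\{ab,a'b'\}$ and $\{ab',a'b\}$ and give one matching to each tree. The paper sidesteps this bookkeeping by choosing the antipodal pairs $\{i-1,r+i-1\}$ and attaching to each inner vertex the $r-1$ cyclically following positions, which makes the disjointness immediate. So your plan is sound and essentially the published one, but as written the proof is incomplete at exactly the step you flagged as the main obstacle.
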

\begin{proof}
We construct $r$ completely independent spanning trees $T_1$, $\ldots$, $T_r$ as follows:
$E(T_i)=\{u^{j}_{i-1} u^{j+1}_{i-1}, u^{j}_{r+i-1} u^{j+1}_{r+i-1}| j \in\{ 0,\ldots, n-2\}\} \cup \{u^{0}_{i-1} u^{0}_{r+i-1} \} \cup$\newline
$\{u^{j}_{i-1} u^{j}_{i+k},u^{j}_{r+i-1} u^{j}_{r+i+k} |k\in\{0,\ldots, r-2\} \ ,j\in\{0,\ldots,n-1\} \}$.
\end{proof}
\begin{cor}
\label{existr-1}
Let $n$ and $r$ be integers, $n\ge 3$, $r\ge 2$. There exist $r$ completely independent spanning trees in $K_{2r}\square C_n$.
\end{cor}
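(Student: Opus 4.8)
The plan is to derive this as an immediate consequence of the preceding proposition. First I would note that $K_{2r}\square C_n$ is obtained from $K_{2r}\square P_n$ simply by adding the $2r$ ``wrap-around'' edges $u_i^0u_i^{n-1}$, $0\le i\le 2r-1$; in other words $K_{2r}\square P_n$ is a spanning subgraph of $K_{2r}\square C_n$ on the same vertex set. Since $n\ge 3\ge 2$, the previous proposition provides $r$ completely independent spanning trees $T_1,\ldots,T_r$ of $K_{2r}\square P_n$.

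Next I would observe that each $T_i$, being a spanning tree of the spanning subgraph $K_{2r}\square P_n$, is in particular a subtree of $K_{2r}\square C_n$ covering all of $V(K_{2r}\square C_n)$, hence a spanning tree of $K_{2r}\square C_n$ as well. Finally, the property of being completely independent spanning trees is a property of the family $\{T_1,\ldots,T_r\}$ itself: pairwise edge-disjointness and internal disjointness (equivalently, by Theorem~\ref{allf}, pairwise edge-disjointness together with the condition that at most one $T_i$ has degree greater than $1$ at each vertex) makes no reference to the edges of the host graph beyond the requirement that the $T_i$ be spanning subtrees of it. Consequently $T_1,\ldots,T_r$ are $r$ completely independent spanning trees in $K_{2r}\square C_n$, which is the claim.

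I do not expect a genuine obstacle here. The only point that deserves re-checking is that the explicit edge sets written down for the $T_i$ in the proof of the previous proposition really do define $r$ pairwise edge-disjoint spanning trees in which every vertex is an inner vertex of at most one tree; but that is precisely the content of that proposition, which we are entitled to assume. If one wished to avoid citing the path case, one could instead verify the construction directly in $K_{2r}\square C_n$ — and could even reroute one ``column'' edge of each $T_i$ through a wrap-around edge $u_i^0u_i^{n-1}$ to obtain somewhat more balanced trees — but this refinement is not needed for the statement as given.
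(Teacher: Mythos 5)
Your argument is correct and matches the paper's intent: the corollary is stated without proof precisely because the $r$ completely independent spanning trees constructed in $K_{2r}\square P_n$ remain completely independent spanning trees in the supergraph $K_{2r}\square C_n$, the defining properties being intrinsic to the family of trees. Your observation that $n\ge 3\ge 2$ keeps the previous proposition applicable is the only hypothesis check needed, and you made it.
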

In the three next propositions, we will prove that there do not exist $r$ completely independent spanning trees in $K_{2r-1}\square C_n$, for some integers $r$ and $n$.
Let $p=|V(K_{2r-1}\square C_n)|=n (2r-1)$ and assume that there exist $r$ completely independent spanning trees $T_1,\ldots, T_r$ in $K_{2r-1}\square C_n$.
Let $T$ be the spanning tree among them which minimizes $|\inn(T)|$, i.e. $\text{ped}(T)$. By Proposition \ref{tinf}, $T$ is such that $|\inn(T)|\le \lfloor p/r \rfloor=2n-\lceil n/r \rceil$, $\text{ped}(T)\le 2nr-\lceil n/r\rceil r-p+2\le n-\lceil n/r\rceil r+2 \le 2$ and $|E^{l}_{T}|\le 1$.
In order to establish this property we will consider all possible distributions of inner vertices of $T$ among the different $K$-copies and prove that for each of them we have a contradiction. 

The properties given in the following lemma will be useful.
\begin{lemma} \label{le1}
Let $a_i(T)$ be the number of $K$-copies which contains exactly $i$ inner vertices of $T$.
The distribution of inner vertices among the different $K$-copies is such that:
\begin{enumerate}
 \item[i)] if $n_j(T)\ge k$, for some integer $j$, then $|E^{l}_{T}|\ge \frac{1}{2} (k-1)(k-2)$;
 \item[ii)] $n_j(T)< 4$, for every integer $j$;
 \item[iii)] $a_3(T)\le 1$;
 \item[iv)] if $a_3(T)=1$, then $n \equiv 0\pmod{r}$ and $n\ge r$;
 \item[v)] if $a_0(T)=0$, then $a_3(T)\le a_1(T)-\lceil n/r \rceil$; in particular $a_1(T)>a_3(T)$ and $a_1(T)\ge\lceil n/r \rceil$.
\end{enumerate}

\end{lemma}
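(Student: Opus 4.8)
I would prove the five assertions essentially in the order stated, since each later one tends to lean on the earlier ones together with the global counting tools from Section~\ref{sec:necessaryConds}. Throughout, write $T$ for the chosen tree, recall $\mathrm{ped}(T)\le 2$ and $|E^l_T|\le 1$, and keep in mind that each $K$-copy $K^j$ is a clique on $2r-1$ vertices, while the ``cyclic'' edges $u_i^j u_i^{j+1}$ form the only connections between consecutive copies.

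\paragraph{Parts (i)--(ii).} For (i), suppose a copy $K^j$ contains $k$ inner vertices of $T$. Inside the clique $K^j$ there are $\binom{k}{2}$ edges among these inner vertices; an inner vertex of $T$ has degree $1$ in every other tree $T_{i'}$, so by Theorem~\ref{allf} any edge of $G$ joining two inner vertices of $T$ that is \emph{not} an edge of $T$ must be a lost edge of $E^l_T$. The tree $T$ restricted to $K^j$ is a forest on $k$ vertices, hence uses at most $k-1$ of these $\binom k2$ clique edges, so at least $\binom k2-(k-1)=\tfrac12(k-1)(k-2)$ of them lie in $E^l_T$; this is exactly (i). Part (ii) is then immediate: if some $n_j(T)\ge 4$, (i) forces $|E^l_T|\ge \tfrac12\cdot 3\cdot 2=3>1$, contradicting $|E^l_T|\le 1$.

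\paragraph{Parts (iii)--(iv).} By (ii) every copy has at most $3$ inner vertices, so $\sum_j n_j(T)=|\mathrm{inn}(T)|$ with each $n_j(T)\in\{0,1,2,3\}$. If $a_3(T)\ge 2$, two distinct copies each contribute, by (i) applied with $k=3$, at least one edge to $E^l_T$, and these edges are distinct (they live in different cliques), so $|E^l_T|\ge 2$, a contradiction; hence $a_3(T)\le 1$, which is (iii). For (iv), suppose $a_3(T)=1$. Then that single copy already consumes the unique lost edge available inside an $E^l_T$-slot, which (combined with the ``strict inequalities if $r$ does not divide $n$'' clause of the proposition preceding Observation~\ref{degretree}, i.e. $\mathrm{ped}(T)<2$ and $|E^l_T|<1$ when $r\nmid n$) is impossible unless $r\mid n$; more directly, the $k=3$ copy gives $|E^l_T|\ge 1$, forcing equality $|E^l_T|=1$ and $\mathrm{ped}(T)\ge 2|E^l_T|=2$, so $\mathrm{ped}(T)=2$, and $\mathrm{ped}(T)=2$ together with $\mathrm{ped}(T)\le \lfloor p/r\rfloor r - p+2$ (from Proposition~\ref{tinf}) forces $\lfloor p/r\rfloor r=p$, i.e. $r\mid p=n(2r-1)$, and since $\gcd(r,2r-1)=1$ this gives $r\mid n$; finally $n\ge r$ because $n\equiv 0\pmod r$ and $n\ge 1$, while $n\ge 3$ is given, so $n\ge r$ holds unless $r>n$, which $r\mid n$ rules out.

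\paragraph{Part (v) — the main obstacle.} This is the delicate one. Assume $a_0(T)=0$, so every one of the $n$ copies has $n_j(T)\in\{1,2,3\}$. The idea is a connectivity/counting argument on the ``interface'' edges between consecutive copies. Since $T$ is a spanning tree, contracting each copy $K^j$ to a single vertex yields a connected multigraph on the cycle $C_n$ whose edges are exactly the cyclic edges $u_i^ju_i^{j+1}$ used by $T$; connectivity of this contracted graph forces at least $n$ cyclic edges in $T$ spread around so that every ``gap'' between copies $j$ and $j+1$ carries at least one $T$-edge — actually, careful book-keeping (a spanning tree of a cyclic arrangement of $n$ blobs needs all but one of the $n$ inter-blob links, but here \emph{every} link must be hit since there are no degree-$0$ copies to serve as endpoints of the spanning structure without an internal connection — I would argue each cyclic interface is used by exactly $\lceil n/r\rceil$ or more $T$-edges summed appropriately). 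The cleaner route is degree counting: the total degree in $T$ is $2p-2$; each inner vertex contributes degree $\le r+1$ (Proposition~\ref{pr+1}) and each leaf degree $1$, giving $\sum_j\big(n_j(T)\,r\big)\ge p-2$, i.e. $r\,|\mathrm{inn}(T)|\ge p-2$, so $|\mathrm{inn}(T)|\ge\lceil(p-2)/r\rceil=2n-\lceil n/r\rceil$ (matching Proposition~\ref{pbound}); combined with $a_0(T)=0$ this pins $a_2(T)+2a_3(T)=|\mathrm{inn}(T)|-n\ge n-\lceil n/r\rceil$, and since $a_3(T)\le 1$ one bounds $a_2(T)\ge n-\lceil n/r\rceil-2a_3(T)$. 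To finish I would show a crossing inequality $a_3(T)\le a_1(T)-\lceil n/r\rceil$: the point is that a copy with $3$ inner vertices needs many of its cyclic edges absorbed (each of its $3$ inner vertices must still be connected through the cycle direction, and by Observation~\ref{degretree} those vertices have $T$-degree $\le r$, limiting their clique connections), forcing a deficit that must be compensated by copies with only $1$ inner vertex somewhere around the cycle; quantifying this compensation via an amortized argument over consecutive blocks of roughly $r$ copies is where the $\lceil n/r\rceil$ term is produced. I expect the bookkeeping that converts ``every copy is nonempty and the contracted graph is a connected subgraph of $C_n$'' into the precise slack $a_1(T)\ge a_3(T)+\lceil n/r\rceil$ to be the crux — in particular handling the boundary case $r\mid n$ where the inequalities from Proposition~\ref{tinf} are tight. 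Once $a_3(T)\le a_1(T)-\lceil n/r\rceil$ is established, $a_1(T)>a_3(T)$ and $a_1(T)\ge\lceil n/r\rceil$ follow at once since $a_1(T),a_3(T)\ge 0$ and $\lceil n/r\rceil\ge 1$.
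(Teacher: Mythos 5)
Your arguments for parts (i)--(iv) are correct and essentially the paper's: (i) is the forest bound $\binom{k}{2}-(k-1)$ inside a clique, (ii) and (iii) follow from $|E^{l}_{T}|\le 1$, and your route to (iv) via $\text{ped}(T)\ge 2|E^{l}_{T}|=2$ and $r\mid p=n(2r-1)$ with $\gcd(r,2r-1)=1$ is an equivalent reformulation of the paper's direct computation with $n-\lceil n/r\rceil r+2\le 2$.

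Part (v), however, has a genuine gap, and it is precisely the part you flag as ``the crux.'' The inequality you actually derive, $a_2(T)+2a_3(T)=|\inn(T)|-n\ge n-\lceil n/r\rceil$, rests on the \emph{lower} bound for $|\inn(T)|$ from Proposition~\ref{pbound} and points in the wrong direction, and the structural argument you then sketch (contracting the $K$-copies, counting cyclic interface edges, amortizing over blocks of about $r$ copies) is never carried out --- nor is it needed. The claim follows in one line from the \emph{upper} bound coming from the choice of $T$: since $T$ minimizes $|\inn(T)|$ among the $r$ trees, Proposition~\ref{tinf} gives $|\inn(T)|\le\lfloor p/r\rfloor=2n-\lceil n/r\rceil$. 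With $a_0(T)=0$ and, by (ii), $a_j(T)=0$ for $j\ge 4$, one has $a_1(T)+a_2(T)+a_3(T)=n$, hence $|\inn(T)|=a_1(T)+2a_2(T)+3a_3(T)=2n-a_1(T)+a_3(T)$. Combining the two displays yields $a_3(T)\le a_1(T)-\lceil n/r\rceil$, and the stated consequences follow since $a_3(T)\ge 0$ and $\lceil n/r\rceil\ge 1$. This is exactly the paper's proof of (v); no connectivity argument on the cycle of $K$-copies is involved.
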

\begin{proof}

i) : A complete graph of order $k$ contains $\frac{1}{2} k(k-1)$ edges and only $k-1$ edges are in $E(T)$. Thus we have $|E^{l}_{T}|\ge \frac{1}{2} (k-1)(k-2)$.\newline
ii) and iii) : If $n_j(T)\ge 4$ for some $j$ or $a_3(T)> 1$, then by i), we have $|E^{l}_{T}|\ge 2$. Hence, a contradiction.\newline
iv) : As $\text{ped}(T)\le n-\lceil n/r \rceil r+2$, we have $|E^{l}_{T}|<1$ in the case $n \not\equiv 0\pmod{r}$. As $n>0$, we have $n\ge r$.\newline
v) : By ii), we have $|\inn(T)|=a_1(T)+2a_2(T)+3a_3(T)$ and $a_2=n-a_1(T)-a_3(T)$. Hence $|\inn(T)|=a_1(T)+2(n-a_1(T)-a_3(T))+3a_3(T)\le 2n-\lceil n/r \rceil$ by the choice of $T$. Thus, $a_3(T)\le a_1(T)-\lceil n/r \rceil$ and consequently $a_1(T)>a_3$ and $a_1(T)\ge \lceil n/r \rceil$.
\end{proof}

We recall the following observation used in \cite{FE2011}.
\begin{obs}[\cite{FE2011}]\label{FE20}
If in a graph $G$ there exist $r$ completely independent spanning trees $T_1,\ldots, T_r$, then for every integer $i$, $1\le i \le r$, every vertex is adjacent to an inner vertex of $T_i$.
\end{obs}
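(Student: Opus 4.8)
The plan is a short proof by contradiction. Fix an index $i$ with $1\le i\le r$ and a vertex $v$ of $G$, and assume that $v$ has no neighbour in $G$ that is an inner vertex of $T_i$; equivalently, every neighbour of $v$ in $G$ is a leaf of $T_i$. Since the edges of $T_i$ are edges of $G$, the neighbours of $v$ inside the tree $T_i$ form a subset of its neighbours in $G$, so all of them are leaves of $T_i$.

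The key step is to deduce from this that $T_i$ must be a star centred at $v$ (with the degenerate case $T_i=K_2$ allowed). Consider the forest $T_i-v$ obtained by deleting $v$ from $T_i$. Each of its components is joined to $v$ by exactly one edge of $T_i$, so each component contains a neighbour $w$ of $v$ in $T_i$; by the assumption above $w$ is a leaf of $T_i$, hence $w$ has degree $0$ in $T_i-v$, which means that this component is the single vertex $w$. Therefore $V(T_i)$ consists of $v$ together with its neighbours in $T_i$, and every edge of $T_i$ is incident with $v$.

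To finish I would invoke edge-disjointness. Since $T_i$ spans $G$, the previous paragraph says that $v$ is $T_i$-adjacent to every other vertex of $G$; in particular \emph{every} edge of $G$ incident with $v$ already lies in $E(T_i)$. As $r\ge 2$, there is another spanning tree $T_j$ with $j\neq i$; being spanning, $T_j$ contains at least one edge incident with $v$, and that edge also belongs to $E(T_i)$, contradicting the edge-disjointness of $T_i$ and $T_j$. This contradiction shows that $v$ is adjacent in $G$ to an inner vertex of $T_i$, as claimed.

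The whole argument is elementary and I do not anticipate a genuine obstacle; the one point that requires a moment's thought is the reduction of $T_i$ to a star. It is also worth noting that internal disjointness of the family plays no role here — only the edge-disjointness of two of the spanning trees is used.
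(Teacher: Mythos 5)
Your proof is correct. Note that the paper itself gives no argument for this statement: it is recalled as an observation from the cited reference [P\'eterfalvi], so there is no in-paper proof to compare against. Your reduction to the star case is the right key step and is carried out properly: if every $G$-neighbour of $v$ (hence every $T_i$-neighbour of $v$) is a leaf of $T_i$, then each component of $T_i-v$ is a single leaf, so $T_i$ is a spanning star centred at $v$, all edges of $G$ at $v$ lie in $E(T_i)$, and any second edge-disjoint spanning tree would be unable to reach $v$ --- a contradiction since $r\ge 2$ by definition. Your closing remark is also accurate and worth retaining: the conclusion uses only the edge-disjointness of two of the spanning trees, not internal disjointness, so the observation in fact holds for any family of at least two pairwise edge-disjoint spanning trees.
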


\begin{prop}
Let $n,r$ be integers, with $n\ge 3$ and $r\ge 6$. There do not exist $r$ completely independent spanning trees in $K_{2r-1}\square C_n$.
\label{krgrand}
\end{prop}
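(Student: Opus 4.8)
The plan is to assume, for contradiction, that $r$ completely independent spanning trees $T_1,\dots,T_r$ exist in $K_{2r-1}\square C_n$ with $r\ge 6$, and to derive a contradiction by a counting argument on the tree $T$ minimizing $|\inn(T)|$, combined with Observation~\ref{FE20}. The central quantitative obstacle is that each $K$-copy $K^j$ has $2r-1$ vertices but can contain at most $3$ inner vertices of $T$ (by Lemma~\ref{le1}~ii)), and indeed $a_3(T)\le 1$; so the inner vertices of $T$ are very sparse inside each copy, while every vertex of every copy must still be adjacent to an inner vertex of $T$. Since a vertex $u_i^j$ is adjacent only to the other $2r-2$ vertices of $K^j$ and to $u_i^{j-1}, u_i^{j+1}$, the inner vertices of $T$ in copies $K^{j-1},K^j,K^{j+1}$ must dominate all $2r-1$ vertices of $K^j$.

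First I would make this domination condition explicit: for each $j$, every vertex of $K^j$ is adjacent to an inner vertex of $T$ lying in $K^{j-1}\cup K^j\cup K^{j+1}$. A vertex $u_i^j$ is dominated ``from inside'' iff $n_j(T)\ge 1$ and either $u_i^j\in V_j(T)$ or some other vertex of $V_j(T)$ exists — in fact if $V_j(T)\ne\emptyset$ then every vertex of $K^j$ is adjacent to an inner vertex inside $K^j$ (completeness of the copy), EXCEPT we must be careful: $u_i^j$ is adjacent to all of $V_j(T)$ unless $u_i^j$ is itself the unique element of $V_j(T)$. So the only vertices of $K^j$ not dominated from within $K^j$ are: all of $K^j$ when $n_j(T)=0$, and the single vertex $u_i^j$ when $V_j(T)=\{u_i^j\}$. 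Such vertices must then be dominated ``vertically'', i.e.\ $u_i^j\in V_{j-1}(T)\cup V_{j+1}(T)$ (meaning the corresponding $u_i^{j\pm1}$ is an inner vertex of $T$). The key structural consequence is: if $n_j(T)=0$, then for \emph{every} $i$ we need $u_i^{j-1}$ or $u_i^{j+1}$ inner, forcing $n_{j-1}(T)+n_{j+1}(T)\ge 2r-1$; but $n_{j\pm1}(T)\le 3$, so $6\ge 2r-1$, i.e.\ $r\le 3$ — contradiction with $r\ge 6$. Hence $a_0(T)=0$: no copy is inner-vertex-free.

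Now with $a_0(T)=0$, Lemma~\ref{le1}~v) gives $a_3(T)\le a_1(T)-\lceil n/r\rceil$ and $a_1(T)\ge\lceil n/r\rceil\ge 1$, so there is at least one copy $K^j$ with $n_j(T)=1$, say $V_j(T)=\{u_i^j\}$. Then the vertex $u_i^j$ is not dominated from inside $K^j$, so we need $u_i^{j-1}$ or $u_i^{j+1}$ to be an inner vertex of $T$; but moreover $u_i^j$ must be adjacent to an inner vertex of $T_\ell$ for every $\ell$, which by Theorem~\ref{allf} must be a \emph{different} vertex for each $\ell$ (the inner-vertex sets of $T_1,\dots,T_r$ are disjoint). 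The vertex $u_i^j$ has only $2r$ neighbours, so the $r$ required inner neighbours (one per tree, all distinct, none equal to the inner-neighbour roles forced by $T$ itself) already consume $r$ of them; I would push this count, together with the edge-budget from Observation~\ref{enadj} ($\sum|E^l_{T_\ell}|\le r$) and the sparsity $n_j(T_\ell)\le 3$, to show the neighbourhoods simply cannot accommodate all the disjoint inner-vertex and lost-edge requirements when $r\ge 6$. Concretely, I expect to count, around a copy $K^j$ with few inner vertices of $T$, how many vertices of $K^j$ must be inner in the \emph{other} trees, and compare with $\sum_{\ell\ne \text{index of }T} n_j(T_\ell)\le 3(r-1)$ against the demand $\approx 2r-1$ repeated — the slack closes precisely in the regime $r\ge 6$.

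The main obstacle I anticipate is the bookkeeping in the last step: one must track simultaneously (a) the domination of each $K^j$ by inner vertices of $T$ in the three consecutive copies, (b) the disjointness of $\inn(T_1),\dots,\inn(T_r)$ forced by Theorem~\ref{allf}, and (c) the tight lost-edge budget $|E^l|=r$ and $\sum|E^l_{T_\ell}|\le r$ from Observation~\ref{enadj}. The cleanest route is probably: fix a copy $K^j$ with $n_j(T)\le 1$ (exists since $a_1(T)\ge 1$ and $a_0(T)=0$ would otherwise have been used), locate its unique inner vertex $u_i^j$ of $T$, and then observe that each of the $2r-1$ vertices of $K^j$, being adjacent in $K^j$ to at most $2r-2$ vertices and vertically to at most $2$, must receive its inner-$T_\ell$-neighbour for all $\ell\ne (\text{index of }T)$ from within $K^{j-1}\cup K^j\cup K^{j+1}$; summing the disjoint contributions over all $r-1$ such trees and all three copies yields at most $3\cdot 3\cdot(r-1)=9(r-1)$ available inner slots (using $n_{j'}(T_\ell)\le 3$), but the demand is on the order of $(2r-1)$ vertices each needing $r-1$ distinct such neighbours, i.e.\ growing like $r^2$, which is incompatible with $9(r-1)$ once $r\ge 6$ after accounting for the vertical edges. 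I would present this as the decisive inequality and check the threshold $r\ge 6$ explicitly.
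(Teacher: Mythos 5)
Your opening step (ruling out a $K$-copy with no inner vertex of $T$ via Observation~\ref{FE20} and Property~ii) of Lemma~\ref{le1}) is exactly the paper's, and is fine. But the decisive contradiction you propose does not close, for two reasons. First, the bound $n_{j'}(T_\ell)\le 3$ that you invoke for \emph{all} trees $T_\ell$ is only available for the one tree $T$ chosen by Proposition~\ref{tinf}: Property~ii) rests on $|E^l_T|\le 1$, whereas for an arbitrary $T_\ell$ Observation~\ref{enadj} only gives $|E^l_{T_\ell}|\le r$, which permits $n_{j'}(T_\ell)$ as large as about $\sqrt{2r}+2$ (for $r=6$, up to $5$). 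Second, and more fundamentally, the ``demand of order $r^2$ versus supply $9(r-1)$'' comparison miscounts: the requirement of Observation~\ref{FE20} is a domination requirement, so a single inner vertex of $T_\ell$ lying in $K^j$ simultaneously serves as the inner-$T_\ell$-neighbour of \emph{all} $2r-2$ other vertices of $K^j$. Since the disjoint inner sets $\inn(T_1),\ldots,\inn(T_r)$ can place close to $2r-1$ inner vertices inside $K^j$ itself, the available coverage per copy is of order $r^2$ with a comfortable constant against a demand of order $r^2$ with a smaller constant, so no contradiction of the kind you describe exists for any $r$; the cross-tree domination constraints are simply not the obstruction.

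The missing idea is to stay inside the single tree $T$ and use the \emph{degree} bound of Proposition~\ref{pr+1} on how leaves are attached in $T$, not merely which vertices dominate which. If $V_i(T)=\{u\}$, then $u$ has degree at most $r+1$ in $T$ and must be $T$-adjacent to an inner vertex in an adjacent copy (otherwise its star would be a component of $T$), so $u$ can absorb at most $r$ of the $2r-2$ remaining vertices of $K^i$ as $T$-neighbours; each of the other at least $r-2$ vertices can only be attached in $T$ through its vertical neighbour, which must therefore be an inner vertex of $T$. This forces $n_{i-1}(T)+n_{i+1}(T)\ge r-1\ge 5$, hence a copy with three inner vertices of $T$; Properties iii)--v) then produce a second copy with a single inner vertex adjacent to the same $3$-copy, and repeating the count with the one of the two singleton vertices that is $T$-adjacent to two inner vertices yields $n_{i-1}(T)+n_{i+1}(T)\ge r+1\ge 7$, contradicting Property~ii). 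Without this degree-based attachment count, your argument has no mechanism to produce a contradiction.
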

\begin{proof} The proof is by contradiction, using Properties i)-v) of Lemma~\ref{le1}.
Suppose that there exist $r$ completely independent spanning trees in $K_{2r-1}\square C_n$ and let $T$ be the tree from Proposition \ref{tinf}.
If a $K$-copy $K^i$, $1\le i\le n$, contains no inner vertex, then, by Observation \ref{FE20}, $n_{i-1}(T)+n_{i+1}(T)\ge 2r-1 \ge 11$. Consequently, we have $n_{i-1}(T)\ge 6$ or $n_{i+1}(T)\ge 6$, contradicting Property ii). Hence $a_0(T)=0$.

By Property v), $a_1(T)\ge \lceil n/r \rceil\ge 1$. Hence there exists an integer $i$, $0\le i\le n-1$, such that $n_i=1$.
Let $u$ be the (unique) vertex of $V_i(T)$. The vertex $u$ has degree at most $r+1$ in $T$ and is adjacent in $T$ to a vertex of $V_{i-1}(T)\cup V_{i+1}(T)$. Then, $u$ is adjacent in $T$ to at most $r$ vertices of $V(K^{i})$. Thus, at least $r-2\ge 4$ vertices are not adjacent in $T$ to $u$. Hence, these $r-2$ vertices are adjacent in $T$ to vertices of $V_{i-1}(T)\cup V_{i+1}(T)$ and consequently $n_{i-1}(T)+n_{i+1}(T)\ge 5$. Therefore, we have $n_{i-1}(T)\ge 3$ or $n_{i+1}(T)\ge 3$. 

Assume, without loss of generality, that $n_{i+1}(T)\ge 3$. By Property ii), $n_{i+1}(T)= 3$ and by Property iii), $a_3(T)=1$, i.e., $n_j(T)<3$ for any $j\ne i$.
But, by Property iv), $n\ge r$ and by Property v), $a_1\ge 2$. Let $j$ be such that $n_j(T)=1$, with $j\ne i$. Using a similar argument than above, we obtain that $n_{j-1}(T)\ge 3$ or $n_{j+1}(T)\ge 3$. But, as $a_3(T)=1$, the only possibility is to have $j=i+2$, i.e. both $K$-copies with one internal vertices are adjacent to the same $K$-copy with three internal vertices. 

Let $v$ be the (unique) vertex of $V_{j} (T)$.
One vertex among $u$ and $v$ is adjacent in $T$ to two inner vertices (if not $T$ would be not connected). 
Suppose, without loss of generality, that $u$ is adjacent in $T$ to two inner vertices. Then $u$ is adjacent in $T$ to at most $r-1$ vertices in $V(K^{i})$. Thus, at least $r-1\ge 5$ vertices are not adjacent in $T$ to $u$. Therefore, at least $5$ vertices are adjacent in $T$ to vertices of $V_{i-1}(T)\cup V_{i+1}(T)$ and consequently $n_{i-1}(T)+n_{i+1}(T)\ge 7$.
Hence, we have $n_{i-1}(T)\ge 4$ or $n_{i+1}(T)\ge 4$, contradicting Property ii).
\end{proof}

\begin{figure}[t]
\begin{center}
\begin{tikzpicture}[scale=0.7]
\node at (0,0) [circle,draw=black!50,fill=black!50,scale=0.5]{};
\node at (1,0) [circle,draw=black!50,fill=black!50,scale=0.5]{};
\node at (0,0.8) [circle,draw=black!50,fill=black!50,scale=0.5]{};
\node at (1,0.8) [circle,draw=black!50,fill=black!50,scale=0.5]{};
\node at (0,1.6) [circle,draw=black!50,fill=black!50,scale=0.5]{};
\node at (1,1.6) [circle,draw=black!50,fill=black!50,scale=0.5]{};
\node at (0.5,2.4) [circle,draw=black!50,fill=black!50,scale=0.5]{};
\node at (0+3,0) [circle,draw=black!50,fill=black!50,scale=0.5]{};
\node at (1+3,0) [circle,draw=black!50,fill=black!50,scale=0.5]{};
\node at (0+3,0.8) [circle,draw=black!50,fill=black!50,scale=0.5]{};;
\node at (1+3,0.8) [circle,draw=black!50,fill=black!50,scale=0.5]{};
\node at (0+3,1.6) [circle,draw=black!50,fill=black!50,scale=0.5]{};
\node at (1+3,1.6) [circle,draw=black!50,fill=black!50,scale=0.5]{};
\node at (0.5+3,2.4) [circle,draw=black!50,fill=black!50,scale=0.5]{};
\node at (0+6,0) [circle,draw=black!50,fill=black!50,scale=0.5]{};
\node at (1+6,0) [circle,draw=black!50,fill=black!50,scale=0.5]{};
\node at (0+6,0.8) [circle,draw=black!50,fill=black!50,scale=0.5]{};
\node at (1+6,0.8) [circle,draw=black!50,fill=black!50,scale=0.5]{};
\node at (0+6,1.6) [circle,draw=black!50,fill=black!50,scale=0.5]{};
\node at (1+6,1.6) [circle,draw=black!50,fill=black!50,scale=0.5]{};
\node at (0.5+6,2.4) [circle,draw=black!50,fill=black!50,scale=0.5]{};
\node at (0+9,0) [circle,draw=black!50,fill=black!50,scale=0.5]{};
\node at (1+9,0) [circle,draw=black!50,fill=black!50,scale=0.5]{};
\node at (0+9,0.8) [circle,draw=black!50,fill=black!50,scale=0.5]{};
\node at (1+9,0.8) [circle,draw=black!50,fill=black!50,scale=0.5]{};
\node at (0+9,1.6) [circle,draw=black!50,fill=black!50,scale=0.5]{};
\node at (1+9,1.6) [circle,draw=black!50,fill=black!50,scale=0.5]{};
\node at (0.5+9,2.4) [circle,draw=black!50,fill=black!50,scale=0.5]{};
\node at (0+12,0) [circle,draw=black!50,fill=black!50,scale=0.5]{};
\node at (1+12,0) [circle,draw=black!50,fill=black!50,scale=0.5]{};
\node at (0+12,0.8) [circle,draw=black!50,fill=black!50,scale=0.5]{};
\node at (1+12,0.8) [circle,draw=black!50,fill=black!50,scale=0.5]{};
\node at (0+12,1.6) [circle,draw=black!50,fill=black!50,scale=0.5]{};
\node at (1+12,1.6) [circle,draw=black!50,fill=black!50,scale=0.5]{};
\node at (0.5+12,2.4) [circle,draw=black!50,fill=black!50,scale=0.5]{};
\node at (0.5,-0.7){$K^{i-1}$}; 
\node at (0.5+3,-0.7){$K^{i}$}; 
\node at (0.5+6,-0.7){$K^{i+1}$}; 
\node at (0.5+9,-0.7){$K^{j}$}; 
\node at (0.5+12,-0.7){$K^{j+1}$}; 
\node at (0.5+6,2.4) [regular polygon, regular polygon sides=4,draw=black,fill=blue,scale=0.5]{};
\node at (0+6,1.6) [regular polygon, regular polygon sides=4,draw=black,fill=blue,scale=0.5]{};
\node at (1+6,1.6) [regular polygon, regular polygon sides=4,draw=black,fill=blue,scale=0.5]{};
\node at (0+3,1.6) [regular polygon, regular polygon sides=4,draw=black,fill=blue,scale=0.5]{};
\node at (1+9,1.6) [regular polygon, regular polygon sides=4,draw=black,fill=blue,scale=0.5]{};
\node at (0,1.6) [regular polygon, regular polygon sides=4,draw=black,fill=blue,scale=0.5]{};
\node at (1+12,1.6) [regular polygon, regular polygon sides=4,draw=black,fill=blue,scale=0.5]{};
\node at (0,0.8) [regular polygon, regular polygon sides=4,draw=black,fill=blue,scale=0.5]{};
\node at (1+12,0.8) [regular polygon, regular polygon sides=4,draw=black,fill=blue,scale=0.5]{};
\draw[ultra thick,color=blue] (0.5+6,2.4) -- (0+6,1.6);
\draw[ultra thick,color=blue] (0.5+6,2.4) -- (1+6,1.6);
\draw[dashed, ultra thick,color=blue] (0+6,1.6) -- (1+6,1.6);
\draw[ultra thick,color=blue] (0,1.6) -- (0,0.8);
\draw[ultra thick,color=blue] (1+12,1.6) -- (1+12,0.8);
\draw[ultra thick,color=blue] (0,1.6) .. controls (1.5,2) .. (0+3,1.6);
\draw[ultra thick,color=blue] (0+3,1.6) .. controls (1.5+3,2) .. (0+6,1.6);
\draw[ultra thick,color=blue] (1+6,1.6) .. controls (2.5+6,2) .. (1+9,1.6);
\draw[ultra thick,color=blue] (1+9,1.6) .. controls (2.5+9,2) .. (1+12,1.6);

\draw[ultra thick,color=blue] (0,0.8) .. controls (1.5,1.2) .. (0+3,0.8);
\draw[ultra thick,color=blue] (1+9,0.8) .. controls (2.5+9,1.2) .. (1+12,0.8);
\draw[ultra thick,color=blue] (0+3,1.6) -- (1+3,0.8);
\draw[ultra thick,color=blue] (0+3,1.6) -- (1+3,0);
\draw[ultra thick,color=blue] (1+9,1.6) -- (0+9,0.8);
\draw[ultra thick,color=blue] (1+9,1.6) -- (0+9,0);
\draw[ultra thick,color=blue] (0+3,1.6)  .. controls (-0.2+3,0.8) ..  (0+3,0);
\draw[ultra thick,color=blue] (1+9,1.6)  .. controls (1.2+9,0.8) ..  (1+9,0);

\end{tikzpicture}
\end{center}
\caption{A configuration of inner vertices in the proof of Proposition \ref{propk45}. Boxes are inner vertices and the dashed edge represents a lost edge.}
\label{illpreuve}
\end{figure}
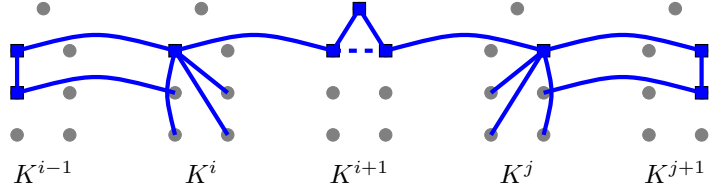
\begin{prop}
\label{propk45}
Let $n,r$ be integers, with $4\le r \le 5$ and $n\ge r+1$. There do not exist $r$ completely independent spanning trees in $K_{2r-1}\square C_n$.
\end{prop}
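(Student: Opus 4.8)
The plan is to proceed exactly as in Proposition~\ref{krgrand}, by contradiction, but to push the counting arguments further to cover the smaller range $4\le r\le 5$. So I assume $r$ completely independent spanning trees exist in $K_{2r-1}\square C_n$ and let $T$ be the tree minimizing $|\inn(T)|$, so that $\mathrm{ped}(T)\le 2$, $|E^l_T|\le 1$, and Properties i)--v) of Lemma~\ref{le1} all apply. The first step is again to rule out empty $K$-copies: if $n_i(T)=0$ then by Observation~\ref{FE20} we get $n_{i-1}(T)+n_{i+1}(T)\ge 2r-1\ge 7$, which by Property ii) ($n_j(T)\le 3$) forces one of the two neighboring copies to contain at least $4$ inner vertices — contradiction. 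Hence $a_0(T)=0$, and then Property v) gives $a_1(T)\ge\lceil n/r\rceil\ge 1$, so there is a copy $K^i$ with $n_i(T)=1$.

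The core step is the local analysis around a copy $K^i$ with a unique inner vertex $u$. Since $\deg_T(u)\le r+1$ and $u$ must be joined in $T$ to an inner vertex of a neighboring copy, $u$ is adjacent in $T$ to at most $r$ vertices of $K^i$; the remaining $\ge r-2$ vertices of $K^i\setminus\{u\}$ must hang from inner vertices of $K^{i-1}\cup K^{i+1}$, so $n_{i-1}(T)+n_{i+1}(T)\ge r-1\ge 3$. By Properties ii) and iii) this pins us down: at least one neighbor, say $K^{i+1}$, has exactly $3$ inner vertices, $a_3(T)=1$, and hence (Property iv)) $r\mid n$ and $n\ge r$; then Property v) gives $a_1(T)\ge 2$, so there is a second copy $K^j$, $j\ne i$, with $n_j(T)=1$, whose unique inner vertex $v$ forces $n_{j-1}(T)+n_{j+1}(T)\ge r-1\ge 3$ as well, and since $a_3(T)=1$ the only way to satisfy this is $j=i+2$, so the two singleton copies both attach to $K^{i+1}$ (this is the configuration of Figure~\ref{illpreuve}). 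Now I sharpen the degree bookkeeping: because $T$ is connected, at least one of $u,v$ — say $u$ — must be joined in $T$ to two inner vertices (one in $K^{i-1}$, one in $K^{i+1}$, since $K^i$ has only one inner vertex), hence $u$ is adjacent in $T$ to at most $r-1$ vertices of $K^i$, forcing $\ge r$ vertices of $K^i$ to hang from $K^{i-1}\cup K^{i+1}$, i.e. $n_{i-1}(T)+n_{i+1}(T)\ge r+1\ge 5$; with $n_{i+1}(T)=3$ this gives $n_{i-1}(T)\ge 2$. For $r=5$ we already get $n_{i-1}(T)+n_{i+1}(T)\ge 6$, so $n_{i-1}(T)\ge 3$, contradicting $a_3(T)=1$; the delicate case is $r=4$, where we only get $n_{i-1}(T)=2$, and one similarly needs to examine $K^{i+2}=K^j$ and $K^{i-1}$ more carefully (counting how the $2r-2=6$ non-$u$ vertices of $K^i$ and the vertices of the two singleton copies distribute, together with the single lost edge $|E^l_T|=1$ which must sit inside the unique copy $K^{i+1}$ with $3$ inner vertices by Property i)) to derive that some copy violates $n_j(T)\le 3$ or that $E^l_T$ is forced to contain a second edge.

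The main obstacle is precisely this $r=4$ endgame: the crude degree inequalities leave just enough slack that a single application does not close the argument, so one has to track simultaneously the $T$-degrees of $u$ and $v$, the three inner vertices of $K^{i+1}$, and the two inner vertices of $K^{i-1}$, using that the unique lost edge of $E^l_T$ is already consumed inside $K^{i+1}$ (a complete graph on $3$ inner vertices has exactly $1$ lost edge by Property i)), so \emph{no} other lost edge is available — in particular every edge inside $K^i$ not incident to $u$ must lie in some $T_\ell$ with $\ell\ne$ index of $T$, which constrains how those $2r-2$ vertices can be leaves of $T$. I expect the contradiction to come from showing that the vertices of $K^{i-1}$ (only two inner vertices, total $T$-degree inside $K^{i-1}$ at most $2(r-1)=6$ after accounting for their necessary edges toward $K^{i-2}$ and toward $u$) cannot simultaneously serve as parents for enough vertices of $K^i$ and of $K^{i-1}$ itself, forcing $n_{i-2}(T)\ge 3$ and hence a second copy with $3$ inner vertices — contradicting Property iii). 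Once $r=4$ and $r=5$ are both eliminated, the proposition follows.
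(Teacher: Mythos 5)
There is a genuine gap at what you call the core step. From a singleton copy $K^i$ with inner vertex $u$ you deduce that at least $r-2$ vertices of $K^i$ must hang from inner vertices of the neighbouring copies, hence $n_{i-1}(T)+n_{i+1}(T)\ge r-1\ge 3$, and you then assert that Properties ii) and iii) ``pin down'' a neighbour with exactly $3$ inner vertices. That inference is invalid: a sum of at least $3$ (for $r=4$) or $4$ (for $r=5$) is perfectly compatible with distributions such as $(2,1)$ or $(2,2)$, which violate neither $n_j(T)\le 3$ nor $a_3(T)\le 1$. The analogous step in Proposition~\ref{krgrand} works only because $r\ge 6$ gives $n_{i-1}(T)+n_{i+1}(T)\ge 5$, which genuinely forces one side to be at least $3$; the whole point of the separate Proposition~\ref{propk45} is that this single-copy argument is too weak for $r\in\{4,5\}$. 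The paper's proof repairs this by exploiting the hypothesis $n\ge r+1$ from the outset: it gives $a_1(T)\ge\lceil n/r\rceil\ge 2$, so there are \emph{two} singleton copies $K^i$ and $K^j$, and the proof case-splits on whether they are adjacent. When $|i-j|>1$, connectivity of $T$ forces one of $u,v$ to have two inner neighbours, which upgrades the count to $n_{i-1}(T)+n_{i+1}(T)\ge r+1$ and only then yields a copy with $3$ inner vertices (immediately closing $r=5$). When $i=j-1$, a different pair of subcases (whether $uv\in E(T)$ or not) is needed; your proposal never treats adjacent singleton copies at all, and your later claim that ``the only way to satisfy this is $j=i+2$'' silently assumes that case away.

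Beyond that, the $r=4$ endgame is only a statement of intent (``I expect the contradiction to come from\ldots''), aimed at the copy $K^{i-1}$, whereas the paper closes it with a concrete count on $K^{i+1}$: in the configuration of Figure~\ref{illpreuve}, four leaves of $K^{i+1}$ and four leaves of $V(K^i)\cup V(K^j)$ must all attach to $V_{i+1}(T)$, while Observation~\ref{degretree} (the two endpoints of the forced lost edge in $K^{i+1}$ have degree at most $r$ in $T$) caps the number of leaves attachable to $V_{i+1}(T)$ at seven. Note also that your appeal to the uniqueness of the lost edge conflates $E^l_T$ (bounded by $1$) with $E^l$ (which has $r$ elements); edges of $K^i$ joining two leaves of $T$ are not in $E^l_T$, so up to $r-1$ of them may still be lost. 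As it stands, the proposal does not establish the configuration it analyses, omits a required case, and leaves the hardest subcase unproved.
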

\begin{proof} The proof is by contradiction, using Properties i)-v) of Lemma~\ref{le1}.
Suppose that there exist $r$ completely independent spanning trees in $K_{2r-1}\square C_n$ and let $T$ be the tree from Proposition \ref{tinf}.
If a $K$-copy $K^i$, $0\le i\le n-1$, contains no inner vertex, then $n_{i-1}(T)+n_{i+1}(T)\ge 7$. Consequently, we have $n_{i-1}(T)\ge 4$ or $n_{i+1}(T)\ge 4$, contradicting Property ii). Hence $a_0(T)=0$.
By Property v), $a_1(T)\ge\lceil n/r \rceil\ge 2$. Thus, there exist two integers $i$ and $j$, $0\le i \le j \le n-1$, such that $n_i(T)= n_j(T) = 1$, with $u\in V_i(T)$ and $v\in V_j(T)$.

First, suppose that $i=j-1$. Each of $u$ and $v$ has degree at most $r+1$ in $T$ and $u$ ($v$, respectively) is adjacent in $T$ to a vertex of $V_{i-1}(T)\cup V_{i+1}(T)$ (of $V_{j-1}(T)\cup V_{j+1}(T)$, respectively).

If $u$ and $v$ are adjacent in $T$, then one vertex among $u$ and $v$ is adjacent in $T$ to a vertex of $V_{i-1}(T)\cup V_{j+1}(T)$ (if not $T$ would be not connected). Suppose, without loss of generality, that $u$ is adjacent to two inner vertices. Then, at least $r-1\ge 3$ vertices of $V(K^i)$ are not adjacent in $T$ to $u$.
Consequently, $n_{i-1}(T)\ge 4$ and we have a contradiction with Property ii).

Else if $u$ and $v$ are not adjacent in $T$, then both $u$ and $v$ are adjacent in $T$ to vertices of $V_{i-1}(T)\cup V_{j+1}(T)$ (if not, $T$ would be not connected). The vertices $u$ and $v$ are each adjacent in $T$ to at most $r$ vertices in $V(K^i)\cup V(K^j)$. 
Hence, there remain at least $4r-2-2r-2=2r-4\ge 4$ vertices in $V(K^i)\cup V(K^j)$ that must be adjacent in $T$ to vertices of $V_{i-1}(T)\cup V_{j+1}(T)$ other than the neighbors of $u$ and of $v$.
Consequently $n_{i-1}(T)+n_{j+1}(T)\ge 6$. Hence, we have $n_{i-1}(T)\ge 3$ and $n_{j+1}(T)\ge 3$, contradicting Property iii) or $n_{i-1}(T)\ge 4$ or $n_{j+1}(T)\ge 4$, contradicting Property ii).

Second, if $|i- j|>1$, then one vertex among $u$ and $v$ is adjacent in $T$ to two inner vertices (if not $T$ would be not connected). Suppose, without loss of generality, that $u$ is adjacent to two inner vertices. At least $r-1$ vertices of $V(K^i)$ are not adjacent in $T$ to $u$. Hence, if $r=5$, we have $n_{i-1}(T)\ge 3$ and $n_{i+1}(T)\ge 3$, contradicting Property iii) or $n_{i-1}(T)\ge 4$ or $n_{i+1}(T)\ge 4$, contradicting Property ii). Consequently, we suppose that $r=4$. Then, at least $r-1\ge 3$ vertices of $V(K^i)$ are not adjacent in $T$ to $u$. Therefore, we have $n_{i-1}(T)\ge 3$ or $n_{i+1}(T)\ge 3$.

Assume, without loss of generality, that $n_{i+1}(T)\ge 3$. By Property ii), $n_{i+1}(T)= 3$ and by Property iii), $a_3(T)=1$, i.e., $n_j(T)<3$ for any $j\ne i$. 
But, as $n> r$ and by Property v), $a_1\ge 3$. Let $i'$ be such that $n_{i'}(T)=1$, with $i' \ne i$ and $i' \ne i$. If $|i'-i|= 1$ or $|i'-j|= 1$, we have a contradiction, using the first point. Two vertices among $u$, $v$ and $u'$ should be adjacent to two inner vertices. Suppose it is the vertices $u$ and $v$. Using a similar argument than above, we obtain that $n_{j-1}(T)\ge 3$ or $n_{j+1}(T)\ge 3$. But, as $a_3(T)=1$, the only possibility is to have $j=i+2$, i.e. both $K$-copies with one internal vertices are adjacent to the same $K$-copy with three internal vertices.

In this case, as $r=4$, then four vertices are not inner vertices in $V(K^{i+1})$, at least three vertices of $V(K^i)$ are not adjacent in $T$ to $u$ and at least three vertices of $V(K^j)$ are not adjacent in $T$ to $v$.
Moreover, we have $n_{i-1}(T)\le 2$ and $n_{j+1}(T)\le 2$.
Figure \ref{illpreuve} illustrates this configuration.
Thus, four vertices of $V(K^{i+1})$ are adjacent in $T$ to vertices of $V_{i+1}(T)$ and four vertices of $V(K^i)\cup V(K^j)$ are adjacent in $T$ to vertices of $V_{i+1}(T)$.
However, by Observation \ref{degretree}, the vertices of $V_{i+1}(T)$ can be adjacent to at most seven leaves in $T$.
Hence, we have a contradiction.

\end{proof}

\begin{prop}
\label{propk93}
There do not exist five completely independent spanning trees in $K_{9}\square C_3$.
\end{prop}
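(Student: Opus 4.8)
The plan is to argue by contradiction, squeezing the numerical constraints of Sections~\ref{sec:necessaryConds} and~\ref{sec:cartprod} until the inner vertices of a smallest tree are almost entirely pinned down, in the spirit of Propositions~\ref{krgrand} and~\ref{propk45}. The graph has order $p=27$ and is $2r$-regular with $r=5$. Suppose $T_1,\dots,T_5$ are completely independent spanning trees and let $T$ be one minimizing $|\inn(T)|$. Since $\text{ped}(T)=5|\inn(T)|-25$ and $|\inn(T)|\le\lfloor 27/5\rfloor=5$ by Proposition~\ref{tinf}, while $\text{ped}(T)\ge0$, we get $|\inn(T)|=5$ and $\text{ped}(T)=0$; since the number of inner vertices of $T$ of degree at most $r$ is at most $\text{ped}(T)=0$ and Proposition~\ref{pr+1} caps degrees at $r+1=6$, \emph{every inner vertex of $T$ has degree exactly $6$ in $T$}. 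Because $3\not\equiv0\pmod5$, Lemma~\ref{le1}(iii)--(iv) give $a_3(T)=0$, so with Lemma~\ref{le1}(ii) each $K$-copy carries at most $2$ inner vertices of $T$; and Observation~\ref{FE20} rules out a $K$-copy with no inner vertex, since for $n=3$ such a copy would force the two other copies to contain together at least $2r-1=9$ inner vertices, whereas in total there are only $5$. Hence the distribution of the inner vertices over the three copies is $(2,2,1)$, and using an automorphism of $C_3$ we may assume $V_0(T)=\{a,b\}$, $V_1(T)=\{c,d\}$, $V_2(T)=\{e\}$; write $i_a,i_b,i_c,i_d,i_e$ for their $K_9$-indices, so that $i_a\ne i_b$ and $i_c\ne i_d$. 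Finally, since $ab\in E(G)$ and $a,b\in\inn(T)$ have degree $6$, Observation~\ref{degretree} (both ends of an edge of $E^l_T$ have degree at most $r$ in $T$) forces $ab\in E(T)$, and likewise $cd\in E(T)$.

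The next step is to read off the structure of the subtree $F$ induced by $\inn(T)$. In any tree the inner vertices induce a connected subgraph, so $F$ is a tree on the five vertices $a,b,c,d,e$ with four edges, two of which are $ab$ and $cd$. The other two edges of $F$ join vertices lying in distinct $K$-copies and hence are Cartesian edges of $G$, of the form $u_k^j u_k^{j+1}$ with equal $K_9$-index $k$; contracting the three blocks $\{a,b\}\subseteq K^0$, $\{c,d\}\subseteq K^1$, $\{e\}\subseteq K^2$ turns $F$ into a tree on three vertices, that is, a path, so exactly one block is central. I would then split into three cases. If $\{c,d\}$ is central, one joining edge runs between $K^0$ and $K^1$ with ends in $\{a,b\}$ and $\{c,d\}$, forcing $\{i_a,i_b\}\cap\{i_c,i_d\}\ne\emptyset$, and the other runs between $K^1$ and $K^2$ with ends in $\{c,d\}$ and $\{e\}$, forcing $i_e\in\{i_c,i_d\}$; as $\{i_a,i_b\}$ and $\{i_c,i_d\}$ are two-element sets that meet, $|\{i_a,i_b\}\cup\{i_c,i_d\}|\le3$, and this union contains $i_e$, so $t:=|(\{i_a,i_b\}\cup\{i_c,i_d\})\setminus\{i_e\}|\le2$. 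The case in which $\{a,b\}$ is central is symmetric. If $\{e\}$ is central, both joining edges are incident to $e$, forcing $i_e\in\{i_a,i_b\}$ and $i_e\in\{i_c,i_d\}$; then $\{i_a,i_b\}$ and $\{i_c,i_d\}$ meet (at $i_e$), so again $|\{i_a,i_b\}\cup\{i_c,i_d\}|\le3$ with $i_e$ in the union, and $t\le2$. Thus $t\le2$ in every case.

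To reach a contradiction I would count the leaves inside $K^2$ and show $t\ge3$. The eight vertices of $K^2$ other than $e$ are leaves of $T$. The vertex $e$ has degree $6$ in $T$; say $x$ of its $T$-edges stay inside $K^2$ (each going to a leaf, since $e$ is the only inner vertex of $K^2$) and the remaining $6-x\le2$ leave $K^2$, necessarily to $u_{i_e}^0$ or $u_{i_e}^1$. Since $F$ is connected and $e$ is its only vertex in $K^2$, at least one $T$-edge at $e$ goes to another inner vertex and therefore leaves $K^2$, so $x\le5$, and hence at least $8-x\ge3$ leaves of $K^2$ are attached in $T$ to an inner vertex outside $K^2$. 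For such a leaf $u_k^2$ the only neighbours outside $K^2$ are $u_k^0$ and $u_k^1$, which are inner vertices of $T$ only if $k\in\{i_a,i_b\}$, respectively $k\in\{i_c,i_d\}$; as these $\ge3$ leaves occupy distinct indices, all different from $i_e$, they yield at least three distinct elements of $(\{i_a,i_b\}\cup\{i_c,i_d\})\setminus\{i_e\}$, that is, $t\ge3$, contradicting $t\le2$. This contradiction completes the proof.

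The delicate point, the one I would write out in full detail, is the three-case analysis of $F$: one must track exactly which $K_9$-indices each Cartesian edge of $F$ forces to coincide and combine these coincidences with $i_a\ne i_b$ and $i_c\ne i_d$ to obtain $|\{i_a,i_b\}\cup\{i_c,i_d\}|\le3$ with $i_e$ in the union in each configuration. The reduction to the distribution $(2,2,1)$, the facts $ab,cd\in E(T)$, and the leaf-and-degree count around $e$ in $K^2$ are all routine once that is in place.
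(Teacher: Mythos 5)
Your proof is correct and follows essentially the same strategy as the paper's: reduce to the inner-vertex distribution $(2,2,1)$, show that the inner vertices occupy too few $K_9$-positions, and derive a contradiction by counting the leaves of the singleton copy that cannot reach an inner vertex of $T$. The paper reaches your bound $t\le 2$ more directly (the two inter-copy edges of the induced subtree force the five inner vertices onto at most three positions, one of which is $i_e$), so your degree-exactly-$6$ and $ab,cd\in E(T)$ deductions and the three-case contraction analysis are correct but somewhat more machinery than needed.
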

\begin{proof}
Suppose that there exist five completely independent spanning trees in $K_{9}\square C_3$ and let $T$ be the tree from Proposition \ref{tinf}.
We recall that $|V(K_{9}\square C_3)|=27$ and $|\inn(T)|\le 6-\lceil 3/4 \rceil= 5$.
If a $K$-copy $K^i$, $0\le i\le n-1$, contains no inner vertex, then $n_{i-1}(T)\ge 5$ or $n_{i+1}(T)\ge 5$. Thus, we have a contradiction with Property ii). By property iv), as $n\not \equiv 0 \pmod{r}$, we have $a_3 (T)=0$. Thus, the only possible distribution of inner vertices of $T$ is $a_1(T)=1$ and $a_2(T)=2$.
Without loss of generality, suppose that $n_0(T)=1$, $n_1(T)=2$ and $n_2(T)=2$, with $u\in V_1(T)$.

Let the position of a vertex $u_i^j$ be $i$.
As $T$ should be connected, two pairs of inner vertices in different $K$-copies should be adjacent in $T$ among these five inner vertices.
Thus, these five vertices have only three different positions.
The vertex $u$ has degree at most $6$ in $T$.
Hence, there are $r-2\ge 3$ vertices of $V(K^1)$ not adjacent in $T$ to $u$.
As the inner vertices have only two positions different from the position of $u$, it is impossible that every vertex is adjacent in $T$ to an inner vertex of $T$.
\end{proof}

We now show positive results for the remaining values of $r$ and $n$. Some of the spanning trees were found using a computer to solve an ILP formulation of the problem.

\begin{prop}
Let $n\ge 3$ be an integer such that $n\equiv0\pmod{3}$. There exist three completely independent spanning trees in $K_{5}\square C_n$.
\label{propk53}
\end{prop}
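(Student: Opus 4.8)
The plan is to give an explicit construction of three spanning trees $T_1, T_2, T_3$ of $K_5 \square C_n$ and then verify, via Theorem~\ref{allf}, that they are completely independent: namely that they are pairwise edge-disjoint and that every vertex is an inner vertex in at most one of them. Since $n \equiv 0 \pmod 3$, write $n = 3q$. The natural idea is to work block-by-block along the cycle direction: group the $K$-copies $K^0, \dots, K^{n-1}$ into $q$ consecutive triples, and within each triple of $K$-copies assign the "inner role'' in a rotating fashion — in copies $\equiv 0 \pmod 3$ the inner vertices belong to $T_1$, in copies $\equiv 1 \pmod 3$ to $T_2$, in copies $\equiv 2 \pmod 3$ to $T_3$. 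By the Corollary following Proposition~\ref{pbound} (applicable since $p = 5n \equiv 0 \pmod 3$ and $r = 3$), each $T_i$ must have exactly $|\inn(T_i)| = 5n/3$ inner vertices with $\text{ped}(T_i) = 2$, so the construction has essentially no slack: each $T_i$ gets its inner vertices from exactly one out of every three $K$-copies, contributing $5$ inner vertices per such copy but with a total extra degree budget of only $2$. Thus almost all of these $5$ inner vertices must have degree exactly $4$ in $T_i$, and only two "defect'' vertices of smaller degree are allowed across the whole tree.

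The key steps, in order, are as follows. First I would fix, for each residue class $c \in \{0,1,2\}$, a template describing how $T_{c+1}$ restricted to three consecutive $K$-copies $K^{3t+c-1}, K^{3t+c}, K^{3t+c+1}$ looks: all five vertices of the middle copy $K^{3t+c}$ are inner vertices of $T_{c+1}$, they are joined into a path or near-spanning subtree inside that $K$-copy using $4$ of the $\binom{5}{2}=10$ available edges (one edge of $K^{3t+c}$ is a lost edge, matching $|E^l_{T_{c+1}}| $ being small), and each of these five inner vertices reaches out along cycle-edges to grab the vertices in the same row of the two neighbouring copies $K^{3t+c-1}$ and $K^{3t+c+1}$ as leaves. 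Second, I would check that the three templates use disjoint edge sets: $T_1$ uses intra-copy edges only in copies $\equiv 0$, $T_2$ only in copies $\equiv 1$, $T_3$ only in copies $\equiv 2$, so the intra-copy (clique) edges are partitioned; and the cycle-edges between $K^j$ and $K^{j+1}$ must be split so that each tree uses each such edge at most once — here one has to be careful, since the copy $K^j$ is a "leaf block'' for one tree and the "inner block'' for another, so the cycle-edge $u_i^j u_i^{j+1}$ should be assigned to whichever of the two adjacent inner-copies claims row $i$ first; I would set this up so that, say, the inner copy uses the cycle-edges to its left as tree edges and the ones to its right are used by the inner copy on the right. Third, I would verify connectivity of each $T_i$: the middle copies are internally connected, and they need to be linked across the gaps — this requires designating, in each middle copy, one special vertex that additionally connects to an inner vertex of the previous middle copy of the same tree (distance $3$ apart along the cycle, so via a short path of length $3$ through the intervening leaf vertices), which is exactly why $\text{ped}(T_i) = 2$ is needed: two inner vertices per tree must take on an extra "connector'' edge, raising their degree, and the potential extra degree budget of $2$ is precisely consumed. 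Fourth, I would confirm the inner-vertex condition of Theorem~\ref{allf}: a vertex $u_i^j$ is inner in $T_{c+1}$ only if $j \equiv c \pmod 3$, and since the three residue classes are disjoint, no vertex is inner in two trees; all remaining vertices are leaves in all three trees, which is consistent. Finally I would tally degrees to double-check $\sum_v d_{T_i}(v) = 2(5n) - 2$ and that no inner vertex exceeds degree $r+1 = 4$ (Proposition~\ref{pr+1}), and that pairwise edge-disjointness indeed holds on the handful of connector edges.

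The main obstacle I expect is the bookkeeping around connectivity and the connector edges: because $\text{ped}(T_i) = 2$ is so tight, there is exactly enough room to add the minimum number of edges needed to stitch the $q$ middle-copy blocks of each tree into a single tree, and one must simultaneously ensure these connector edges are not reused by another tree and do not push any vertex past degree $4$. The case $q = 1$, i.e. $n = 3$, is degenerate (there is only one block per tree, no stitching needed, but then the three middle copies wrap around the $3$-cycle and the leaf structure overlaps heavily) and may have to be handled or at least checked separately; likewise one should confirm the small cases close up correctly modulo $n$. Given the explicit nature of the construction, once the templates and the edge-assignment rule are written down precisely, the verification is a finite check that I would organise as: (a) edge-disjointness, (b) each $T_i$ is a spanning tree (acyclic + connected + $5n-1$ edges), (c) the degree condition of Theorem~\ref{allf}. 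I would likely present the trees explicitly by listing $E(T_1)$, $E(T_2)$, $E(T_3)$ in closed form as unions over $j$, in the same style as the proof of the proposition giving $r$ completely independent spanning trees in $K_{2r}\square P_n$ above, and note that a computer search (as the paper mentions) was used to locate a valid template.
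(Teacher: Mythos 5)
There is a fatal structural flaw in your plan: the inner vertices of one tree cannot all be concentrated in a single $K$-copy. If all five vertices of $K^{3t+c}$ are inner vertices of $T_{c+1}$, then of the $\binom{5}{2}=10$ clique edges of that copy at most $4$ can lie in $E(T_{c+1})$ (acyclicity), and the remaining $6$ can lie in no other tree either: every edge of a spanning tree on at least three vertices has an inner endpoint, and by Theorem~\ref{allf} no vertex of $K^{3t+c}$ is inner in $T_j$ for $j\ne c+1$. So each designated copy contributes at least $6$ lost edges (this is exactly the count in Lemma~\ref{le1}(i) with $k=5$), giving at least $6n$ lost edges over all three trees --- whereas $K_5\square C_n$ is $6$-regular with $15n$ edges and three spanning trees use $15n-3$ of them, so exactly $3$ edges are lost in total. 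Your parenthetical ``one edge of $K^{3t+c}$ is a lost edge'' undercounts by a factor of six, and the edge budget cannot close no matter how the templates are tuned. A secondary problem is the stitching step: consecutive blocks of the same tree are centred on copies at cyclic distance $3$, and a connecting path of length $3$ through intervening vertices would give those vertices degree $2$ in that tree, i.e.\ make them inner vertices of two trees, again contradicting Theorem~\ref{allf}.

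The paper's construction is forced to spread the $5n/3$ inner vertices of each tree across \emph{all} the $K$-copies: in each block of three consecutive copies, each tree owns five inner vertices distributed as two--two--one (in rotated positions for the three trees), so that no copy ever contains more than two inner vertices of the same tree and no intra-copy edge is wasted. The backbone of each tree is then a Hamiltonian-like path along one fixed position of the cycle direction rather than a chain of clique-blocks. If you want to salvage your approach, the first step is to replace the ``one full copy per tree'' template by a distribution compatible with $n_j(T_i)\le 2$ for every copy and every tree; after that the verification scheme you outline (edge-disjointness, spanning, degree condition) is the right one.
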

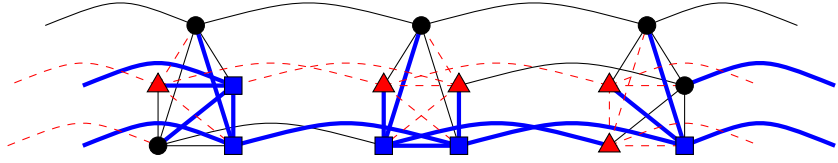
\begin{figure}[t]
\begin{center}
\begin{tikzpicture}
\draw (0.5,1.6) -- (0,0);
\draw (6.5,1.6) -- (7,0.8);
\draw (0,0) -- (1,0);
\draw (0.5,1.6) -- (1,0.8);
\draw (0,0) -- (0,0.8);
\draw (3.5,1.6) -- (3,0.8);
\draw (3.5,1.6) -- (4,0);
\draw (6.5,1.6) -- (6,0.8);
\draw (7,0.8) -- (6,0);
\draw (7,0.8) -- (7,0);
\draw (0.5,1.6) .. controls (2,2) .. (3.5,1.6);
\draw (0,0) .. controls (1.5,0.4) .. (3,0);
\draw (3.5,1.6) .. controls (5,2) .. (6.5,1.6);
\draw (6.5,1.6) .. controls (7.5,2) .. (8.5,1.6);
\draw (4,0.8) .. controls (5.5,1.2) .. (7,0.8);
\draw (0.5,1.6) .. controls (-0.5,2) .. (-1.5,1.6);
\draw[ultra thick,color=blue] (1,0) -- (1,0.8);
\draw[ultra thick,color=blue] (4,0) -- (3,0);
\draw[ultra thick,color=blue] (3,0) -- (3.5,1.6);
\draw[ultra thick,color=blue] (4,0) -- (4,0.8);
\draw[ultra thick,color=blue] (3,0) -- (3,0.8);
\draw[ultra thick,color=blue] (0,0) -- (1,0.8);
\draw[ultra thick,color=blue] (0.5,1.6) -- (1,0);
\draw[ultra thick,color=blue] (0,0.8) -- (1,0.8);
\draw[ultra thick,color=blue] (7,0) -- (6,0.8);
\draw[ultra thick,color=blue] (7,0) -- (6.5,1.6);
\draw[ultra thick,color=blue] (1,0) .. controls (0,0.4) .. (-1,0);
\draw[ultra thick,color=blue] (1,0) .. controls (2.5,0.4) .. (4,0);
\draw[ultra thick,color=blue] (4,0) .. controls (5.5,0.4) .. (7,0);
\draw[ultra thick,color=blue] (3,0) .. controls (4.5,0.4) .. (6,0);
\draw[ultra thick,color=blue] (7,0) .. controls (8,0.4) .. (9,0);
\draw[ultra thick,color=blue] (1,0.8) .. controls (0,1.2) .. (-1,0.8);
\draw[ultra thick,color=blue] (7,0.8) .. controls (8,1.2) .. (9,0.8);
\draw[dashed, draw=red] (0,0.8) -- (0.5,1.6);
\draw[dashed, draw=red] (0,0.8) -- (1,0);
\draw[dashed, draw=red] (3,0.8) -- (4,0.8);
\draw[dashed, draw=red] (4,0.8) -- (3,0);
\draw[dashed, draw=red] (3,0.8) -- (4,0);
\draw[dashed, draw=red] (3.5,1.6) -- (4,0.8);
\draw[dashed, draw=red] (6,0.8) -- (6,0);
\draw[dashed, draw=red] (6,0.8) -- (7,0.8);
\draw[dashed, draw=red] (6,0) -- (6.5,1.6);
\draw[dashed, draw=red] (6,0) -- (7,0);
\draw[dashed, draw=red] (0,0.8) .. controls (-1,1.2) .. (-2,0.8);
\draw[dashed, draw=red] (0,0.8) .. controls (1.5,1.2) .. (3,0.8);
\draw[dashed, draw=red] (1,0.8) .. controls (2.5,1.2) .. (4,0.8);
\draw[dashed, draw=red] (3,0.8) .. controls (4.5,1.2) .. (6,0.8);
\draw[dashed, draw=red] (6,0.8) .. controls (7,1.2) .. (8,0.8);
\draw[dashed, draw=red] (6,0) .. controls (7,0.4) .. (8,0);
\draw[dashed, draw=red] (-2,0) .. controls (-1,0.4) .. (0,0);
\node at (1,0) [regular polygon, regular polygon sides=4,draw=black,fill=blue,scale=0.7]{};
\node at (1,0.8) [regular polygon, regular polygon sides=4,draw=black,fill=blue,scale=0.7]{};
\node at (0,0) [circle,draw=black,fill=black,scale=0.7]{};
\node at (0,0.8) [regular polygon, regular polygon sides=3,draw=black,fill=red,scale=0.5]{};
\node at (0.5,1.6) [circle,draw=black,fill=black,scale=0.7]{};
\node at (0+3,0) [regular polygon, regular polygon sides=4,draw=black,fill=blue,scale=0.7]{};
\node at (1+3,0) [regular polygon, regular polygon sides=4,draw=black,fill=blue,scale=0.7]{};
\node at (0+3,0.8) [regular polygon, regular polygon sides=3,draw=black,fill=red,scale=0.5]{};
\node at (1+3,0.8) [regular polygon, regular polygon sides=3,draw=black,fill=red,scale=0.5]{};
\node at (0.5+3,1.6) [circle,draw=black,fill=black,scale=0.7]{};
\node at (0+6,0) [regular polygon, regular polygon sides=3,draw=black,fill=red,scale=0.5]{};
\node at (1+6,0) [regular polygon, regular polygon sides=4,draw=black,fill=blue,scale=0.7]{};
\node at (0+6,0.8) [regular polygon, regular polygon sides=3,draw=black,fill=red,scale=0.5]{};
\node at (1+6,0.8) [circle,draw=black,fill=black,scale=0.7]{};
\node at (0.5+6,1.6) [circle,draw=black,fill=black,scale=0.7]{};
\end{tikzpicture}
\end{center}
\caption{A pattern to have three completely independent spanning trees in $K_{5}\square C_n$, for $n\equiv 0\pmod 3$.}
\label{K53}
\end{figure}
\begin{proof}
We construct three completely independent spanning trees $T_1$, $T_2$ and $T_3$ using repeatedly the pattern illustrated in Figure~\ref{K53} on each three consecutive $K$-copies:\newline
$E(T_1)=\{u^{3j}_{0} u^{1+3j}_{0},u^{1+3j}_{0} u^{2+3j}_{0}, u^{2+3j}_{0} u^{3+3j}_{0},
u^{3j}_{0} u^{3j}_{2}, u^{3j}_{0}  u^{3j}_{3},$\newline
$u^{3j}_{3}  u^{3j}_{1}, u^{3j}_{3} u^{3j}_{4},
u^{3j}_{3} u^{1+3j}_{3}, u^{1+3j}_{0} u^{1+3j}_{1}, u^{1+3j}_{0} u^{1+3j}_{4},  u^{1+3j}_{2} u^{2+3j}_{2},$\newline
$u^{2+3j}_{0} u^{2+3j}_{2}, u^{2+3j}_{0} u^{2+3j}_{1}, u^{2+3j}_{2} u^{2+3j}_{3}, u^{2+3j}_{2} u^{2+3j}_{4}|j\in\{0,\ldots,n/3-1\}\}-\{u^0_0,u^1_0\}$; \newline
$E(T_2)=\{u^{3j}_{1} u^{1+3j}_{1},u^{1+3j}_{1} u^{2+3j}_{1}, u^{2+3j}_{1} u^{3+3j}_{1},
u^{3j}_{1} u^{3j}_{0}, u^{3j}_{1}  u^{3j}_{4},$\newline
$ u^{3j}_{2} u^{1+3j}_{2}, u^{1+3j}_{1} u^{1+3j}_{2},
 u^{1+3j}_{1} u^{1+3j}_{4}, u^{1+3j}_{2} u^{1+3j}_{0},  u^{1+3j}_{2} u^{1+3j}_{3},u^{2+3j}_{1} u^{2+3j}_{3},$\newline
$ u^{2+3j}_{1} u^{2+3j}_{2}, u^{2+3j}_{3} u^{2+3j}_{0}, u^{2+3j}_{3} u^{2+3j}_{4}, u^{2+3j}_{3} u^{3+3j}_{3}|j\in\{0,\ldots,n/3-1\}\}-\{u^0_1,u^1_1\}$;\newline
$E(T_3)=\{u^{3j}_{4} u^{1+3j}_{4},u^{1+3j}_{4} u^{2+3j}_{4}, u^{2+3j}_{4} u^{3+3j}_{4},
u^{3j}_{2} u^{3j}_{4}, u^{3j}_{2}  u^{3j}_{1},$\newline
$u^{3j}_{2}  u^{3j}_{3}, u^{3j}_{4} u^{3j}_{0},
u^{1+3j}_{3} u^{1+3j}_{4}, u^{1+3j}_{3} u^{1+3j}_{0}, u^{1+3j}_{3} u^{1+3j}_{1},  u^{1+3j}_{4} u^{1+3j}_{2},$\newline
$u^{1+3j}_{3} u^{2+3j}_{3}, u^{2+3j}_{4} u^{2+3j}_{0}, u^{2+3j}_{4} u^{2+3j}_{1}, u^{2+3j}_{2} u^{3+3j}_{2}|j\in\{0,\ldots,n/3-1\}\}-\{u^0_4,u^1_4\}$.
\end{proof}
\begin{prop}
Let $n\ge 3$ be an integer. There exist three completely independent spanning trees in $K_{5}\square C_n$.
\label{propk5n}
\end{prop}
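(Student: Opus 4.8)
The plan is to reduce to two residue classes and then assemble the trees from a few interchangeable ``blocks''. Here $K_5\square C_n$ is $6$-regular, so $r=3$ and we must produce three completely independent spanning trees. The case $n\equiv 0\pmod 3$ is exactly Proposition~\ref{propk53}, so it remains to treat $n\equiv 1\pmod 3$ (whence $n\ge 4$) and $n\equiv 2\pmod 3$ (whence $n\ge 5$). I would fix the ``standard interface'' between two consecutive $K$-copies that already appears in the proof of Proposition~\ref{propk53}: all five cycle-edges $u_i^{j}u_i^{j+1}$ are present, the one at $i=0$ in $T_1$, those at $i\in\{1,3\}$ in $T_2$, those at $i\in\{2,4\}$ in $T_3$, together with a fixed record of which single tree each vertex of the boundary $K$-copy is an inner vertex of. Inspection of Proposition~\ref{propk53} shows that its periodic three-copy block $B_3$ meets this interface on both sides, and that the three lost edges of the whole construction sit in a single ``initial'' block $B_3^-$ obtained from $B_3$ by deleting three edges (one spine edge per tree).

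Next I would exhibit, by solving the ILP, two exceptional blocks: $B_4$ on four consecutive $K$-copies and $B_5$ on five consecutive ones, each again presenting the standard interface on both its left and its right side. Granting such $B_4$, $B_5$, for $n\equiv 1\pmod 3$ write $n=4+3q$ ($q\ge 0$) and place $B_4$ on one group of four consecutive copies, one $B_3^-$ on a group of three (or, if $q=0$, absorb the three deletions into $B_4$ itself), and $q-1$ translated copies of $B_3$ on the remaining groups; for $n\equiv 2\pmod 3$ do the same with $B_5$ in place of $B_4$ and $n=5+3q$. Since adjacent blocks agree on their shared boundary, the union over all blocks of the $T_i$-edges is a well-defined spanning subgraph $T_i$ of $K_5\square C_n$, for $i=1,2,3$.

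Three verifications remain. First, each $T_i$ is a spanning tree: I would run the obvious induction on the number of blocks, observing that the standard interface is arranged so that gluing one more block keeps each $T_i$ a forest (its $T_i$-part hooks onto the rest through the one or two $T_i$-cross-edges without closing a cycle) and that closing the cycle, together with the three deletions, leaves each $T_i$ connected with exactly $5n-1$ edges, hence a tree. Second, $T_1,T_2,T_3$ are pairwise edge-disjoint, which is immediate from the block-wise edge assignment once edge-disjointness has been checked inside $B_3$, $B_4$, $B_5$ individually. Third, by Theorem~\ref{allf} it suffices that every vertex has degree $>1$ in at most one $T_i$; this is verified by inspection inside each block, while at a shared boundary it holds because each vertex there is an inner vertex of only the tree prescribed by the standard interface, the cross-edges merely increasing degree in that same tree (or keeping a vertex a leaf).

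The hard part is constructing $B_4$ and $B_5$ so that they carry \emph{precisely} the standard interface on both sides, which is what makes them freely interchangeable with $B_3$; forcing the internal tree structure of a block to be compatible with a prescribed boundary is the delicate point, and the rest is bookkeeping plus the short induction above. If one wants to minimise the number of ad hoc blocks, the class $n\equiv 2\pmod 3$ with $n\ge 8$ can instead be handled with two $B_4$ blocks (plus copies of $B_3$), leaving only $n=5$ as a true base case there; I would pick whichever variant gives the tidiest explicit edge sets.
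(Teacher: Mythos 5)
Your proposal is essentially the paper's own proof: the authors handle $n\equiv 0\pmod 3$ by Proposition~\ref{propk53} and, for $n\equiv 1$ and $n\equiv 2\pmod 3$, splice a single exceptional four-copy (resp.\ five-copy) block --- found by computer via an ILP formulation and meeting the periodic three-copy pattern's interface on both sides --- into a chain of copies of that pattern. The one ingredient you leave unexhibited, the explicit blocks $B_4$ and $B_5$, is exactly what the paper supplies in Figures~\ref{K54} and~\ref{K55} and Appendices~\ref{dek54} and~\ref{dek55}.
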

\begin{figure}[t]
\begin{center}
\begin{tikzpicture}
\draw (0.5,1.6) -- (0,0);
\draw (0.5,1.6) -- (1,0.8);
\draw (0,0) -- (1,0);
\draw (0,0) -- (0,0.8);
\draw (3,0) -- (3,0.8);
\draw (3,0) -- (4,0);
\draw (7,0.8) -- (6,0.8);
\draw (7,0) -- (7,0.8);
\draw (9.5,1.6) -- (10,0.8);
\draw (9.5,1.6) -- (9,0);
\draw (9,0.8) -- (10,0.8);
\draw (10,0) -- (10,0.8);
\draw (0.5,1.6) .. controls (2,2) .. (3.5,1.6);
\draw (0,0) .. controls (1.5,0.4) .. (3,0);
\draw (3,0) .. controls (4.5,0.4) .. (6,0);
\draw (4,0.8) .. controls (5.5,1.2) .. (7,0.8);
\draw (7,0.8) .. controls (8.5,1.2) .. (10,0.8);
\draw (6.5,1.6) .. controls (8,2) .. (9.5,1.6);
\draw (-1.5,1.6) .. controls (-0.5,2) .. (0.5,1.6);
\draw (9.5,1.6) .. controls (10.5,2) .. (11.5,1.6);
\draw[ultra thick,color=blue] (1,0) -- (1,0.8);
\draw[ultra thick,color=blue] (1,0.8) -- (0,0);
\draw[ultra thick,color=blue] (1,0) -- (0.5,1.6);
\draw[ultra thick,color=blue] (1,0) -- (0,0.8);
\draw[ultra thick,color=blue] (3,0.8) -- (4,0.8);
\draw[ultra thick,color=blue] (4,0) -- (4,0.8);
\draw[ultra thick,color=blue] (3,0.8) -- (3.5,1.6);
\draw[ultra thick,color=blue] (3,0) -- (4,0.8);
\draw[ultra thick,color=blue] (7,0) -- (6,0.8);
\draw[ultra thick,color=blue] (6.5,1.6) -- (7,0.8);
\draw[ultra thick,color=blue] (6,0) -- (6.5,1.6);
\draw[ultra thick,color=blue] (6.5,1.6) -- (6,0.8);
\draw[ultra thick,color=blue] (9.5,1.6) -- (10,0);
\draw[ultra thick,color=blue] (9,0) -- (10,0);
\draw[ultra thick,color=blue] (1,0) .. controls (0,0.4) .. (-1,0);
\draw[ultra thick,color=blue] (1,0.8) .. controls (2.5,1.2) .. (4,0.8);
\draw[ultra thick,color=blue] (3,0.8) .. controls (4.5,1.2) .. (6,0.8);
\draw[ultra thick,color=blue] (6,0.8) .. controls (7.5,1.2) .. (9,0.8);
\draw[ultra thick,color=blue] (1,0.8) .. controls (0,1.2) .. (-1,0.8);
\draw[ultra thick,color=blue] (1,0) .. controls (0,0.4) .. (-1,0);
\draw[ultra thick,color=blue] (10,0.8) .. controls (11,1.2) .. (12,0.8);
\draw[ultra thick,color=blue] (10,0) .. controls (11,0.4) .. (12,0);
\draw[dashed, draw=red] (0,0.8) -- (0.5,1.6);
\draw[dashed, draw=red] (0,0.8) -- (1,0.8);
\draw[dashed, draw=red] (3,0.8) -- (4,0);
\draw[dashed, draw=red] (3.5,1.6) -- (4,0);
\draw[dashed, draw=red] (3.5,1.6) -- (4,0.8);
\draw[dashed, draw=red] (3.5,1.6) -- (3,0);
\draw[dashed, draw=red] (7,0) -- (6.5,1.6);
\draw[dashed, draw=red] (6,0) -- (7,0);
\draw[dashed, draw=red] (6,0) -- (7,0.8);
\draw[dashed, draw=red] (6,0) -- (6,0.8);
\draw[dashed, draw=red] (9,0) -- (9,0.8);
\draw[dashed, draw=red] (9,0.8) -- (10,0);
\draw[dashed, draw=red] (9,0.8) -- (9.5,1.6);
\draw[dashed, draw=red] (9,0) -- (10,0.8);
\draw[dashed, draw=red] (10,0) -- (10,0.8);
\draw[dashed, draw=red] (1,0) .. controls (2.5,0.4) .. (4,0);
\draw[dashed, draw=red] (4,0) .. controls (5.5,0.4) .. (7,0);
\draw[dashed, draw=red] (6,0) .. controls (7.5,0.4) .. (9,0);
\draw[dashed, draw=red] (-2,0) .. controls (-1,0.4) .. (0,0);
\draw[dashed, draw=red] (-2,0.8) .. controls (-1,1.2) .. (0,0.8);
\draw[dashed, draw=red] (9,0) .. controls (10,0.4) .. (11,0);
\draw[dashed, draw=red] (9,0.8) .. controls (10,1.2) .. (11,0.8);
\node at (1,0) [regular polygon, regular polygon sides=4,draw=black,fill=blue,scale=0.7]{};
\node at (1,0.8) [regular polygon, regular polygon sides=4,draw=black,fill=blue,scale=0.7]{};
\node at (0,0) [circle,draw=black,fill=black,scale=0.7]{};
\node at (0,0.8) [regular polygon, regular polygon sides=3,draw=black,fill=red,scale=0.5]{};
\node at (0.5,1.6) [circle,draw=black,fill=black,scale=0.7]{};
\node at (0+3,0) [circle,draw=black,fill=black,scale=0.7]{};
\node at (1+3,0) [regular polygon, regular polygon sides=3,draw=black,fill=red,scale=0.5]{};
\node at (0+3,0.8) [regular polygon, regular polygon sides=4,draw=black,fill=blue,scale=0.7]{};
\node at (1+3,0.8) [regular polygon, regular polygon sides=4,draw=black,fill=blue,scale=0.7]{};
\node at (0.5+3,1.6) [regular polygon, regular polygon sides=3,draw=black,fill=red,scale=0.5]{};
\node at (0+6,0) [regular polygon, regular polygon sides=3,draw=black,fill=red,scale=0.5]{};
\node at (1+6,0) [regular polygon, regular polygon sides=3,draw=black,fill=red,scale=0.5]{};
\node at (0+6,0.8) [regular polygon, regular polygon sides=4,draw=black,fill=blue,scale=0.7]{};
\node at (1+6,0.8) [circle,draw=black,fill=black,scale=0.7]{};
\node at (0.5+6,1.6) [regular polygon, regular polygon sides=4,draw=black,fill=blue,scale=0.7]{};
\node at (0+9,0) [regular polygon, regular polygon sides=3,draw=black,fill=red,scale=0.5]{};
\node at (1+9,0) [regular polygon, regular polygon sides=4,draw=black,fill=blue,scale=0.7]{};
\node at (0+9,0.8) [regular polygon, regular polygon sides=3,draw=black,fill=red,scale=0.5]{};
\node at (1+9,0.8) [circle,draw=black,fill=black,scale=0.7]{};
\node at (0.5+9,1.6) [circle,draw=black,fill=black,scale=0.7]{};
\end{tikzpicture}
\end{center}
\caption{The three completely independent spanning trees in $K_{5}\square C_n$, for $K^0\cup K^1\cup K^2\cup K^3$ and $n\equiv 1\pmod 3$.}
\label{K54}
\end{figure}

\begin{figure}[t]
\begin{center}
\begin{tikzpicture}
\draw (0.5,1.6) -- (0,0);
\draw (0.5,1.6) -- (1,0.8);
\draw (0,0) -- (1,0);
\draw (0,0) -- (0,0.8);
\draw (2,0) -- (2.5,1.6);
\draw (3,0) -- (2.5,1.6);
\draw (3,0.8) -- (2.5,1.6);
\draw (2,0) -- (2,0.8);
\draw (4,0.8) -- (5,0.8);
\draw (4.5,1.6) -- (5,0.8);
\draw (7,0) -- (7,0.8);
\draw (7,0) -- (6,0.8);
\draw (7,0) -- (6,0);
\draw (7,0.8) -- (6.5,1.6);
\draw (9,0.8) -- (9,0);
\draw (9,0.8) -- (8.5,1.6);
\draw (8,0.8) -- (8.5,1.6);
\draw (8,0) -- (9,0.8);
\draw (0.5,1.6) .. controls (1.5,2) .. (2.5,1.6);
\draw (2,0) .. controls (3,0.4) .. (4,0);
\draw (5,0.8) .. controls (6,1.2) .. (7,0.8);
\draw (5,0) .. controls (6,0.4) .. (7,0);
\draw (7,0.8) .. controls (8,1.2) .. (9,0.8);
\draw (-1.5,1.6) .. controls (-0.5,2) .. (0.5,1.6);
\draw (8.5,1.6) .. controls (9.5,2) .. (10.5,1.6);
\draw[ultra thick,color=blue] (1,0) -- (1,0.8);
\draw[ultra thick,color=blue] (1,0.8) -- (0,0);
\draw[ultra thick,color=blue] (1,0) -- (0.5,1.6);
\draw[ultra thick,color=blue] (1,0) -- (0,0.8);
\draw[ultra thick,color=blue] (2,0.8) -- (3,0.8);
\draw[ultra thick,color=blue] (2,0.8) -- (2.5,1.6);
\draw[ultra thick,color=blue] (2,0.8) -- (3,0);
\draw[ultra thick,color=blue] (3,0.8) -- (2,0);
\draw[ultra thick,color=blue] (4,0) -- (4,0.8);
\draw[ultra thick,color=blue] (4,0.8) -- (4.5,1.6);
\draw[ultra thick,color=blue] (4,0) -- (5,0);
\draw[ultra thick,color=blue] (4,0) -- (5,0.8);
\draw[ultra thick,color=blue] (6,0.8) -- (7,0.8);
\draw[ultra thick,color=blue] (6,0.8) -- (6.5,1.6);
\draw[ultra thick,color=blue] (9,0) -- (8.5,1.6);
\draw[ultra thick,color=blue] (8,0) -- (9,0);
\draw[ultra thick,color=blue] (1,0) .. controls (0,0.4) .. (-1,0);
\draw[ultra thick,color=blue] (1,0.8) .. controls (2,1.2) .. (3,0.8);
\draw[ultra thick,color=blue] (2,0.8) .. controls (3,1.2) .. (4,0.8);
\draw[ultra thick,color=blue] (4,0.8) .. controls (5,1.2) .. (6,0.8);
\draw[ultra thick,color=blue] (4,0) .. controls (5,0.4) .. (6,0);
\draw[ultra thick,color=blue] (6,0.8) .. controls (7,1.2) .. (8,0.8);
\draw[ultra thick,color=blue] (7,0) .. controls (8,0.4) .. (9,0);
\draw[ultra thick,color=blue] (1,0.8) .. controls (0,1.2) .. (-1,0.8);
\draw[ultra thick,color=blue] (9,0.8) .. controls (10,1.2) .. (11,0.8);
\draw[ultra thick,color=blue] (9,0) .. controls (10,0.4) .. (11,0);
\draw[dashed, draw=red] (0,0.8) -- (0.5,1.6);
\draw[dashed, draw=red] (0,0.8) -- (1,0.8);
\draw[dashed, draw=red] (3,0) -- (3,0.8);
\draw[dashed, draw=red] (2,0) -- (3,0);
\draw[dashed, draw=red] (5,0) -- (4.5,1.6);
\draw[dashed, draw=red] (5,0) -- (5,0.8);
\draw[dashed, draw=red] (5,0) -- (4,0.8);
\draw[dashed, draw=red] (4,0) -- (4.5,1.6);
\draw[dashed, draw=red] (6,0) -- (6.5,1.6);
\draw[dashed, draw=red] (6,0) -- (6,0.8);
\draw[dashed, draw=red] (6,0) -- (7,0.8);
\draw[dashed, draw=red] (7,0) -- (6.5,1.6);
\draw[dashed, draw=red] (8,0) -- (8,0.8);
\draw[dashed, draw=red] (9,0) -- (8,0.8);
\draw[dashed, draw=red] (9,0.8) -- (8,0.8);
\draw[dashed, draw=red] (8,0) -- (8.5,1.6);
\draw[dashed, draw=red] (0,0.8) .. controls (1,1.2) .. (2,0.8);
\draw[dashed, draw=red] (1,0) .. controls (2,0.4) .. (3,0);
\draw[dashed, draw=red] (3,0) .. controls (4,0.4) .. (5,0);
\draw[dashed, draw=red] (2.5,1.6) .. controls (3.5,2) .. (4.5,1.6);
\draw[dashed, draw=red] (4.5,1.6) .. controls (5.5,2) .. (6.5,1.6);
\draw[dashed, draw=red] (6,0) .. controls (7,0.4) .. (8,0);
\draw[dashed, draw=red] (-2,0) .. controls (-1,0.4) .. (0,0);
\draw[dashed, draw=red] (-2,0.8) .. controls (-1,1.2) .. (0,0.8);
\draw[dashed, draw=red] (8,0) .. controls (9,0.4) .. (10,0);
\draw[dashed, draw=red] (8,0.8) .. controls (9,1.2) .. (10,0.8);
\node at (1,0) [regular polygon, regular polygon sides=4,draw=black,fill=blue,scale=0.7]{};
\node at (1,0.8) [regular polygon, regular polygon sides=4,draw=black,fill=blue,scale=0.7]{};
\node at (0,0) [circle,draw=black,fill=black,scale=0.7]{};
\node at (0,0.8) [regular polygon, regular polygon sides=3,draw=black,fill=red,scale=0.5]{};
\node at (0.5,1.6) [circle,draw=black,fill=black,scale=0.7]{};
\node at (0+2,0) [circle,draw=black,fill=black,scale=0.7]{};
\node at (1+2,0) [regular polygon, regular polygon sides=3,draw=black,fill=red,scale=0.5]{};
\node at (0+2,0.8) [regular polygon, regular polygon sides=4,draw=black,fill=blue,scale=0.7]{};
\node at (1+2,0.8) [regular polygon, regular polygon sides=4,draw=black,fill=blue,scale=0.7]{};
\node at (0.5+2,1.6) [circle,draw=black,fill=black,scale=0.7]{};
\node at (0+4,0) [regular polygon, regular polygon sides=4,draw=black,fill=blue,scale=0.7]{};
\node at (1+4,0) [regular polygon, regular polygon sides=3,draw=black,fill=red,scale=0.5]{};
\node at (0+4,0.8) [regular polygon, regular polygon sides=4,draw=black,fill=blue,scale=0.7]{};
\node at (1+4,0.8) [circle,draw=black,fill=black,scale=0.7]{};
\node at (0.5+4,1.6) [regular polygon, regular polygon sides=3,draw=black,fill=red,scale=0.5]{};
\node at (0+6,0) [regular polygon, regular polygon sides=3,draw=black,fill=red,scale=0.5]{};
\node at (1+6,0) [circle,draw=black,fill=black,scale=0.7]{};
\node at (0+6,0.8) [regular polygon, regular polygon sides=4,draw=black,fill=blue,scale=0.7]{};
\node at (1+6,0.8) [circle,draw=black,fill=black,scale=0.7]{};
\node at (0.5+6,1.6) [regular polygon, regular polygon sides=3,draw=black,fill=red,scale=0.5]{};
\node at (0+8,0) [regular polygon, regular polygon sides=3,draw=black,fill=red,scale=0.5]{};
\node at (1+8,0) [regular polygon, regular polygon sides=4,draw=black,fill=blue,scale=0.7]{};
\node at (0+8,0.8) [regular polygon, regular polygon sides=3,draw=black,fill=red,scale=0.5]{};
\node at (1+8,0.8) [circle,draw=black,fill=black,scale=0.7]{};
\node at (0.5+8,1.6) [circle,draw=black,fill=black,scale=0.7]{};
\end{tikzpicture}
\end{center}
\caption{The three completely independent spanning trees in $K_{5}\square C_n$, for $K^0\cup K^1\cup K^2\cup K^3\cup K^4$ and for $n\equiv 2\pmod 3$.}
\label{K55}
\end{figure}
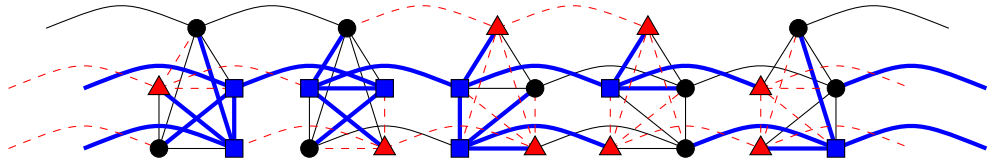
\begin{proof}
By Proposition \ref{propk53}, there exist three completely independent spanning trees in $K_{5}\square C_n$, for $n\equiv0\pmod{3}$.
For $n\equiv1\pmod{3}$, we use the pattern from Proposition \ref{propk53} for $K^{4}\cup \ldots \cup K^{n-1}$, completed by the pieces of three completely independent spanning trees of $K^0\cup K^1\cup K^2\cup K^3$ depicted in Figure~\ref{K54} and whose edge sets are given in Appendix \ref{dek54}.
For $n\equiv2\pmod{3}$, we use the pattern from Proposition \ref{propk53} for $K^{5}\cup \ldots \cup K^{n-1}$, completed by the pieces of three completely independent spanning trees of $K^0\cup K^1\cup K^2\cup K^3\cup K^4$ depicted in Figure~\ref{K55} and whose edge sets are given in Appendix \ref{dek55}.
Note that Figures~\ref{K54} and~\ref{K55} depicte also three completely independent spanning trees in $K_{5}\square C_4$ and $K_{5}\square C_5$.
\end{proof}
\begin{prop}
\label{propk73}
There exist four completely independent spanning trees in $K_{7}\square C_3$.
\end{prop}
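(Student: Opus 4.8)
The statement is purely constructive, so the plan is to exhibit four explicit spanning trees $T_1,T_2,T_3,T_4$ of $K_7\square C_3$ and to verify the hypotheses of Theorem~\ref{allf}: that they are pairwise edge-disjoint and that every vertex is an inner vertex of at most one of them. It is worth first recording what Section~\ref{sec:necessaryConds} forces here. The graph has order $21$ and is $8$-regular, so $r=4$, and Proposition~\ref{pbound} gives $5\le|\inn(T_i)|\le 6$ for every $i$; since $\sum_{i=1}^{4}|\inn(T_i)|\le 21$ by Theorem~\ref{allf}, the multiset of these four numbers is either $\{5,5,5,5\}$ or $\{6,5,5,5\}$. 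For a tree with five inner vertices one has $\text{ped}=1$, so four of its inner vertices have degree $r+1=5$ and the fifth has degree $4$. This leaves almost no slack, which is precisely why the construction is delicate.

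Given this, the first step is to fix a partition of $V(K_7\square C_3)$ into sets $I_1,\dots,I_4$ with $|I_t|\in\{5,6\}$ of total size $21$, where $I_t$ will be the set of inner vertices of $T_t$. The partition must be chosen so that, for every $t$, each of the $21$ vertices has a neighbour in $I_t$ (forced by Observation~\ref{FE20}); in particular each $I_t$ must meet all three $K$-copies and be spread out inside them. The second step is, for each $t$, to build $T_t$ by first choosing a connected subgraph on $I_t$ spanning the three $K$-copies --- a few edges inside the $K$-copies together with one or two of the inter-copy edges joining vertices of $I_t$ in consecutive copies --- and then attaching every vertex outside $I_t$ as a leaf to a vertex of $I_t$ lying in the same $K$-copy, all the while keeping the four edge sets pairwise disjoint and respecting the degree bound $d_{T_t}(v)\le r+1=5$. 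Each $T_t$ so obtained is connected on all $21$ vertices and has $20$ edges, hence is a spanning tree, and by construction $d_{T_t}(v)>1$ only for $v\in I_t$; since the $I_t$ are pairwise disjoint, Theorem~\ref{allf} then gives that $T_1,\dots,T_4$ are completely independent.

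The genuine difficulty is the combinatorial search for a partition together with a leaf-assignment that realises all four edge-disjoint trees at once under these tight adjacency and degree constraints; the margin is thin, since Propositions~\ref{krgrand}--\ref{propk93} show that the analogous configuration fails for $K_9\square C_3$ and for $K_{2r-1}\square C_n$ whenever $r\ge 6$, or $r\in\{4,5\}$ and $n\ge r+1$. I would therefore obtain the explicit edge sets with the help of an ILP solver, as for the other positive results of this section, present them in a figure with the accompanying edge lists in an appendix, and reduce the proof to the routine verification that $E(T_1)\cup\cdots\cup E(T_4)$ has no repeated edge, that each $T_t$ is a connected spanning subgraph on $21$ vertices, and that no vertex is an inner vertex of two different trees.
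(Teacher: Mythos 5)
Your plan is exactly the paper's approach: the proof of this proposition consists solely of exhibiting four explicit trees (found by an ILP solver, as the paper states), drawn in Figure~\ref{K73} with edge sets listed in Appendix~\ref{dek73}, and the correctness criterion is precisely the one you cite from Theorem~\ref{allf}; your preliminary counting (each $|\inn(T_i)|\in\{5,6\}$, $\text{ped}=1$ forcing degree sequence $5,5,5,5,4$ on the inner vertices when $|\inn(T_i)|=5$) is also accurate. The only thing separating your proposal from a complete proof is that the explicit edge sets are never produced --- for a purely constructive existence statement the exhibited object \emph{is} the proof, so the argument remains a (correct) blueprint until the $4\times 20$ edges are actually written down and checked.
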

\begin{figure}[t]
\begin{center}
\begin{tikzpicture}
\draw (0.5,2.4) -- (0,1.6);
\draw (0.5,2.4) -- (0,0.8);
\draw (0.5,2.4) -- (0,0);
\draw (0.5,2.4) -- (1,0);
\draw (3.5,2.4) -- (4,1.6);
\draw (3.5,2.4) -- (4,0.8);
\draw (3.5,2.4) -- (3,0);
\draw (4,1.6) -- (3,1.6);
\draw (4,1.6) -- (3,0.8);
\draw (4,1.6) .. controls (4.2,0.8) .. (4,0);
\draw (7,1.6) -- (6,0);
\draw (7,1.6) .. controls (7.2,0.8) .. (7,0);
\draw (7,1.6) -- (7,0.8);
\draw (7,0.8) -- (6,0.8);
\draw (7,0.8) -- (6,1.6);
\draw (6.5,2.4) -- (7,0.8);
\draw (0.5,2.4) .. controls (2,2.8) .. (3.5,2.4);
\draw (4,1.6) .. controls (5.5,2) .. (7,1.6);
\draw (7,0.8) .. controls (9,1.6) and (-1,1.6) .. (1,0.8);
\draw (7,1.6) .. controls (9,2.4) and (-1,2.4) .. (1,1.6);
\draw[dashed, draw=red] (0,1.6) -- (1,1.6);
\draw[dashed, draw=red] (0,1.6) -- (0,0.8);
\draw[dashed, draw=red] (1,1.6) -- (1,0.8);
\draw[dashed, draw=red] (0,1.6)  .. controls (-0.2,0.8) ..  (0,0);
\draw[dashed, draw=red] (1,1.6)  .. controls (1.2,0.8) ..  (1,0);
\draw[dashed, draw=red] (0.5,2.4) -- (1,1.6);
\draw[dashed, draw=red] (4,0) -- (3,0);
\draw[dashed, draw=red] (4,0) -- (3,0.8);
\draw[dashed, draw=red] (4,0) -- (4,0.8);
\draw[dashed, draw=red] (4,0) -- (3.5,2.4);
\draw[dashed, draw=red] (7,0) -- (7,0.8);
\draw[dashed, draw=red] (7,0) -- (6,0);
\draw[dashed, draw=red] (7,0) -- (6,1.6);
\draw[dashed, draw=red] (6,0.8) -- (6,1.6);
\draw[dashed, draw=red] (7,1.6) -- (6,1.6);
\draw[dashed, draw=red] (6.5,2.4) -- (6,1.6);
\draw[dashed, draw=red] (0,1.6) .. controls (1.5,2) .. (3,1.6);
\draw[dashed, draw=red] (1,1.6) .. controls (2.5,2) .. (4,1.6);
\draw[dashed, draw=red] (4,0) .. controls (5.5,0.4) .. (7,0);
\draw[dashed, draw=red] (6,1.6) .. controls (8,2.4) and (-2,2.4) .. (0,1.6);
\draw[ultra thick,color=blue] (0,0.8) -- (1,0.8);
\draw[ultra thick,color=blue] (0,0.8) -- (0,0);
\draw[ultra thick,color=blue] (1,0.8) -- (1,0);
\draw[ultra thick,color=blue] (1,0.8) -- (0,1.6);
\draw[ultra thick,color=blue] (0,0.8) -- (1,1.6);
\draw[ultra thick,color=blue] (1,0.8) -- (0.5,2.4);
\draw[ultra thick,color=blue] (4,0.8) -- (3,0);
\draw[ultra thick,color=blue] (4,0.8) -- (4,1.6);
\draw[ultra thick,color=blue] (4,0.8) -- (3,1.6);
\draw[ultra thick,color=blue] (3,1.6) -- (3.5,2.4);
\draw[ultra thick,color=blue] (3,1.6) -- (3,0.8);
\draw[ultra thick,color=blue] (3,1.6) -- (4,0);
\draw[ultra thick,color=blue] (6,0.8) -- (6,0);
\draw[ultra thick,color=blue] (6,0.8) -- (7,0);
\draw[ultra thick,color=blue] (6,0.8) -- (7,1.6);
\draw[ultra thick,color=blue] (6,0.8) -- (6.5,2.4);
\draw[ultra thick,color=blue] (1,0.8) .. controls (2.5,1.2) .. (4,0.8);
\draw[ultra thick,color=blue] (4,0.8) .. controls (5.5,1.2) .. (7,0.8);
\draw[ultra thick,color=blue] (3,1.6) .. controls (4.5,2) .. (6,1.6);
\draw[ultra thick,color=blue] (6,0.8) .. controls (8,1.6) and (-2,1.6) .. (0,0.8);
\draw[dotted, thick,color=green] (0,0) -- (1,0);
\draw[dotted, thick,color=green] (0,0) -- (1,0.8);
\draw[dotted, thick,color=green] (1,0) -- (0,0.8);
\draw[dotted, thick,color=green] (0,0) -- (1,1.6);
\draw[dotted, thick,color=green] (1,0) -- (0,1.6);
\draw[dotted, thick,color=green] (3,0) -- (3,0.8);
\draw[dotted, thick,color=green] (3,0) -- (4,1.6);
\draw[dotted, thick,color=green] (3,0) .. controls (2.8,0.8) .. (3,1.6);
\draw[dotted, thick,color=green] (3,0.8) -- (4,0.8);
\draw[dotted, thick,color=green] (3,0.8) -- (3.5,2.4);
\draw[dotted, thick,color=green] (6,0) -- (6.5,2.4);
\draw[dotted, thick,color=green] (6,0) -- (7,0.8);
\draw[dotted, thick,color=green] (6,0) .. controls (5.8,0.8) .. (6,1.6);
\draw[dotted, thick,color=green] (6.5,2.4) -- (7,1.6);
\draw[dotted, thick,color=green] (6.5,2.4) -- (7,0);
\draw[dotted, thick,color=green] (0,0) .. controls (1.5,0.4) .. (3,0);
\draw[dotted, thick,color=green] (1,0) .. controls (2.5,0.4) .. (4,0);
\draw[dotted, thick,color=green] (3,0.8) .. controls (4.5,1.2) .. (6,0.8);
\draw[dotted, thick,color=green] (3,0) .. controls (4.5,0.4) .. (6,0);
\draw[dotted, thick,color=green] (6.5,2.4) .. controls (8.5,3.2) and (-1.5,3.2) .. (0.5,2.4);
\node at (0,0) [regular polygon, regular polygon sides=5,draw=black,fill=green,scale=0.7]{};
\node at (1,0) [regular polygon, regular polygon sides=5,draw=black,fill=green,scale=0.7]{};
\node at (0,0.8) [regular polygon, regular polygon sides=4,draw=black,fill=blue,scale=0.7]{};
\node at (1,0.8) [regular polygon, regular polygon sides=4,draw=black,fill=blue,scale=0.7]{};
\node at (0,1.6) [regular polygon, regular polygon sides=3,draw=black,fill=red,scale=0.5]{};
\node at (1,1.6) [regular polygon, regular polygon sides=3,draw=black,fill=red,scale=0.5]{};
\node at (0.5,2.4) [circle,draw=black,fill=black,scale=0.7]{};
\node at (0+3,0) [regular polygon, regular polygon sides=5,draw=black,fill=green,scale=0.7]{};
\node at (1+3,0) [regular polygon, regular polygon sides=3,draw=black,fill=red,scale=0.5]{};
\node at (0+3,0.8) [regular polygon, regular polygon sides=5,draw=black,fill=green,scale=0.7]{};
\node at (1+3,0.8) [regular polygon, regular polygon sides=4,draw=black,fill=blue,scale=0.7]{};
\node at (0+3,1.6) [regular polygon, regular polygon sides=4,draw=black,fill=blue,scale=0.7]{};
\node at (1+3,1.6) [circle,draw=black,fill=black,scale=0.7]{};
\node at (0.5+3,2.4) [circle,draw=black,fill=black,scale=0.7]{};
\node at (0+6,0) [regular polygon, regular polygon sides=5,draw=black,fill=green,scale=0.7]{};
\node at (1+6,0) [regular polygon, regular polygon sides=3,draw=black,fill=red,scale=0.5]{};
\node at (0+6,0.8) [regular polygon, regular polygon sides=4,draw=black,fill=blue,scale=0.7]{};
\node at (1+6,0.8) [circle,draw=black,fill=black,scale=0.7]{};
\node at (0+6,1.6) [regular polygon, regular polygon sides=3,draw=black,fill=red,scale=0.5]{};
\node at (1+6,1.6) [circle,draw=black,fill=black,scale=0.7]{};
\node at (0.5+6,2.4) [regular polygon, regular polygon sides=5,draw=black,fill=green,scale=0.7]{};
\end{tikzpicture}
\end{center}
\caption{Four completely independent spanning trees in $K_{7}\square C_3$.}
\label{K73}
\end{figure}
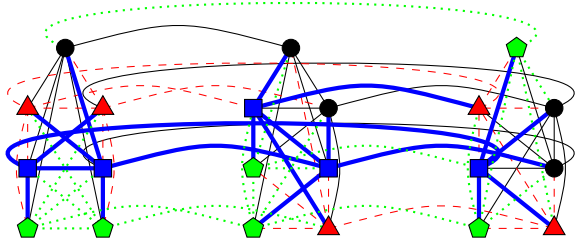
\begin{proof}
The four completely independent spanning trees in $K_{7}\square C_3$ are depicted in Figure~\ref{K73} and their edge sets are given in Appendix \ref{dek73}.
\end{proof}
\begin{prop}
\label{propk74}
There exist four completely independent spanning trees in $K_{7}\square C_4$.
\end{prop}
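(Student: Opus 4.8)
The plan is to prove Proposition~\ref{propk74} constructively, by exhibiting four completely independent spanning trees in $K_7\square C_4$ explicitly, in the same spirit as Propositions~\ref{propk53} and~\ref{propk73}. Before searching for them, I would first record the rigid structure that such a family $T_1,\dots,T_4$ is forced to have. Here $n=|V(K_7\square C_4)|=28$ and $r=4$, so Proposition~\ref{dv2} gives $\lceil 26/4\rceil=7\le|\inn(T_i)|\le 28-7\cdot 3=7$; hence each $T_i$ has exactly seven inner vertices, $\sum_i|\inn(T_i)|=28=n$, and by Theorem~\ref{allf} every vertex of $K_7\square C_4$ is an inner vertex of \emph{exactly one} tree. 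Counting degrees, $\sum_{v\in\inn(T_i)}(r+1-d_{T_i}(v))=\mathrm{ped}(T_i)=r\cdot 7-28+2=2$, so each tree either has two inner vertices of degree $4$ and five of degree $5$, or one inner vertex of degree $3$ and six of degree $5$; moreover there are exactly $|E^l|=4$ lost edges overall. By Lemma~\ref{le1} at most three inner vertices of a tree lie in a single $K$-copy and at most one $K$-copy contains three of them, and by Observation~\ref{FE20} a $K$-copy disjoint from $\inn(T_i)$ would force its two neighbouring copies to contain together at least $7$ inner vertices of $T_i$, which is impossible; so for the tree minimizing $|\inn(\cdot)|$ the distribution of inner vertices among the four $K$-copies is $(3,2,1,1)$ or $(2,2,2,1)$ up to rotation.

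Using these constraints as a guide, I would look for a solution respecting the natural $\mathbb{Z}_4$-symmetry of the cycle coordinate: prescribe the set of seven inner vertices of $T_1$ together with a spanning-tree structure on $K_7\square C_4$, and take $T_2,T_3,T_4$ to be the images of $T_1$ under the cyclic shift $u_i^j\mapsto u_i^{j+1}$. This reduces the problem to choosing, on one ``fundamental'' $K$-copy together with its two incident cycle-edges, a gadget that (i) assembles into a spanning tree of $K_7\square C_4$, (ii) is edge-disjoint from its three shifts, and (iii) is such that the inner-vertex sets of the four shifts partition $V(K_7\square C_4)$, so that no vertex is inner in two of the trees. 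If full symmetry proves too restrictive, I would relax to a computer search, encoding conditions (i)--(iii) together with pairwise edge-disjointness and the ``at most one $T_i$ with $d_{T_i}(v)>1$'' condition of Theorem~\ref{allf} as an integer linear program and solving it, exactly as the authors report doing for several of the other small cases.

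Once such trees are found, I would present them in a figure --- a single picture of $K_7\square C_4$ with the four trees drawn in four distinct styles, as in Figures~\ref{K53}--\ref{K73} --- and list the four edge sets in an appendix, and then simply verify the conditions of Theorem~\ref{allf}: that $E(T_1),\dots,E(T_4)$ are pairwise disjoint, that each $T_i$ is a spanning tree, and that every vertex has degree $>1$ in at most one $T_i$. The main obstacle is not this verification, which is routine, but the search itself: because the forced structure above is so tight (every vertex inner in exactly one tree, exactly seven inner vertices per tree, exactly four lost edges, and only two admissible degree profiles), the feasible region is small and no valid configuration is apparent by inspection; exploiting the $\mathbb{Z}_4$-action to cut the search down to a single gadget, or falling back on the ILP, is what makes the construction tractable. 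Together with Propositions~\ref{krgrand}, \ref{propk45}, \ref{propk93} and~\ref{propk73}, this proposition completes the determination of the maximum number of completely independent spanning trees in $K_7\square C_n$: it equals $4$ precisely when $n\in\{3,4\}$, and is strictly smaller (in particular, fails to reach $r=4$) for all $n\ge 5$.
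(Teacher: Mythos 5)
There is a genuine gap: your text is a search strategy, not a proof. The statement is a pure existence claim, and the paper proves it by actually exhibiting the four trees (Figure~\ref{K74} and the edge sets in Appendix~\ref{dek74}); everything in your proposal up to that point --- the computation $|\inn(T_i)|=7$ for every $i$, $\text{ped}(T_i)=2$, $|E^l|=4$, the two admissible degree profiles, and the $(3,2,1,1)$/$(2,2,2,1)$ distributions --- only constrains what a solution must look like and cannot by itself certify that one exists. (Indeed, the whole point of Propositions~\ref{krgrand}, \ref{propk45} and~\ref{propk93} is that these same necessary conditions sometimes \emph{cannot} be met, so satisfying the numerology is genuinely not sufficient.) Until you write down four concrete edge sets and check the conditions of Theorem~\ref{allf} on them, nothing is proved.

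Two smaller points. First, your proposed $\mathbb{Z}_4$-symmetric ansatz (take $T_2,T_3,T_4$ to be cyclic shifts of $T_1$) is not what the paper's solution does --- the trees in Appendix~\ref{dek74} have different inner-vertex distributions across the $K$-copies and are not shifts of one another --- so you would need to verify that a shift-invariant solution exists at all before relying on it; you do hedge by falling back on an ILP, which is essentially what the authors report doing, but then the deliverable is still the explicit certificate, which is absent. Second, the constraint $a_3(T)\le 1$ from Lemma~\ref{le1} is derived for the tree minimizing $|\inn(\cdot)|$ via $|E^l_T|\le 1$; here all four trees have $|\inn(T_i)|=7$, so if you want to apply that bound to every tree simultaneously you should say a word about how the four lost edges are distributed among the $E^l_{T_i}$ (Observation~\ref{enadj}) rather than quoting the lemma as if it applied to each tree unconditionally.
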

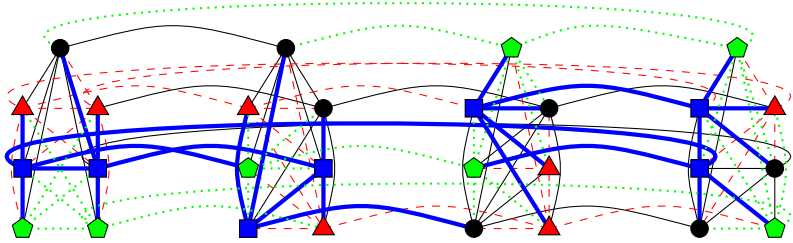
\begin{figure}[t]
\begin{center}
\begin{tikzpicture}
\draw (0.5,2.4) -- (0,1.6);
\draw (0.5,2.4) -- (0,0.8);
\draw (0.5,2.4) -- (0,0);
\draw (0.5,2.4) -- (1,0);
\draw (3.5,2.4) -- (4,1.6);
\draw (3.5,2.4) -- (3,1.6);
\draw (3.5,2.4) -- (3,0.8);
\draw (3.5,2.4) -- (4,0.8);
\draw (4,1.6) -- (3,0);
\draw (4,1.6) .. controls (4.2,0.8) .. (4,0);
\draw (7,1.6) -- (6,0);
\draw (7,1.6) -- (6,0.8);
\draw (7,1.6) .. controls (7.2,0.8) .. (7,0);
\draw (6,0) -- (7,0.8);
\draw (6,1.6) .. controls (5.8,0.8) .. (6,0);
\draw (6,0) -- (6.5,2.4);
\draw (9,0) -- (10,0.8);
\draw (9,0) .. controls (8.8,0.8) .. (9,1.6);
\draw (9,0) -- (9.5,2.4);
\draw (10,0.8) -- (9,0.8);
\draw (10,0.8) -- (10,0);
\draw (0.5,2.4) .. controls (2,2.8) .. (3.5,2.4);
\draw (1,1.6) .. controls (2.5,2) .. (4,1.6);
\draw (4,1.6) .. controls (5.5,2) .. (7,1.6);
\draw (7,1.6) .. controls (8.5,2) .. (10,1.6);
\draw (6,0) .. controls (7.5,0.4) .. (9,0);
\draw (10,0.8) .. controls (12,1.6) and (-1,1.6) .. (1,0.8);
\draw[dashed, draw=red] (0,1.6) -- (1,1.6);
\draw[dashed, draw=red] (0,1.6) .. controls (-0.2,0.8) .. (0,0);
\draw[dashed, draw=red] (1,1.6) .. controls (1.2,0.8) .. (1,0);
\draw[dashed, draw=red] (0,1.6) -- (1,0.8);
\draw[dashed, draw=red] (1,1.6) -- (0,0.8);
\draw[dashed, draw=red] (1,1.6) -- (0.5,2.4);
\draw[dashed, draw=red] (3,1.6) -- (4,0);
\draw[dashed, draw=red] (3,1.6) -- (4,0.8);
\draw[dashed, draw=red] (3,1.6) -- (4,1.6);
\draw[dashed, draw=red] (4,0) -- (3.5,2.4);
\draw[dashed, draw=red] (4,0) -- (3,0.8);
\draw[dashed, draw=red] (4,0) -- (3,0);
\draw[dashed, draw=red] (7,0) -- (7,0.8);
\draw[dashed, draw=red] (7,0) -- (6.5,2.4);
\draw[dashed, draw=red] (7,0) -- (6,0);
\draw[dashed, draw=red] (7,0.8) -- (7,1.6);
\draw[dashed, draw=red] (7,0.8) -- (6,0.8);
\draw[dashed, draw=red] (10,1.6) -- (9,0.8);
\draw[dashed, draw=red] (10,1.6) -- (9,0);
\draw[dashed, draw=red] (10,1.6) -- (10,0.8);
\draw[dashed, draw=red] (10,1.6) -- (9.5,2.4);
\draw[dashed, draw=red] (0,1.6) .. controls (1.5,2) .. (3,1.6);
\draw[dashed, draw=red] (4,0) .. controls (5.5,0.4) .. (7,0);
\draw[dashed, draw=red] (3,1.6) .. controls (4.5,2) .. (6,1.6);
\draw[dashed, draw=red] (7,0) .. controls (8.5,0.4) .. (10,0);
\draw[dashed, draw=red] (9,1.6) .. controls (11,2.4) and (-2,2.4) .. (0,1.6);
\draw[dashed, draw=red] (10,1.6) .. controls (12,2.4) and (-1,2.4) .. (1,1.6);
\draw[ultra thick,color=blue] (0,0.8) -- (1,0.8);
\draw[ultra thick,color=blue] (0,0.8) -- (0,0);
\draw[ultra thick,color=blue] (0,0.8) -- (0,1.6);
\draw[ultra thick,color=blue] (1,0.8) -- (1,0);
\draw[ultra thick,color=blue] (1,0.8) -- (1,1.6);
\draw[ultra thick,color=blue] (1,0.8) -- (0.5,2.4);
\draw[ultra thick,color=blue] (4,0.8) -- (3,0);
\draw[ultra thick,color=blue] (4,0.8) -- (4,0);
\draw[ultra thick,color=blue] (4,0.8) -- (4,1.6);
\draw[ultra thick,color=blue] (3,0) .. controls (2.8,0.8) .. (3,1.6);
\draw[ultra thick,color=blue] (3,0) -- (3.5,2.4);
\draw[ultra thick,color=blue] (6,1.6) -- (7,1.6);
\draw[ultra thick,color=blue] (6,1.6) -- (6.5,2.4);
\draw[ultra thick,color=blue] (6,1.6) -- (7,0.8);
\draw[ultra thick,color=blue] (6,1.6) -- (7,0);
\draw[ultra thick,color=blue] (9,0.8) -- (9,1.6);
\draw[ultra thick,color=blue] (9,0.8) -- (9,0);
\draw[ultra thick,color=blue] (9,0.8) -- (10,0);
\draw[ultra thick,color=blue] (9,1.6) -- (10,0.8);
\draw[ultra thick,color=blue] (9,1.6) -- (10,1.6);
\draw[ultra thick,color=blue] (9,1.6) -- (9.5,2.4);
\draw[ultra thick,color=blue] (0,0.8) .. controls (1.5,1.2) .. (3,0.8);
\draw[ultra thick,color=blue] (1,0.8) .. controls (2.5,1.2) .. (4,0.8);
\draw[ultra thick,color=blue] (3,0) .. controls (4.5,0.4) .. (6,0);
\draw[ultra thick,color=blue] (6,1.6) .. controls (7.5,2) .. (9,1.6);
\draw[ultra thick,color=blue] (6,0.8) .. controls (7.5,1.2) .. (9,0.8);
\draw[ultra thick,color=blue] (9,0.8) .. controls (11,1.6) and (-2,1.6) .. (0,0.8);
\draw[dotted, thick,color=green] (0,0) --(1,0);
\draw[dotted, thick,color=green] (0,0) -- (1,0.8);
\draw[dotted, thick,color=green] (0,0) -- (1,1.6);
\draw[dotted, thick,color=green] (1,0) -- (0,0.8);
\draw[dotted, thick,color=green] (1,0) -- (0,1.6);
\draw[dotted, thick,color=green] (3,0.8) -- (3,0);
\draw[dotted, thick,color=green] (3,0.8) -- (4,0.8);
\draw[dotted, thick,color=green] (3,0.8) -- (4,1.6);
\draw[dotted, thick,color=green] (3,0.8) -- (3,1.6);
\draw[dotted, thick,color=green] (6,0.8) -- (6.5,2.4);
\draw[dotted, thick,color=green] (6,0.8) -- (6,0);
\draw[dotted, thick,color=green] (6,0.8) -- (6,1.6);
\draw[dotted, thick,color=green] (6,0.8) -- (7,0);
\draw[dotted, thick,color=green] (6.5,2.4) -- (7,1.6);
\draw[dotted, thick,color=green] (6.5,2.4) -- (7,0.8);
\draw[dotted, thick,color=green] (10,0) -- (9.5,2.4);
\draw[dotted, thick,color=green] (10,0) -- (9,0);
\draw[dotted, thick,color=green] (10,0) -- (9,1.6);
\draw[dotted, thick,color=green] (10,0) .. controls (10.2,0.8) .. (10,1.6);
\draw[dotted, thick,color=green] (9.5,2.4) -- (9,0.8);
\draw[dotted, thick,color=green] (9.5,2.4) -- (10,0.8);
\draw[dotted, thick,color=green] (1,0) .. controls (2.5,0.4) .. (4,0);
\draw[dotted, thick,color=green] (3,0.8) .. controls (4.5,1.2) .. (6,0.8);
\draw[dotted, thick,color=green] (3.5,2.4) .. controls (5,2.8) .. (6.5,2.4);
\draw[dotted, thick,color=green] (9.5,2.4) .. controls (7.5,2.8) .. (6.5,2.4);
\draw[dotted, thick,color=green] (10,0) .. controls (12,0.8) and (-1,0.8) .. (1,0);
\draw[dotted, thick,color=green] (9.5,2.4) .. controls (11.5,3.2) and (-1.5,3.2) .. (0.5,2.4);
\node at (0,0) [regular polygon, regular polygon sides=5,draw=black,fill=green,scale=0.7]{};
\node at (1,0) [regular polygon, regular polygon sides=5,draw=black,fill=green,scale=0.7]{};
\node at (0,0.8) [regular polygon, regular polygon sides=4,draw=black,fill=blue,scale=0.7]{};
\node at (1,0.8) [regular polygon, regular polygon sides=4,draw=black,fill=blue,scale=0.7]{};
\node at (0,1.6) [regular polygon, regular polygon sides=3,draw=black,fill=red,scale=0.5]{};
\node at (1,1.6) [regular polygon, regular polygon sides=3,draw=black,fill=red,scale=0.5]{};
\node at (0.5,2.4) [circle,draw=black,fill=black,scale=0.7]{};
\node at (0+3,0) [regular polygon, regular polygon sides=4,draw=black,fill=blue,scale=0.7]{};
\node at (1+3,0) [regular polygon, regular polygon sides=3,draw=black,fill=red,scale=0.5]{};
\node at (0+3,0.8) [regular polygon, regular polygon sides=5,draw=black,fill=green,scale=0.7]{};
\node at (1+3,0.8) [regular polygon, regular polygon sides=4,draw=black,fill=blue,scale=0.7]{};
\node at (0+3,1.6) [regular polygon, regular polygon sides=3,draw=black,fill=red,scale=0.5]{};
\node at (1+3,1.6) [circle,draw=black,fill=black,scale=0.7]{};
\node at (0.5+3,2.4) [circle,draw=black,fill=black,scale=0.7]{};
\node at (0+6,0) [circle,draw=black,fill=black,scale=0.7]{};
\node at (1+6,0) [regular polygon, regular polygon sides=3,draw=black,fill=red,scale=0.5]{};
\node at (0+6,0.8) [regular polygon, regular polygon sides=5,draw=black,fill=green,scale=0.7]{};
\node at (1+6,0.8) [regular polygon, regular polygon sides=3,draw=black,fill=red,scale=0.5]{};
\node at (0+6,1.6) [regular polygon, regular polygon sides=4,draw=black,fill=blue,scale=0.7]{};
\node at (1+6,1.6) [circle,draw=black,fill=black,scale=0.7]{};
\node at (0.5+6,2.4) [regular polygon, regular polygon sides=5,draw=black,fill=green,scale=0.7]{};
\node at (0+9,0) [circle,draw=black,fill=black,scale=0.7]{};
\node at (1+9,0) [regular polygon, regular polygon sides=5,draw=black,fill=green,scale=0.7]{};
\node at (0+9,0.8) [regular polygon, regular polygon sides=4,draw=black,fill=blue,scale=0.7]{};
\node at (1+9,0.8) [circle,draw=black,fill=black,scale=0.7]{};
\node at (0+9,1.6) [regular polygon, regular polygon sides=4,draw=black,fill=blue,scale=0.7]{};
\node at (1+9,1.6) [regular polygon, regular polygon sides=3,draw=black,fill=red,scale=0.5]{};
\node at (0.5+9,2.4) [regular polygon, regular polygon sides=5,draw=black,fill=green,scale=0.7]{};
\end{tikzpicture}
\end{center}
\caption{Four completely independent spanning trees in $K_{7}\square C_4$.}
\label{K74}
\end{figure}
\begin{proof}
The four completely independent spanning trees in $K_{7}\square C_4$ are depicted in Figure~\ref{K74} and their edge sets are given in Appendix \ref{dek74}.
\end{proof}
\begin{prop}
\label{propk94}
There exist five completely independent spanning trees in $K_{9}\square C_4$.
\end{prop}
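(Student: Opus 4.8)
The proof will follow the same scheme as Propositions~\ref{propk73} and~\ref{propk74}: we exhibit five edge-disjoint spanning trees $T_1,\ldots,T_5$ of $K_9\square C_4$ — obtained, like some of the trees in the previous propositions, by solving an ILP formulation — display them in a figure and list their edge sets in an appendix, and then invoke Theorem~\ref{allf}. By that theorem it is enough to check, for the five candidate trees, that they are pairwise edge-disjoint and that $\inn(T_1),\ldots,\inn(T_5)$ are pairwise disjoint (no vertex is an inner vertex of two of them); together with the routine verification that each $T_i$ is connected, acyclic and has $35$ edges, this is a finite computation once the trees are written down.

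It is worth recording the constraints that Section~\ref{sec:necessaryConds} imposes here, both because they show how tight the construction must be and because they cut down the search space. Since $r=5$ and the order is $n=36$, the graph $K_9\square C_4$ has $180$ edges, the five trees use $5\cdot 35=175$ of them, and exactly $|E^{l}|=5$ edges are lost. By Proposition~\ref{pbound}, $7=\lceil 34/5\rceil\le|\inn(T_i)|\le 36-7\cdot 4=8$ for every $i$, and since $\sum_i|\inn(T_i)|\le 36$, either all five trees have $7$ inner vertices, or one has $8$ and the other four have $7$. A tree with $7$ inner vertices has $\text{ped}=7\cdot 5-36+2=1$, so exactly one of its inner vertices has degree $r=5$ and the other six have degree $r+1=6$ in it: there is essentially no slack. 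By Lemma~\ref{le1} each $K$-copy carries at most $3$ inner vertices of a given tree and at most one $K$-copy carries $3$, so the copy-populations of a $7$-inner-vertex tree form the multiset $(2,2,2,1)$ or a permutation of $(3,2,1,1)$, and those of an $8$-inner-vertex tree form $(2,2,2,2)$ or a permutation of $(3,2,2,1)$. Finally, the $35$ or $36$ inner-vertex slots are spread over four $K$-copies of $9$ vertices each, so all but at most one of the $36$ vertices must be an inner vertex of exactly one tree, and — by Observation~\ref{FE20} — inside each tree those few inner vertices must dominate all $36$ vertices, where a vertex of $K^j$ can only be dominated from $K^{j-1}\cup K^{j}\cup K^{j+1}$.

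Concretely, I would describe each $T_i$ by a core subtree $S_i$ on its $7$ or $8$ inner vertices and then attach every remaining vertex to $S_i$ as a leaf; the hypotheses of Theorem~\ref{allf} then amount to requiring the cores $S_1,\ldots,S_5$ to be pairwise vertex- and edge-disjoint and the leaf edges to be all distinct and disjoint from the core edges. A natural ansatz to feed the search exploits the $\mathbb{Z}_4$-symmetry of $C_4$: take $T_1$ to be invariant under the rotation of $C_4$ composed with a suitable relabelling of $K_9$, with population $(2,2,2,2)$, and take $T_2,T_3,T_4,T_5$ to be the four cyclic shifts of one pattern with population $(2,2,2,1)$; then the $2+7=9$ inner-vertex slots in each $K$-copy exhaust its $9$ vertices, which is precisely the tight situation forced above. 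Whether or not this symmetric form is realizable, the resulting five trees would be presented exactly as in Figures~\ref{K73} and~\ref{K74}, with their edge sets in an appendix.

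The entire difficulty lies in producing one simultaneous solution: the five inner-vertex sets must stay disjoint and as small as $7$ (or $8$), each of these small sets must dominate all $36$ vertices inside its own tree, and the five trees must nonetheless be pairwise edge-disjoint spanning trees with a budget of only $5$ lost edges. Since $\text{ped}(T_i)=1$ for the seven-inner-vertex trees there is almost no room for error, which is why a hand construction is hard to find and an ILP (or an exhaustive computer search) is the appropriate tool; once a candidate is obtained, checking that it satisfies Theorem~\ref{allf} is straightforward.
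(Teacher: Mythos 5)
Your proposal correctly identifies the only viable strategy here --- exhibit five explicit trees (found by computer search) and verify the conditions of Theorem~\ref{allf} --- and your preliminary analysis is sound and even matches the shape of the paper's actual solution: the paper's five trees have inner-vertex counts $7,7,7,7,8$, every one of the $36$ vertices is an inner vertex of exactly one tree, the four $7$-inner-vertex trees have copy-populations that are permutations of $(2,2,2,1)$, and the $8$-inner-vertex tree has population a permutation of $(3,2,2,1)$. (One small tightening you missed: for a $7$-inner-vertex tree, $\text{ped}(T_i)=1$ combined with $\text{ped}(T_i)\ge 2|E^{l}_{T_i}|$ forces $|E^{l}_{T_i}|=0$, and then Lemma~\ref{le1}(i) with $k=3$ rules out the $(3,2,1,1)$ populations entirely, so only $(2,2,2,1)$ survives for those trees.)

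However, the proposal has a genuine gap: it never produces the witness. Proposition~\ref{propk94} is a bare existence statement, and its entire proof content is the explicit construction --- the paper proves it by displaying the five trees in Figure~\ref{K94} and listing their edge sets in Appendix~\ref{dek94}, after which the verification is, as you say, a finite check. Your text describes the constraints a solution must satisfy, a symmetric ansatz that might produce one, and how one would verify a candidate, but none of this establishes that a solution exists; indeed your own analysis shows the problem is so tightly constrained ($\text{ped}=1$ for four of the trees, only $5$ lost edges in total) that existence is genuinely in doubt a priori --- this is exactly the regime in which the companion non-existence results (Propositions~\ref{krgrand}, \ref{propk45}, \ref{propk93}) live. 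Until the five edge sets are actually written down and checked, the statement is not proved.
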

\begin{sidewaysfigure}
\begin{center}
\begin{tikzpicture}[scale=1.7]
\draw (0.5,3.2) -- (1,2.4);
\draw (0.5,3.2) -- (1,1.6);
\draw (0.5,3.2) -- (1,0);
\draw (0.5,3.2) -- (0,0.8);
\draw (0,0.8) -- (1,0.8);
\draw (0,0.8) -- (0,0);
\draw (0,0.8) -- (0,1.6);
\draw (0,0.8) .. controls (-0.2,1.6) .. (0,2.4);
\draw (3.5,3.2) -- (4,2.4);
\draw (3.5,3.2) -- (4,1.6);
\draw (3.5,3.2) -- (4,0.8);
\draw (3.5,3.2) -- (3,0);
\draw (4,1.6) -- (3,0.8);
\draw (4,1.6) -- (3,1.6);
\draw (4,1.6) -- (3,2.4);
\draw (4,1.6) .. controls (4.2,0.8) .. (4,0);
\draw (7,0) -- (6,0);
\draw (7,0) -- (6,0.8);
\draw (7,0) .. controls (7.2,0.8) .. (7,1.6);
\draw (7,0) .. controls (7.4,1.2) .. (7,2.4);
\draw (7,1.6) -- (6,2.4);
\draw (7,1.6) -- (6.5,3.2);
\draw (6,1.6) -- (6.5,3.2);
\draw (7,0.8) -- (6.5,3.2);
\draw (10,0) -- (9,0);
\draw (10,0) -- (9,1.6);
\draw (10,0) -- (9,2.4);
\draw (10,0) -- (10,0.8);
\draw (10,0) .. controls (10.4,1.2) .. (10,2.4);
\draw (0.5,3.2) .. controls (2,3.6) .. (3.5,3.2);
\draw (3.5,3.2) .. controls (5,3.6) .. (6.5,3.2);
\draw (7,0) .. controls (8.5,0.4) .. (10,0);
\draw (7,1.6) .. controls (8.5,2) .. (10,1.6);
\draw (9.5,3.2) .. controls (11.5,4) and (-1.5,4) .. (0.5,3.2);
\draw (9,0.8) .. controls (11,1.6) and (-2,1.6) .. (0,0.8);
\draw[dashed, draw=red] (0,1.6) -- (1,1.6);
\draw[dashed, draw=red] (0,1.6) .. controls (-0.2,0.8) .. (0,0);
\draw[dashed, draw=red] (1,1.6) -- (1,0.8);
\draw[dashed, draw=red] (1,1.6) -- (0,0.8);
\draw[dashed, draw=red] (1,1.6) -- (0,2.4);
\draw[dashed, draw=red] (1,1.6) -- (1,2.4);
\draw[dashed, draw=red] (0,1.6) -- (1,0);
\draw[dashed, draw=red] (0,1.6) -- (0.5,3.2);
\draw[dashed, draw=red] (3,2.4) -- (4,2.4);
\draw[dashed, draw=red] (3,2.4) -- (4,0.8);
\draw[dashed, draw=red] (3,2.4) -- (4,0);
\draw[dashed, draw=red] (3,2.4) -- (3.5,3.2);
\draw[dashed, draw=red] (3,2.4) .. controls (2.6,1.2).. (3,0);
\draw[dashed, draw=red] (6,2.4) -- (7,2.4);
\draw[dashed, draw=red] (6,2.4) -- (6.5,3.2);
\draw[dashed, draw=red] (6,2.4) -- (7,0);
\draw[dashed, draw=red] (6,2.4) .. controls (5.6,1.2) .. (6,0);
\draw[dashed, draw=red] (6,2.4) .. controls (5.8,1.6) .. (6,0.8);
\draw[dashed, draw=red] (6,0.8) -- (7,1.6);
\draw[dashed, draw=red] (6,0.8) -- (7,0.8);
\draw[dashed, draw=red] (9,0.8) -- (9,1.6);
\draw[dashed, draw=red] (9,0.8) -- (10,0);
\draw[dashed, draw=red] (9,0.8) -- (10,1.6);
\draw[dashed, draw=red] (9,0.8) -- (9.5,3.2);
\draw[dashed, draw=red] (9,1.6) -- (10,0.8);
\draw[dashed, draw=red] (9,1.6) -- (10,2.4);
\draw[dashed, draw=red] (9,1.6) .. controls (8.8,0.8) .. (9,0);
\draw[dashed, draw=red] (9,2.4) .. controls (8.8,1.6) .. (9,0.8);
\draw[dashed, draw=red] (0,1.6) .. controls (1.5,2) .. (3,1.6);
\draw[dashed, draw=red] (1,1.6) .. controls (2.5,2) .. (4,1.6);
\draw[dashed, draw=red] (3,0.8) .. controls (4.5,1.2) .. (6,0.8);
\draw[dashed, draw=red] (3,2.4) .. controls (4.5,2.8) .. (6,2.4);
\draw[dashed, draw=red] (6,0.8) .. controls (7.5,1.2) .. (9,0.8);
\draw[dashed, draw=red] (6,1.6) .. controls (7.5,2) .. (9,1.6);
\draw[dashed, draw=red] (9,1.6) .. controls (11,2.4) and (-2,2.4) .. (0,1.6);
\draw[ultra thick,color=blue] (0,0) -- (1,0);
\draw[ultra thick,color=blue] (0,0) -- (1,0.8);
\draw[ultra thick,color=blue] (0,0) -- (1,1.6);
\draw[ultra thick,color=blue] (0,0) -- (1,2.4);
\draw[ultra thick,color=blue] (0,0) -- (0.5,3.2);
\draw[ultra thick,color=blue] (1,2.4) -- (0,2.4);
\draw[ultra thick,color=blue] (1,2.4) -- (0,1.6);
\draw[ultra thick,color=blue] (1,2.4) -- (0,0.8);
\draw[ultra thick,color=blue] (3,1.6) -- (4,0);
\draw[ultra thick,color=blue] (3,1.6) -- (4,2.4);
\draw[ultra thick,color=blue] (3,0) .. controls (2.8,0.8) .. (3,1.6);
\draw[ultra thick,color=blue] (3,1.6) -- (3.5,3.2);
\draw[ultra thick,color=blue] (3,1.6) -- (3,2.4);
\draw[ultra thick,color=blue] (4,2.4) .. controls (4.2,1.6) .. (4,0.8);
\draw[ultra thick,color=blue] (4,2.4) -- (4,1.6);
\draw[ultra thick,color=blue] (4,2.4) -- (3,1.6);
\draw[ultra thick,color=blue] (3,0.8) -- (4,2.4);
\draw[ultra thick,color=blue] (6,1.6) -- (7,1.6);
\draw[ultra thick,color=blue] (6,1.6) -- (6,2.4);
\draw[ultra thick,color=blue] (6,1.6) -- (6,0.8);
\draw[ultra thick,color=blue] (6,1.6) -- (7,0.8);
\draw[ultra thick,color=blue] (6,1.6) -- (7,0);
\draw[ultra thick,color=blue] (9,0) -- (10,0.8);
\draw[ultra thick,color=blue] (9,0) -- (9,0.8);
\draw[ultra thick,color=blue] (9,0) -- (9.5,3.2);
\draw[ultra thick,color=blue] (9,0) .. controls (8.6,1.2) .. (9,2.4);
\draw[ultra thick,color=blue] (9.5,3.2) -- (10,2.4);
\draw[ultra thick,color=blue] (9.5,3.2) -- (10,1.6);
\draw[ultra thick,color=blue] (9.5,3.2) -- (9,1.6);
\draw[ultra thick,color=blue] (9.5,3.2) -- (10,0);
\draw[ultra thick,color=blue] (1,2.4) .. controls (2.5,2.8) .. (4,2.4);
\draw[ultra thick,color=blue] (4,2.4) .. controls (5.5,2.8) .. (7,2.4);
\draw[ultra thick,color=blue] (3,1.6) .. controls (4.5,2) .. (6,1.6);
\draw[ultra thick,color=blue] (6,0) .. controls (7.5,0.4) .. (9,0);
\draw[ultra thick,color=blue] (6.5,3.2) .. controls (8,3.6) .. (9.5,3.2);
\draw[ultra thick,color=blue] (9,0) .. controls (11,0.8) and (-2,0.8) .. (0,0);
\draw[dotted, thick,color=green] (1,0.8) --(1,0);
\draw[dotted, thick,color=green] (1,0.8) -- (0,1.6);
\draw[dotted, thick,color=green] (1,0.8) -- (0,2.4);
\draw[dotted, thick,color=green] (1,0.8) .. controls (1.2,1.6) .. (1,2.4);
\draw[dotted, thick,color=green] (1,0.8) -- (0.5,3.2);
\draw[dotted, thick,color=green] (3,0.8) -- (3,0);
\draw[dotted, thick,color=green] (3,0.8) -- (4,0);
\draw[dotted, thick,color=green] (3,0.8) -- (3,1.6);
\draw[dotted, thick,color=green] (3,0.8) -- (3.5,3.2);
\draw[dotted, thick,color=green] (3,0.8) .. controls (2.8,1.6) .. (3,2.4);
\draw[dotted, thick,color=green] (3,0) -- (4,0.8);
\draw[dotted, thick,color=green] (3,0) -- (4,1.6);
\draw[dotted, thick,color=green] (3,0) -- (4,2.4);
\draw[dotted, thick,color=green] (6,0) -- (6.5,3.2);
\draw[dotted, thick,color=green] (6,0) -- (6,0.8);
\draw[dotted, thick,color=green] (6,0) -- (7,0.8);
\draw[dotted, thick,color=green] (6,0) .. controls (5.8,0.8) .. (6,1.6);
\draw[dotted, thick,color=green] (7,0.8) .. controls (7.2,1.6) .. (7,2.4);
\draw[dotted, thick,color=green] (7,0) -- (7,0.8);
\draw[dotted, thick,color=green] (7,0.8) -- (7,1.6);
\draw[dotted, thick,color=green] (7,0.8) -- (6,2.4);
\draw[dotted, thick,color=green] (10,0.8) -- (9.5,3.2);
\draw[dotted, thick,color=green] (10,0.8) -- (9,2.4);
\draw[dotted, thick,color=green] (10,0.8) -- (10,1.6);
\draw[dotted, thick,color=green] (10,0.8) -- (9,0.8);
\draw[dotted, thick,color=green] (10,0) .. controls (10.2,0.8) .. (10,1.6);
\draw[dotted, thick,color=green] (10,1.6) -- (9,1.6);
\draw[dotted, thick,color=green] (10,1.6) -- (10,2.4);
\draw[dotted, thick,color=green] (10,1.6) -- (9,0);
\draw[dotted, thick,color=green] (0,0) .. controls (1.5,0.4) .. (3,0);
\draw[dotted, thick,color=green] (0,0.8) .. controls (1.5,1.2) .. (3,0.8);
\draw[dotted, thick,color=green] (3,0) .. controls (4.5,0.4) .. (6,0);
\draw[dotted, thick,color=green] (7,0.8) .. controls (8.5,1.2) .. (10,0.8);
\draw[dotted, thick,color=green] (10,0.8) .. controls (12,1.6) and (-1,1.6) .. (1,0.8);
\draw[dotted, thick,color=green] (10,1.6) .. controls (12,2.4) and (-1,2.4) .. (1,1.6);
\draw[densely dashed, ultra thick,color=orange] (0,1.6) -- (0,2.4);
\draw[densely dashed, ultra thick,color=orange] (0,2.4) -- (0.5,3.2);
\draw[densely dashed, ultra thick,color=orange] (0,2.4) -- (1,0);
\draw[densely dashed, ultra thick,color=orange] (0,2.4) .. controls (-0.4,1.2) .. (0,0);
\draw[densely dashed, ultra thick,color=orange] (1,2.4) .. controls (1.4,1.2) .. (1,0);
\draw[densely dashed, ultra thick,color=orange] (1,1.6) .. controls (1.2,0.8) .. (1,0);
\draw[densely dashed, ultra thick,color=orange] (1,0) -- (0,0.8);
\draw[densely dashed, ultra thick,color=orange] (4,0) -- (4,0.8);
\draw[densely dashed, ultra thick,color=orange] (4,0) -- (3,0);
\draw[densely dashed, ultra thick,color=orange] (4,0) -- (3.5,3.2);
\draw[densely dashed, ultra thick,color=orange] (4,0) .. controls (4.2,1.2) .. (4,2.4);
\draw[densely dashed, ultra thick,color=orange] (4,0.8) -- (3,0.8);
\draw[densely dashed, ultra thick,color=orange] (4,0.8) -- (3,1.6);
\draw[densely dashed, ultra thick,color=orange] (4,0.8) -- (4,1.6);
\draw[densely dashed, ultra thick,color=orange] (7,2.4) -- (6.5,3.2);
\draw[densely dashed, ultra thick,color=orange] (7,2.4) -- (6,0);
\draw[densely dashed, ultra thick,color=orange] (7,2.4) -- (6,0.8);
\draw[densely dashed, ultra thick,color=orange] (7,2.4) -- (6,1.6);
\draw[densely dashed, ultra thick,color=orange] (7,2.4) -- (7,1.6);
\draw[densely dashed, ultra thick,color=orange] (9,2.4) -- (9,1.6);
\draw[densely dashed, ultra thick,color=orange] (9,2.4) -- (9.5,3.2);
\draw[densely dashed, ultra thick,color=orange] (9,2.4) -- (10,2.4);
\draw[densely dashed, ultra thick,color=orange] (9,2.4) -- (10,1.6);
\draw[densely dashed, ultra thick,color=orange] (10,2.4) -- (9,0.8);
\draw[densely dashed, ultra thick,color=orange] (10,2.4) -- (9,0);
\draw[densely dashed, ultra thick,color=orange] (10,2.4) .. controls (10.2,1.6) .. (10,0.8);
\draw[densely dashed, ultra thick,color=orange] (1,0.8) .. controls (2.5,1.2) .. (4,0.8);
\draw[densely dashed, ultra thick,color=orange] (1,0) .. controls (2.5,0.4) .. (4,0);
\draw[densely dashed, ultra thick,color=orange] (0,2.4) .. controls (1.5,2.8) .. (3,2.4);
\draw[densely dashed, ultra thick,color=orange] (4,0.8) .. controls (5.5,1.2) .. (7,0.8);
\draw[densely dashed, ultra thick,color=orange] (4,0) .. controls (5.5,0.4) .. (7,0);
\draw[densely dashed, ultra thick,color=orange] (6,2.4) .. controls (7.5,2.8) .. (9,2.4);
\draw[densely dashed, ultra thick,color=orange] (7,2.4) .. controls (8.5,2.8) .. (10,2.4);
\draw[densely dashed, ultra thick,color=orange] (10,0) .. controls (12,0.8) and (-1,0.8) .. (1,0);
\draw[densely dashed, ultra thick,color=orange] (9,2.4) .. controls (11,3.2) and (-2,3.2) .. (0,2.4);
\node at (1,0) [diamond,draw=black,fill=orange,scale=0.7]{};
\node at (0,2.4) [diamond,draw=black,fill=orange,scale=0.7]{};
\node at (1,0.8) [regular polygon, regular polygon sides=5,draw=black,fill=green,scale=0.7]{};
\node at (1,2.4) [regular polygon, regular polygon sides=4,draw=black,fill=blue,scale=0.7]{};
\node at (0,0) [regular polygon, regular polygon sides=4,draw=black,fill=blue,scale=0.7]{};
\node at (0,1.6) [regular polygon, regular polygon sides=3,draw=black,fill=red,scale=0.5]{};
\node at (1,1.6) [regular polygon, regular polygon sides=3,draw=black,fill=red,scale=0.5]{};
\node at (0,0.8) [circle,draw=black,fill=black,scale=0.7]{};
\node at (0.5,3.2) [circle,draw=black,fill=black,scale=0.7]{};
\node at (1+3,0.8) [diamond,draw=black,fill=orange,scale=0.7]{};
\node at (1+3,0) [diamond,draw=black,fill=orange,scale=0.7]{};
\node at (0+3,0) [regular polygon, regular polygon sides=5,draw=black,fill=green,scale=0.7]{};
\node at (0+3,0.8) [regular polygon, regular polygon sides=5,draw=black,fill=green,scale=0.7]{};
\node at (0+3,1.6) [regular polygon, regular polygon sides=4,draw=black,fill=blue,scale=0.7]{};
\node at (1+3,2.4) [regular polygon, regular polygon sides=4,draw=black,fill=blue,scale=0.7]{};
\node at (0+3,2.4) [regular polygon, regular polygon sides=3,draw=black,fill=red,scale=0.5]{};
\node at (1+3,1.6) [circle,draw=black,fill=black,scale=0.7]{};
\node at (0.5+3,3.2) [circle,draw=black,fill=black,scale=0.7]{};
\node at (1+6,2.4) [diamond,draw=black,fill=orange,scale=0.7]{};
\node at (1+6,0.8) [regular polygon, regular polygon sides=5,draw=black,fill=green,scale=0.7]{};
\node at (0+6,1.6) [regular polygon, regular polygon sides=4,draw=black,fill=blue,scale=0.7]{};
\node at (0+6,0) [regular polygon, regular polygon sides=5,draw=black,fill=green,scale=0.7]{};
\node at (0+6,0.8) [regular polygon, regular polygon sides=3,draw=black,fill=red,scale=0.5]{};
\node at (0+6,2.4) [regular polygon, regular polygon sides=3,draw=black,fill=red,scale=0.5]{};
\node at (0.5+6,3.2) [circle,draw=black,fill=black,scale=0.7]{};
\node at (1+6,1.6) [circle,draw=black,fill=black,scale=0.7]{};
\node at (1+6,0) [circle,draw=black,fill=black,scale=0.7]{};
\node at (0+9,0) [circle,draw=black,fill=black,scale=0.7]{};
\node at (1+9,2.4) [diamond,draw=black,fill=orange,scale=0.7]{};
\node at (0+9,2.4) [diamond,draw=black,fill=orange,scale=0.7]{};
\node at (0.5+9,3.2) [regular polygon, regular polygon sides=4,draw=black,fill=blue,scale=0.7]{};
\node at (0+9,0) [regular polygon, regular polygon sides=4,draw=black,fill=blue,scale=0.7]{};
\node at (1+9,0.8) [regular polygon, regular polygon sides=5,draw=black,fill=green,scale=0.7]{};
\node at (1+9,1.6) [regular polygon, regular polygon sides=5,draw=black,fill=green,scale=0.7]{};
\node at (0+9,1.6) [regular polygon, regular polygon sides=3,draw=black,fill=red,scale=0.5]{};
\node at (0+9,0.8) [regular polygon, regular polygon sides=3,draw=black,fill=red,scale=0.5]{};
\node at (1+9,0) [circle,draw=black,fill=black,scale=0.7]{};
\end{tikzpicture}
\end{center}
\caption{Five completely independent spanning trees in $K_{9}\square C_4$.}
\label{K94}
\end{sidewaysfigure}
\begin{proof}
The five completely independent spanning trees in $K_{9}\square C_4$ are depicted in Figure~\ref{K94} and their edge sets are given in Appendix \ref{dek94}.
\end{proof}
\begin{prop}
\label{propk95}
There exist five completely independent spanning trees in $K_{9}\square C_5$.
\end{prop}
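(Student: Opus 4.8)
\section*{Proof proposal}

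The plan is to exhibit five explicit spanning trees $T_1,\dots,T_5$ of $K_9\square C_5$ and to verify, via Theorem~\ref{allf}, that they are completely independent. Here $r=5$, the graph is $10$-regular and $10$-connected, and its order is $p=45$. Since $\lceil(p-2)/r\rceil=\lceil 43/5\rceil=9$ and $p-\lceil(p-2)/r\rceil(r-1)=45-36=9$, Proposition~\ref{pbound} forces $|\inn(T_i)|=9$, hence $\text{ped}(T_i)=9\cdot 5-45+2=2$, for every $i$. Because $\sum_{i=1}^{5}|\inn(T_i)|\le p=45=5\cdot 9$ and, by Theorem~\ref{allf}, every vertex is an inner vertex in at most one of the trees, the sets $\inn(T_1),\dots,\inn(T_5)$ must in fact \emph{partition} $V(K_9\square C_5)$, and likewise $\sum_{i}n_j(T_i)=9$ on each $K$-copy $K^j$. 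Note that $n=5<r+1=6$ while $n\ge 4$, so $K_9\square C_5$ is precisely the remaining case not ruled out by the negative Propositions~\ref{krgrand}, \ref{propk45} and \ref{propk93}.

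The construction is the heart of the argument and would be obtained by solving an integer linear program, as for the earlier positive results. I would exploit the cyclic symmetry: the shift $u_i^j\mapsto u_i^{j+1}$ is an automorphism of $K_9\square C_5$ of order $5$, so it is natural to look for a solution whose inner-vertex partition, and ideally whose family of trees up to relabeling, is invariant under this action; this collapses the design to essentially the choices made on one $K$-copy together with the edges joining consecutive copies. Concretely one fixes on each copy a composition of $9$ into five values $n_j(T_i)\le 3$ — for instance the fully symmetric pattern giving each $T_i$ the distribution $(2,2,2,2,1)$, or a pattern in which some copy carries three inner vertices of $T_i$, which by Observation~\ref{degretree} then forces that copy to contain $T_i$'s unique lost edge with both of its endpoints of degree at most $5$ in $T_i$ — and then routes the $44$ edges of each $T_i$ so that (i) each $T_i$ is a spanning tree, (ii) the five edge sets are pairwise disjoint, leaving exactly $|E^{l}|=5$ lost edges, and (iii) at every vertex at most one $T_i$ has degree larger than $1$. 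Following Propositions~\ref{propk73}--\ref{propk94}, the resulting edge lists would be displayed in a figure and recorded explicitly in an appendix.

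The main obstacle is the combinatorial tightness of the search rather than any conceptual difficulty: the equality $\sum_i|\inn(T_i)|=p$ leaves no slack, so the inner sets really must tile $V(K_9\square C_5)$, and within that rigid frame one must simultaneously guarantee connectivity of five trees, global edge-disjointness, and the degree-one-except-in-one-tree condition; most candidate inner-vertex tilings and most routings fail, which is why a computer search is used. Once a valid family is produced, the verification is routine and finite: confirm that each $T_i$ has $44$ edges and is connected (hence a spanning tree), that the $5\cdot 44=220$ tree edges together with the $5$ lost edges exhaust the $225=5\cdot 45$ edges of $K_9\square C_5$ without repetition, and that $\inn(T_1),\dots,\inn(T_5)$ are pairwise disjoint. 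Theorem~\ref{allf} then yields that $T_1,\dots,T_5$ are completely independent spanning trees.
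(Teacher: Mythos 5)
There is a genuine gap: the proposal never actually produces the five trees. The statement is a pure existence claim, and the paper's proof consists entirely of exhibiting explicit edge sets (found by an ILP solver, displayed in Figure~\ref{K95} and listed in Appendix~\ref{dek95}) whose correctness can then be checked mechanically. Your text correctly identifies this as the intended strategy, correctly computes the forced parameters ($|\inn(T_i)|=9$, $\text{ped}(T_i)=2$, the inner sets partitioning $V(K_9\square C_5)$, $|E^l|=5$), and correctly describes what a verification via Theorem~\ref{allf} would consist of --- but all of this is a description of a proof one \emph{would} write, not a proof. The necessary conditions you derive from Propositions~\ref{pbound} and~\ref{tinf} constrain any solution; they do not show that one exists, and indeed the neighbouring Propositions~\ref{krgrand}, \ref{propk45} and~\ref{propk93} show that for other parameter values satisfying analogous necessary conditions no solution exists. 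So the existence cannot be inferred from the frame you set up; the explicit edge lists are the entire mathematical content here and cannot be replaced by a sketch of how one would search for them.

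A secondary, smaller point: your suggestion to impose invariance under the shift $u_i^j\mapsto u_i^{j+1}$ is a reasonable heuristic for the search, but it is not what the paper's solution does (the trees in Appendix~\ref{dek95} are not shift-invariant), and there is no guarantee a priori that a shift-invariant solution exists. Presenting it as the route to the construction therefore adds a further unverified assumption rather than closing the gap. To complete the proof you must either reproduce a concrete valid family of five edge sets and carry out the finite verification you outline, or cite one.
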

\begin{sidewaysfigure}
\begin{center}
\begin{tikzpicture}[scale=1.7]
\draw[ultra thick,color=blue] (0,0) -- (0,0.8);
\draw[ultra thick,color=blue] (0,0) -- (1,0.8);
\draw[ultra thick,color=blue] (0,0) -- (1,1.6);
\draw[ultra thick,color=blue] (0,0) -- (1,2.4);
\draw[ultra thick,color=blue] (0,0) .. controls (-0.4,1.2) .. (0,2.4);
\draw[ultra thick,color=blue] (1,1.6) -- (0.5,3.2);
\draw[ultra thick,color=blue] (1,1.6) -- (0,1.6);
\draw[ultra thick,color=blue] (1,1.6)  .. controls (1.2,0.8) .. (1,0);
\draw[ultra thick,color=blue] (3,2.4) -- (2,2.4);
\draw[ultra thick,color=blue] (3,2.4) -- (2,1.6);
\draw[ultra thick,color=blue] (3,2.4) -- (2,0.8);
\draw[ultra thick,color=blue] (3,2.4) -- (3,1.6);
\draw[ultra thick,color=blue] (3,0) .. controls (3.4,1.2) .. (3,2.4);
\draw[ultra thick,color=blue] (3,1.6) -- (3,0.8);
\draw[ultra thick,color=blue] (3,1.6) -- (2,0);
\draw[ultra thick,color=blue] (3,1.6) -- (2.5,3.2);
\draw[ultra thick,color=blue] (5,2.4) -- (4.5,3.2);
\draw[ultra thick,color=blue] (5,2.4) -- (4,2.4);
\draw[ultra thick,color=blue] (5,2.4) -- (4,1.6);
\draw[ultra thick,color=blue] (4,2.4) -- (5,1.6);
\draw[ultra thick,color=blue] (4,2.4) -- (5,0);
\draw[ultra thick,color=blue] (4,0.8) .. controls (3.8,1.6) .. (4,2.4);
\draw[ultra thick,color=blue] (4,0) .. controls (3.6,1.2) .. (4,2.4);
\draw[ultra thick,color=blue] (5,0.8) .. controls (5.2,1.6) .. (5,2.4);
\draw[ultra thick,color=blue] (6,0.8) -- (7,1.6);
\draw[ultra thick,color=blue] (6,0.8) -- (7,2.4);
\draw[ultra thick,color=blue] (6,0.8) -- (7,0.8);
\draw[ultra thick,color=blue] (6,0.8) -- (7,0);
\draw[ultra thick,color=blue] (6,0.8) .. controls (5.8,1.6) .. (6,2.4);
\draw[ultra thick,color=blue] (7,2.4) -- (6.5,3.2);
\draw[ultra thick,color=blue] (7,2.4) -- (6,1.6);
\draw[ultra thick,color=blue] (7,2.4) -- (6,0);
\draw[ultra thick,color=blue] (9,0) .. controls (9.4,1.2) .. (9,2.4);
\draw[ultra thick,color=blue] (9,2.4) -- (8,2.4);
\draw[ultra thick,color=blue] (9,2.4) -- (8,1.6);
\draw[ultra thick,color=blue] (9,2.4) .. controls (9.2,1.6) .. (9,0.8);
\draw[ultra thick,color=blue] (9,2.4) -- (8.5,3.2);
\draw[ultra thick,color=blue] (1,1.6) .. controls (2,2) .. (3,1.6);
\draw[ultra thick,color=blue] (3,2.4) .. controls (4,2.8) .. (5,2.4);
\draw[ultra thick,color=blue] (5,2.4) .. controls (6,2.8) .. (7,2.4);
\draw[ultra thick,color=blue] (7,2.4) .. controls (8,2.8) .. (9,2.4);
\draw[ultra thick,color=blue] (6,0.8) .. controls (7,1.2) .. (8,0.8);
\draw[ultra thick,color=blue] (8,0) .. controls (10,0.8) and (-2,0.8) .. (0,0);
\draw[ultra thick,color=blue] (9,1.6) .. controls (11,2.4) and (-1,2.4) .. (1,1.6);
\draw[densely dashed, ultra thick,color=orange] (1,0) -- (0,0);
\draw[densely dashed, ultra thick,color=orange] (1,0) -- (0,1.6);
\draw[densely dashed, ultra thick,color=orange] (1,0) -- (0.5,3.2);
\draw[densely dashed, ultra thick,color=orange] (1,2.4) -- (0,2.4);
\draw[densely dashed, ultra thick,color=orange] (1,2.4) -- (1,1.6);
\draw[densely dashed, ultra thick,color=orange] (1,2.4) .. controls (1.4,1.2) .. (1,0);
\draw[densely dashed, ultra thick,color=orange] (1,2.4) .. controls (1.2,1.6) .. (1,0.8);
\draw[densely dashed, ultra thick,color=orange] (1,1.6) .. controls (1.2,0.8) .. (1,0);
\draw[densely dashed, ultra thick,color=orange] (3,0) -- (2,0);
\draw[densely dashed, ultra thick,color=orange] (3,0) -- (2,0.8);
\draw[densely dashed, ultra thick,color=orange] (3,0) -- (2.5,3.2);
\draw[densely dashed, ultra thick,color=orange] (2,0.8) .. controls (1.8,1.6) .. (2,2.4);
\draw[densely dashed, ultra thick,color=orange] (2,0.8) -- (2,1.6);
\draw[densely dashed, ultra thick,color=orange] (2,0.8) -- (3,1.6);
\draw[densely dashed, ultra thick,color=orange] (4,0.8) -- (4,1.6);
\draw[densely dashed, ultra thick,color=orange] (4,0.8) -- (5,0.8);
\draw[densely dashed, ultra thick,color=orange] (4,0.8) -- (4,0);
\draw[densely dashed, ultra thick,color=orange] (4,0.8) -- (5,2.4);
\draw[densely dashed, ultra thick,color=orange] (5,0.8) -- (4.5,3.2);
\draw[densely dashed, ultra thick,color=orange] (5,0.8) -- (4,2.4);
\draw[densely dashed, ultra thick,color=orange] (5,0.8) -- (5,1.6);
\draw[densely dashed, ultra thick,color=orange] (7,0.8) -- (6,0);
\draw[densely dashed, ultra thick,color=orange] (7,0.8) -- (6,1.6);
\draw[densely dashed, ultra thick,color=orange] (7,0.8) -- (7,1.6);
\draw[densely dashed, ultra thick,color=orange] (7,0.8) -- (6.5,3.2);
\draw[densely dashed, ultra thick,color=orange] (7,0.8) .. controls (7.2,1.6) .. (7,2.4);
\draw[densely dashed, ultra thick,color=orange] (8,2.4) -- (9,0);
\draw[densely dashed, ultra thick,color=orange] (8,2.4) -- (9,1.6);
\draw[densely dashed, ultra thick,color=orange] (8,2.4) -- (8,1.6);
\draw[densely dashed, ultra thick,color=orange] (9,0) -- (8,0);
\draw[densely dashed, ultra thick,color=orange] (9,0) -- (9,0.8);
\draw[densely dashed, ultra thick,color=orange] (9,0) -- (8.5,3.2);
\draw[densely dashed, ultra thick,color=orange] (8,2.4) .. controls (7.8,1.6) .. (8,0.8);
\draw[densely dashed, ultra thick,color=orange] (1,2.4) .. controls (2,2.8) .. (3,2.4);
\draw[densely dashed, ultra thick,color=orange] (0,0.8) .. controls (1,1.2) .. (2,0.8);
\draw[densely dashed, ultra thick,color=orange] (1,0) .. controls (2,0.4) .. (3,0);
\draw[densely dashed, ultra thick,color=orange] (3,0) .. controls (4,0.4) .. (5,0);
\draw[densely dashed, ultra thick,color=orange] (3,0.8) .. controls (4,1.2) .. (5,0.8);
\draw[densely dashed, ultra thick,color=orange] (2,0.8) .. controls (3,1.2) .. (4,0.8);
\draw[densely dashed, ultra thick,color=orange] (4,0.8) .. controls (5,1.2) .. (6,0.8);
\draw[densely dashed, ultra thick,color=orange] (5,0.8) .. controls (6,1.2) .. (7,0.8);
\draw[densely dashed, ultra thick,color=orange] (7,0) .. controls (8,0.4) .. (9,0);
\draw[densely dashed, ultra thick,color=orange] (6,2.4) .. controls (7,2.8) .. (8,2.4);
\draw[densely dashed, ultra thick,color=orange] (9,0) .. controls (11,0.8) and (-1,0.8) .. (1,0);
\draw[densely dashed, ultra thick,color=orange] (9,2.4) .. controls (11,3.2) and (-1,3.2) .. (1,2.4);
\draw (0.5,3.2) -- (1,2.4);
\draw (0.5,3.2) -- (0,1.6);
\draw (0.5,3.2) -- (0,0);
\draw (0.5,3.2) -- (0,0.8);
\draw (0.5,3.2) -- (1,0.8);
\draw (1,0.8) -- (1,1.6);
\draw (1,0.8) -- (1,0);
\draw (1,0.8) -- (0,2.4);
\draw (0,0.8) -- (0.5,3.2);
\draw (3,0.8) -- (2,2.4);
\draw (3,0.8) .. controls (3.2,1.6) .. (3,2.4);
\draw (3,0.8) -- (2,0.8);
\draw (3,0.8) -- (2,0);
\draw (3,0.8) -- (3,0);
\draw (4,1.6) -- (5,1.6);
\draw (4,1.6) -- (4,2.4);
\draw (4,1.6) -- (4.5,3.2);
\draw (4,1.6) -- (5,0.8);
\draw (4,1.6) -- (5,0);
\draw (5,1.6) -- (5,2.4);
\draw (5,1.6) -- (4,0.8);
\draw (5,1.6) -- (4,0);
\draw (6,0) -- (6,0.8);
\draw (6,0) -- (7,0);
\draw (6,2.4) .. controls (5.6,1.2) .. (6,0);
\draw (6,0) -- (7,1.6);
\draw (7,1.6) -- (7,2.4);
\draw (7,1.6) -- (6,1.6);
\draw (7,1.6) -- (6.5,3.2);
\draw (9,0.8) -- (9,1.6);
\draw (9,0.8) -- (8,0.8);
\draw (9,0.8) -- (8,1.6);
\draw (9,0.8) -- (8,2.4);
\draw (9,1.6) -- (9,2.4);
\draw (9,1.6) -- (8.5,3.2);
\draw (9,1.6) .. controls (9.2,0.8) .. (9,0);
\draw (0.5,3.2) .. controls (1.5,3.6) .. (2.5,3.2);
\draw (1,0.8) .. controls (2,1.2) .. (3,0.8);
\draw (2,1.6) .. controls (3,2) .. (4,1.6);
\draw (3,1.6) .. controls (4,2) .. (5,1.6);
\draw (5,1.6) .. controls (6,2) .. (7,1.6);
\draw (7,1.6) .. controls (8,2) .. (9,1.6);
\draw (6,0) .. controls (7,0.4) .. (8,0);
\draw (7,0.8) .. controls (8,1.2) .. (9,0.8);
\draw (9,0.8) .. controls (11,1.6) and (-1,1.6) .. (1,0.8);
\draw[dashed, draw=red] (0,0.8) -- (1,0);
\draw[dashed, draw=red] (0,2.4) .. controls (-0.2,1.6) .. (0,0.8);
\draw[dashed, draw=red] (0,0.8) -- (1,0.8);
\draw[dashed, draw=red] (0,0.8) -- (1,1.6);
\draw[dashed, draw=red] (0,0.8) -- (1,2.4);
\draw[dashed, draw=red] (2,0) -- (2,0.8);
\draw[dashed, draw=red] (2,0) -- (3,2.4);
\draw[dashed, draw=red] (2,0) -- (2.5,3.2);
\draw[dashed, draw=red] (2,0) ..controls (1.8,0.8) .. (2,1.6);
\draw[dashed, draw=red] (2,1.6) -- (3,1.6);
\draw[dashed, draw=red] (2,1.6) -- (3,0.8);
\draw[dashed, draw=red] (2,1.6) -- (3,0);
\draw[dashed, draw=red] (2,1.6) -- (2,2.4);
\draw[dashed, draw=red] (4,0) .. controls (3.8,0.8).. (4,1.6);
\draw[dashed, draw=red] (4.5,3.2) -- (4,0);
\draw[dashed, draw=red] (4.5,3.2) -- (5,0);
\draw[dashed, draw=red] (4.5,3.2) -- (4,2.4);
\draw[dashed, draw=red] (4.5,3.2) -- (4,0.8);
\draw[dashed, draw=red] (4.5,3.2) -- (5,1.6);
\draw[dashed, draw=red] (4,0) -- (5,2.4);
\draw[dashed, draw=red] (4,0) -- (5,0.8);
\draw[dashed, draw=red] (6.5,3.2) -- (6,2.4);
\draw[dashed, draw=red] (6.5,3.2) -- (6,0.8);
\draw[dashed, draw=red] (6.5,3.2) -- (7,0);
\draw[dashed, draw=red] (6,2.4) -- (7,0.8);
\draw[dashed, draw=red] (6,2.4) -- (7,1.6);
\draw[dashed, draw=red] (6,2.4) -- (6,1.6);
\draw[dashed, draw=red] (6,2.4) -- (7,2.4);
\draw[dashed, draw=red] (8.5,3.2) -- (8,2.4);
\draw[dashed, draw=red] (8.5,3.2) -- (8,0.8);
\draw[dashed, draw=red] (8.5,3.2) -- (9,0.8);
\draw[dashed, draw=red] (8.5,3.2) -- (8,0);
\draw[dashed, draw=red] (8,0.8) -- (9,0);
\draw[dashed, draw=red] (8,0.8) -- (9,1.6);
\draw[dashed, draw=red] (8,0.8) -- (9,2.4);
\draw[dashed, draw=red] (8,0.8) -- (8,1.6);
\draw[dashed, draw=red] (0,1.6) .. controls (1,2) .. (2,1.6);
\draw[dashed, draw=red] (0,0) .. controls (1,0.4) .. (2,0);
\draw[dashed, draw=red] (2,0) .. controls (3,0.4) .. (4,0);
\draw[dashed, draw=red] (4,0) .. controls (5,0.4) .. (6,0);
\draw[dashed, draw=red] (4.5,3.2) .. controls (5.5,3.6) .. (6.5,3.2);
\draw[dashed, draw=red] (6.5,3.2) .. controls (7.5,3.6) .. (8.5,3.2);
\draw[dashed, draw=red] (8,0.8) .. controls (10,1.6) and (-2,1.6) .. (0,0.8);
\draw[dashed, draw=red] (8.5,3.2) .. controls (10.5,4) and (-1.5,4) .. (0.5,3.2);
\draw[dotted, thick,color=green] (0,1.6) -- (0,2.4);
\draw[dotted, thick,color=green] (0,1.6) -- (0,0.8);
\draw[dotted, thick,color=green] (0,1.6) -- (1,2.4);
\draw[dotted, thick,color=green] (0,1.6) -- (1,0.8);
\draw[dotted, thick,color=green] (0,1.6) .. controls (-0.2,0.8) .. (0,0);
\draw[dotted, thick,color=green] (0,2.4) -- (0.5,3.2);
\draw[dotted, thick,color=green] (0,2.4) -- (1,0);
\draw[dotted, thick,color=green] (0,2.4) -- (1,1.6);
\draw[dotted, thick,color=green] (2,2.4) -- (2.5,3.2);
\draw[dotted, thick,color=green] (2,2.4) -- (3,1.6);
\draw[dotted, thick,color=green] (2,2.4) -- (3,0);
\draw[dotted, thick,color=green] (2,2.4) .. controls (1.6,1.2) .. (2,0);
\draw[dotted, thick,color=green] (2.5,3.2) -- (3,2.4);
\draw[dotted, thick,color=green] (2.5,3.2) -- (2,1.6);
\draw[dotted, thick,color=green] (2.5,3.2) -- (2,0.8);
\draw[dotted, thick,color=green] (2.5,3.2) -- (3,0.8);
\draw[dotted, thick,color=green] (5,0) -- (4,0);
\draw[dotted, thick,color=green] (5,0) -- (4,0.8);
\draw[dotted, thick,color=green] (5,0) .. controls (5.2,0.8) .. (5,1.6);
\draw[dotted, thick,color=green] (5,0) -- (5,0.8);
\draw[dotted, thick,color=green] (5,0) .. controls (5.4,1.2) .. (5,2.4);
\draw[dotted, thick,color=green] (7,0) -- (7,0.8);
\draw[dotted, thick,color=green] (7,0) -- (6,1.6);
\draw[dotted, thick,color=green] (7,0) -- (6,2.4);
\draw[dotted, thick,color=green] (6,1.6) -- (6.5,3.2);
\draw[dotted, thick,color=green] (6,1.6) -- (6,0.8);
\draw[dotted, thick,color=green] (7,0) .. controls (7.4,1.2) .. (7,2.4);
\draw[dotted, thick,color=green] (7,0) .. controls (7.2,0.8) .. (7,1.6);
\draw[dotted, thick,color=green] (6,0) .. controls (5.8,0.8) .. (6,1.6);
\draw[dotted, thick,color=green] (8,0) -- (9,2.4);
\draw[dotted, thick,color=green] (8,0) -- (9,0.8);
\draw[dotted, thick,color=green] (8,0) -- (8,0.8);
\draw[dotted, thick,color=green] (8,0) .. controls (7.8,0.8) .. (8,1.6);
\draw[dotted, thick,color=green] (8,1.6) -- (9,1.6);
\draw[dotted, thick,color=green] (8,1.6) -- (9,0);
\draw[dotted, thick,color=green] (8,1.6) -- (8.5,3.2);
\draw[dotted, thick,color=green] (0,2.4) .. controls (1,2.8) .. (2,2.4);
\draw[dotted, thick,color=green] (2,2.4) .. controls (3,2.8) .. (4,2.4);
\draw[dotted, thick,color=green] (2.5,3.2) .. controls (3.5,3.6) .. (4.5,3.2);
\draw[dotted, thick,color=green] (4,1.6) .. controls (5,2) .. (6,1.6);
\draw[dotted, thick,color=green] (5,0) .. controls (6,0.4) .. (7,0);
\draw[dotted, thick,color=green] (6,1.6) .. controls (7,2) .. (8,1.6);
\draw[dotted, thick,color=green] (8,2.4) .. controls (10,3.2) and (-2,3.2) .. (0,2.4);
\draw[dotted, thick,color=green] (8,1.6) .. controls (10,2.4) and (-2,2.4) .. (0,1.6);
\node at (1,0) [diamond,draw=black,fill=orange,scale=0.7]{};
\node at (1,2.4) [diamond,draw=black,fill=orange,scale=0.7]{};
\node at (0,1.6) [regular polygon, regular polygon sides=5,draw=black,fill=green,scale=0.7]{};
\node at (0,2.4) [regular polygon, regular polygon sides=5,draw=black,fill=green,scale=0.7]{};
\node at (1,1.6) [regular polygon, regular polygon sides=4,draw=black,fill=blue,scale=0.7]{};
\node at (0,0) [regular polygon, regular polygon sides=4,draw=black,fill=blue,scale=0.7]{};
\node at (0,0.8) [regular polygon, regular polygon sides=3,draw=black,fill=red,scale=0.5]{};
\node at (1,0.8) [circle,draw=black,fill=black,scale=0.7]{};
\node at (0.5,3.2) [circle,draw=black,fill=black,scale=0.7]{};
\node at (0+2,0.8) [diamond,draw=black,fill=orange,scale=0.7]{};
\node at (1+2,0) [diamond,draw=black,fill=orange,scale=0.7]{};
\node at (0+2,2.4) [regular polygon, regular polygon sides=5,draw=black,fill=green,scale=0.7]{};
\node at (0.5+2,3.2) [regular polygon, regular polygon sides=5,draw=black,fill=green,scale=0.7]{};
\node at (1+2,1.6) [regular polygon, regular polygon sides=4,draw=black,fill=blue,scale=0.7]{};
\node at (1+2,2.4) [regular polygon, regular polygon sides=4,draw=black,fill=blue,scale=0.7]{};
\node at (0+2,0) [regular polygon, regular polygon sides=3,draw=black,fill=red,scale=0.5]{};
\node at (0+2,1.6) [regular polygon, regular polygon sides=3,draw=black,fill=red,scale=0.5]{};
\node at (1+2,0.8) [circle,draw=black,fill=black,scale=0.7]{};
\node at (0+4,0.8) [diamond,draw=black,fill=orange,scale=0.7]{};
\node at (1+4,0.8) [diamond,draw=black,fill=orange,scale=0.7]{};
\node at (1+4,0) [regular polygon, regular polygon sides=5,draw=black,fill=green,scale=0.7]{};
\node at (0+4,2.4) [regular polygon, regular polygon sides=4,draw=black,fill=blue,scale=0.7]{};
\node at (1+4,2.4) [regular polygon, regular polygon sides=4,draw=black,fill=blue,scale=0.7]{};
\node at (0+4,0) [regular polygon, regular polygon sides=3,draw=black,fill=red,scale=0.5]{};
\node at (0.5+4,3.2) [regular polygon, regular polygon sides=3,draw=black,fill=red,scale=0.5]{};
\node at (0+4,1.6) [circle,draw=black,fill=black,scale=0.7]{};
\node at (1+4,1.6) [circle,draw=black,fill=black,scale=0.7]{};
\node at (1+6,0.8) [diamond,draw=black,fill=orange,scale=0.7]{};
\node at (0+6,0.8) [regular polygon, regular polygon sides=4,draw=black,fill=blue,scale=0.7]{};
\node at (1+6,2.4) [regular polygon, regular polygon sides=4,draw=black,fill=blue,scale=0.7]{};
\node at (1+6,0) [regular polygon, regular polygon sides=5,draw=black,fill=green,scale=0.7]{};
\node at (0+6,1.6) [regular polygon, regular polygon sides=5,draw=black,fill=green,scale=0.7]{};
\node at (0+6,2.4) [regular polygon, regular polygon sides=3,draw=black,fill=red,scale=0.5]{};
\node at (0.5+6,3.2) [regular polygon, regular polygon sides=3,draw=black,fill=red,scale=0.5]{};
\node at (0+6,0) [circle,draw=black,fill=black,scale=0.7]{};
\node at (1+6,1.6) [circle,draw=black,fill=black,scale=0.7]{};
\node at (1+8,0) [diamond,draw=black,fill=orange,scale=0.7]{};
\node at (0+8,2.4) [diamond,draw=black,fill=orange,scale=0.7]{};
\node at (1+8,2.4) [regular polygon, regular polygon sides=4,draw=black,fill=blue,scale=0.7]{};
\node at (0+8,0) [regular polygon, regular polygon sides=5,draw=black,fill=green,scale=0.7]{};
\node at (0+8,1.6) [regular polygon, regular polygon sides=5,draw=black,fill=green,scale=0.7]{};
\node at (0.5+8,3.2) [regular polygon, regular polygon sides=3,draw=black,fill=red,scale=0.5]{};
\node at (0+8,0.8) [regular polygon, regular polygon sides=3,draw=black,fill=red,scale=0.5]{};
\node at (1+8,0.8) [circle,draw=black,fill=black,scale=0.7]{};
\node at (1+8,1.6) [circle,draw=black,fill=black,scale=0.7]{};
\end{tikzpicture}
\end{center}
\caption{Five completely independent spanning trees in $K_{9}\square C_5$.}
\label{K95}
\end{sidewaysfigure}
\begin{proof}
The five completely independent spanning trees in $K_{9}\square C_5$ are depicted in Figure~\ref{K95} and their edge sets are given in Appendix \ref{dek95}.
\end{proof}

We end this section with a theorem summarizing the results for $K_{m}\square C_n$.
Given a graph $G$, let $\text{mcist}(G)$ be the maximum integer $k$ such that there exist $k$ completely independent spanning trees in $G$.
\begin{theo}
Let $m\ge3$ and $n\ge3$ be integers. We have:\newline
$\text{mcist}(K_{m}\square C_n)=\left\{
    \begin{array}{ll}
        \lceil m/2\rceil,  & \mbox{if $(m=3,5\lor (m=7\land n=3,4) \lor (m=9\land n=4,5))$;} \\
        \lfloor m/2\rfloor,  & \mbox{otherwise.}
    \end{array}
\right.$
\end{theo}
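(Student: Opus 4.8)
The plan is to derive each of the two claimed values as matching upper and lower bounds, organised by the parity of $m$ and by whether $(m,n)$ is one of the listed exceptional pairs. Throughout write $G=K_{m}\square C_n$, which is $(m+1)$-regular on $mn$ vertices.

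For the upper bound, observe that $k$ completely independent spanning trees are in particular $k$ edge-disjoint spanning trees and hence use $k(mn-1)$ edges, so $k(mn-1)\le|E(G)|=(m+1)mn/2$. A short computation gives $\tfrac{(m+1)mn}{2(mn-1)}<\lceil m/2\rceil+1$ for all $m\ge3$, $n\ge3$, hence $\text{mcist}(G)\le\lceil m/2\rceil$ in every case; when $m$ is even the same inequality improves to $\text{mcist}(G)\le m/2=\lfloor m/2\rfloor$. For odd $m=2r-1$ lying outside the exceptional list — that is, $r\ge6$ with any $n\ge3$, or $r\in\{4,5\}$ with $n\ge r+1$, or $r=5$ with $n=3$ — Propositions~\ref{krgrand}, \ref{propk45} and~\ref{propk93} respectively assert that $G$ has no $r$ completely independent spanning trees, so there $\text{mcist}(G)\le r-1=\lfloor m/2\rfloor$.

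For the lower bounds in the generic regimes: if $m=2r$ is even, Corollary~\ref{existr-1} supplies $r=m/2$ completely independent spanning trees of $G$, matching the upper bound. If $m=2r-1$ is odd, I would start from the $r$ completely independent spanning trees $T_1,\dots,T_r$ of $K_{2r}\square P_n$ given by the explicit construction just before Corollary~\ref{existr-1}; in that construction the inner vertices of $T_i$ are exactly the two columns $i-1$ and $r+i-1$, so for $1\le i\le r-1$ the column $2r-1$ consists only of leaves of $T_i$ (each $u^j_{2r-1}$ is attached in $T_i$ to $u^j_{r+i-1}$). Deleting this column therefore turns $T_1,\dots,T_{r-1}$ into $r-1$ spanning trees of $K_{2r-1}\square P_n$ which remain pairwise edge-disjoint and still satisfy the ``inner vertex in at most one tree'' criterion of Theorem~\ref{allf} (removing leaves cannot violate it); since $K_{2r-1}\square P_n$ is a spanning subgraph of $G$, this gives $\text{mcist}(G)\ge r-1=\lfloor m/2\rfloor$. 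Together with the upper bounds, this settles $\text{mcist}(G)=\lfloor m/2\rfloor$ for all even $m$ and for all odd $m$ with $(m,n)$ not exceptional.

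It remains to exhibit $\lceil m/2\rceil$ completely independent spanning trees for the exceptional pairs, the matching upper bound $\lceil m/2\rceil$ being already available: for $m=3$ one has $K_3\square C_n=C_3\square C_n$, which carries two by~\cite{HA2012}; for $m=5$ and every $n\ge3$ use Propositions~\ref{propk53} and~\ref{propk5n}; and the four remaining pairs $(7,3),(7,4),(9,4),(9,5)$ are covered by Propositions~\ref{propk73}, \ref{propk74}, \ref{propk94} and~\ref{propk95}. Collecting all cases yields the stated formula. The real work sits not in this assembly but in the ingredients it quotes: the explicit spanning trees of Propositions~\ref{propk53}--\ref{propk95} (several located with an ILP solver) and the delicate inner-vertex-distribution arguments behind Propositions~\ref{krgrand}--\ref{propk93}. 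Given those, the only genuine care needed here is to check that the three regimes — even $m$, odd $m$ with $(m,n)$ exceptional, and odd $m$ with $(m,n)$ generic — partition all pairs with $m,n\ge3$, and that the correct negative proposition is invoked in each generic odd subcase.
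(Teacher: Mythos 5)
Your proof is correct, and its overall skeleton --- splitting into even $m$, the exceptional odd pairs, and the generic odd pairs, then invoking Corollary~\ref{existr-1}, \cite{HA2012} and Propositions~\ref{propk53}--\ref{propk95} for the constructions and Propositions~\ref{krgrand}, \ref{propk45}, \ref{propk93} for the negative results --- is the same as the paper's; your case analysis does partition all pairs $m,n\ge3$ and invokes the right negative proposition in each generic odd subcase. You deviate in two places, both defensibly. First, you make explicit the upper bound $\text{mcist}(G)\le\lceil m/2\rceil$ via the edge count $k(mn-1)\le mn(m+1)/2$; the paper never states this bound, although it is needed to turn the constructions for even $m$ and for the exceptional odd pairs into equalities, so spelling it out is a genuine (if routine) improvement. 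Second, for the generic odd case $m=2r-1$ you produce the $r-1=\lfloor m/2\rfloor$ trees by deleting the all-leaf column $2r-1$ from $T_1,\dots,T_{r-1}$ in the explicit construction for $K_{2r}\square P_n$, using that leaf deletion cannot violate the criterion of Theorem~\ref{allf}; your claim that for $i\le r-1$ every $u^j_{2r-1}$ is a leaf of $T_i$ hanging from $u^j_{r+i-1}$ checks out against the stated edge sets, since the inner positions of $T_i$ are $i-1\le r-2$ and $r+i-1\le 2r-2$. The paper instead takes the $r-1$ trees of $K_{m-1}\square C_n$ supplied by Corollary~\ref{existr-1} and attaches each of the $n$ extra vertices as a pendant leaf to an inner vertex of each tree inside its own $K$-copy. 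The two constructions are equally valid; yours trades the paper's extension step for a restriction step and avoids having to argue that each tree has an inner vertex in every $K$-copy, at the cost of leaning more heavily on the specific form of the $K_{2r}\square P_n$ construction.
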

\begin{proof}
For every even $m$, by Corollary \ref{existr-1}, there exist $m/2$ completely independent spanning trees.
Suppose $m$ is odd. 
For $m=3$, Hasunuma and Morisaka \cite{HA2012} has proven that in any Cartesian product of 2-connected graphs, there are two completely independent spanning trees.
By Propositions \ref{propk94}, \ref{propk95}, \ref{propk73}, \ref{propk74} and \ref{propk5n}, we obtain that there exist
$\lceil m/2\rceil$ completely independent spanning trees for $m=5$ or ($m=7\land n=3,4$) or ($m=9\land n=4,5$).

In the other cases, by Propositions \ref{krgrand}, \ref{propk45}, \ref{propk93}, there do not exist $\lceil m/2\rceil$ completely independent spanning trees in these graphs.
By Corollary \ref{existr-1}, there exist $\lfloor m/2\rfloor$ completely independent spanning trees in $K_{m-1}\square C_n$.
From these $\lfloor m/2\rfloor$ completely independent spanning trees in $K_{m-1}\square C_n$, we can construct $\lfloor m/2\rfloor$ completely independent spanning trees in $K_{m}\square C_n$.
The graph $K_{m}\square C_n$ contains $n$ vertices $u_0,\ldots, u_{n-1}$ not in $K_{m-1}\square C_n$, with $u_j\in V(K^{j})$ for $j=0,\ldots,n-1$.
For each $1\le i \le \lfloor m/2\rfloor$, it suffices to add an edge between $u_j$, $1\le j\le n$, and a vertex of $V_j(T_i)$ to obtain $\lfloor m/2\rfloor$ completely independent spanning trees in $K_{m}\square C_n$.
\end{proof}

\section{3-dimensional toroidal grids}
\label{sec:3Dgrid}
Hasunuma and Morisaka~\cite{HA2012} have shown that there are two completely independent spanning trees in any 2-dimensional toroidal grid and left as an open problem the question of whether there are $n$ completely independent spanning trees in any $n$-dimensional toroidal grid, for $n\ge 3$. In this section we give a partial answer for $n=3$ by finding three completely independent spanning trees in some 3-dimensional toroidal grids.

Let $n_1$, $n_2$ and $n_3$ be positive integers, $3\le n_1 \le n_2\le n_3$. The 3-dimensional toroidal grid $TM(n_1,n_2,n_3)$ is the Cartesian product of three cycles: $C_{n_1}\square C_{n_2}\square C_{n_3}$.
We let $V(TM(n_1,n_2,n_3))=\{(i,j,k)| 0\le i<n_1, 0\le j<n_2, 0\le k <n_3\}$ and $E(TM(n_1,n_2,n_3))=\{(i,j,k)\ (i',j',k')|i\equiv i'\pm1\pmod{n_1}, j=j',k=k'\lor i=i', j\equiv j'\pm1\pmod{n_2}, k=k'\lor i=i', j=j',k\equiv k'\pm1\pmod{n_3}\}$. 
In the remainder of the section, the integers $i,j$ and $k$ in a vertex $(i,j,k)$ are considered modulo $n_1$, $n_2$ and $n_3$, respectively.

By a {\em level} of $TM(3,3,q)$ we mean a subgraph of it induced by the vertices with the same third coordinate.

\begin{prop}
Let $p$, $p'$ and $q$ be positive integers such that $gcd(p,p',q)=1$. There exist three completely independent spanning trees in $TM(3p,3p',3q)$.
\label{tor333}
\end{prop}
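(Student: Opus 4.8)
The plan is to build the three trees explicitly and then invoke Theorem~\ref{allf}: it suffices to exhibit pairwise edge-disjoint spanning trees $T_1,T_2,T_3$ of $TM(3p,3p',3q)$ in which every vertex is an inner vertex of at most one of them. Since this graph is $6$-regular on $n=27pp'q$ vertices, the necessary conditions of Section~\ref{sec:necessaryConds} force any such triple to be tight: each $T_i$ must have exactly $n/3=9pp'q$ inner vertices and only $|E^{l}|=3$ edges of the torus can be left unused. So the construction amounts to partitioning $V$ into three connected dominating sets $\inn(T_1),\inn(T_2),\inn(T_3)$ and attaching the remaining (``leaf'') vertices by pairwise edge-disjoint single edges, consuming all but three edges.

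First I would treat a single fundamental $3\times3\times3$ block $C_3\square C_3\square C_3$ and exhibit there, by an explicit edge list and a figure (as for the $K_m\square C_n$ cases), a triple of completely independent spanning trees whose inner-vertex sets are three pairwise disjoint connected dominating sets covering all $27$ vertices. This fixed local configuration --- call it the \emph{pattern} --- also carries, for each tree, prescribed edges leaving the block along each of the three cyclic directions, through which neighbouring blocks will be linked.

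Next, tile $TM(3p,3p',3q)$ by $pp'q$ translates of the block, indexed by $(I,J,K)\in\mathbb{Z}_p\times\mathbb{Z}_{p'}\times\mathbb{Z}_q$, and place in block $(I,J,K)$ a copy of the pattern cyclically shifted by a rule depending on $(I,J,K)$. Here is where $\gcd(p,p',q)=1$ enters: it ensures that the block lattice $C_p\square C_{p'}\square C_q$ carries a single closed diagonal walk (the one stepping by $(1,1,1)$) meeting every block exactly once, and that the shift rule is consistent when carried once around each of the three cyclic directions. For each $i$, one then routes the inter-block ``backbone'' edges of $T_i$ along this diagonal --- re-routing, if needed, a bounded number of intra-block edges near the single closing seam --- so that the disjoint per-block forests merge into one spanning tree of the whole torus.

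The remaining verifications, in order, are: the shifted patterns are still valid locally; the diagonal visits every block exactly once (this step is precisely the use of $\gcd(p,p',q)=1$); adding the backbone of $T_i$ keeps it acyclic, connected and spanning; $T_1,T_2,T_3$ are globally edge-disjoint, which --- since per-block patterns are edge-disjoint and the three backbones use disjoint inter-block edges --- reduces to a finite check at the seam; and the inner-vertex sets still partition $V$, so no vertex has degree $>1$ in two trees. The hard part will be the global tree property together with edge-disjointness: one must make each tree's backbone close up into a tree (not a union of cycles, nor a subgraph containing a cycle) around three cyclic directions whose lengths become coprime only after dividing by $3$, while keeping the three edge sets disjoint throughout. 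That is where essentially all the genuine difficulty lies; the other checks are routine though lengthy, and are best conveyed by the explicit pattern and an accompanying figure.
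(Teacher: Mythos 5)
Your overall framing (the tightness of the inner-vertex counts from Section~\ref{sec:necessaryConds}, an explicit diagonal pattern, and a final appeal to Theorem~\ref{allf}) is in the spirit of the paper's proof, which is a single closed-form edge list obtained by translating a local pattern along the direction $(+1,-1,+1)$ (Figure~\ref{tor31}). But the one step where you actually use the hypothesis is false: $\gcd(p,p',q)=1$ does \emph{not} make the $(1,1,1)$-diagonal in the block lattice $C_p\square C_{p'}\square C_q$ visit every block exactly once. The orbit of $(1,1,1)$ in $\mathbb{Z}_p\times\mathbb{Z}_{p'}\times\mathbb{Z}_q$ has size $\mathrm{lcm}(p,p',q)$, and this equals $pp'q$ only when $p,p',q$ are \emph{pairwise} coprime, which is strictly stronger than the triple gcd being $1$. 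For instance $p=p'=2$, $q=3$ satisfies $\gcd(p,p',q)=1$, yet the diagonal walk meets only the $6$ blocks of the form $(t\bmod 2,\,t\bmod 2,\,t\bmod 3)$ out of $12$; every block $(1,0,k)$ or $(0,1,k)$ receives no backbone edge, so on $TM(6,6,9)$ each $T_i$ produced by your recipe would be disconnected. Since the diagonal is the only place your argument invokes the hypothesis, this is not repairable by a local fix at the seam: the backbone misses a positive fraction of the blocks entirely.

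A secondary issue is that the proposal defers essentially everything that must be verified (the $3\times3\times3$ pattern itself, the shift rule, the seam re-routing, global edge-disjointness), so even under pairwise coprimality it is a plan rather than a proof. For comparison, the paper does not tile by blocks: it writes down one parametrized family of nine edges per index $i$, anchored at the vertex $(i+j,\,j-i,\,i)$, and lets $i$ run along a vertex-level diagonal of the whole torus, invoking the gcd condition to guarantee that the translates of the pattern are distinct and cover every vertex. If you pursue your block version, you must either strengthen the hypothesis to pairwise coprimality or replace the single diagonal by a routing that provably reaches all $pp'q$ blocks under the weaker assumption.
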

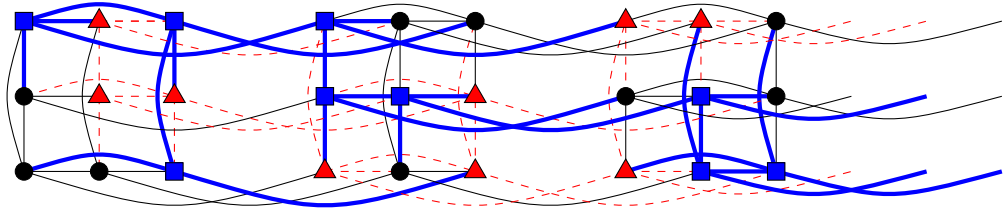
\begin{figure}[t]
\begin{center}
\begin{tikzpicture}
\draw (0,0) -- (1,0);
\draw (0,0) -- (0,1);
\draw (0,0) .. controls (-0.3,1) .. (0,2);
\draw (1,0) -- (2,0);
\draw (1,0) .. controls (0.7,1) .. (1,2);
\draw (0,1) -- (1,1);
\draw[style=dashed,color=red] (1,1) -- (2,1);
\draw[style=dashed,color=red] (1,1) -- (1,2);
\draw[style=dashed,color=red] (1,1) -- (1,0);
\draw[style=dashed,color=red] (2,1) -- (2,0);
\draw[style=dashed,color=red] (1,2) -- (2,2);
\draw[style=dashed,color=red] (1,2) -- (2,2);
\draw[style=dashed,color=red] (0,1) .. controls (1,1.3) .. (2,1);
\draw[ultra thick,color=blue] (2,2) .. controls (1.7,1) .. (2,0);
\draw[ultra thick,color=blue] (2,2) .. controls (1,2.3) .. (0,2);
\draw[ultra thick,color=blue] (0,0) .. controls (1,0.3) .. (2,0);
\draw[ultra thick,color=blue] (0,2) -- (0,1);
\draw[ultra thick,color=blue] (2,2) -- (2,1);
\draw[ultra thick,color=blue] (0,2) -- (1,2);
\draw (5,2) -- (6,2);
\draw (5,0) .. controls (4.7,1).. (5,2);
\draw (5,0) -- (6,0);
\draw (5,2) -- (5,1);
\draw (6,2) -- (6,1);
\draw (4,2) .. controls (5,2.3) .. (6,2);
\draw[style=dashed,color=red] (6,0) -- (6,1);
\draw[style=dashed,color=red] (4,0) .. controls (5,0.3) .. (6,0);
\draw[style=dashed,color=red] (4,0) -- (5,0);
\draw[style=dashed,color=red] (4,1) .. controls (5,1.3) .. (6,1);
\draw[style=dashed,color=red] (4,0) .. controls (3.7,1).. (4,2);
\draw[style=dashed,color=red] (6,0) .. controls (5.7,1).. (6,2);;
\draw[ultra thick,color=blue] (4,1) -- (4,2);
\draw[ultra thick,color=blue] (4,1) -- (5,1);
\draw[ultra thick,color=blue] (4,0) -- (4,1);
\draw[ultra thick,color=blue] (5,0) -- (5,1);
\draw[ultra thick,color=blue] (5,1) -- (6,1);
\draw[ultra thick,color=blue] (4,2) -- (5,2);
\draw (10,1) -- (10,2);
\draw (10,1) .. controls (9,1.3).. (8,1);
\draw (8,1) -- (8,0);
\draw (10,1) -- (10,0);
\draw (8,1) -- (9,1);
\draw (10,2) .. controls (9,2.3).. (8,2);
\draw[style=dashed,color=red] (8,2) -- (9,2);
\draw[style=dashed,color=red] (8,1) -- (8,2);
\draw[style=dashed,color=red] (9,1) -- (9,2);
\draw[style=dashed,color=red] (8,0) -- (9,0);
\draw[style=dashed,color=red] (9,2) -- (10,2);
\draw[style=dashed,color=red] (8,0) .. controls (7.7,1) .. (8,2);
\draw[ultra thick,color=blue] (9,0) -- (10,0);
\draw[ultra thick,color=blue] (9,0) -- (9,1);
\draw[ultra thick,color=blue] (9,1) -- (10,1);
\draw[ultra thick,color=blue] (9,0) .. controls (8.7,1) .. (9,2);
\draw[ultra thick,color=blue] (10,0) .. controls (9.7,1) .. (10,2);
\draw[ultra thick,color=blue] (10,0) .. controls (9,0.3).. (8,0);
\draw (0,0) .. controls (2,-0.6) .. (4,0);
\draw (1,0) .. controls (3,-0.6) .. (5,0);
\draw (0,1) .. controls (2,0.4) .. (4,1);
\draw[style=dashed,color=red] (1,1) .. controls (3,0.4) .. (5,1);
\draw[style=dashed,color=red] (2,1) .. controls (4,0.4) .. (6,1);
\draw[style=dashed,color=red] (1,2) .. controls (3,1.4) .. (5,2);
\draw[ultra thick,color=blue] (0,2) .. controls (2,1.4) .. (4,2);
\draw[ultra thick,color=blue] (2,2) .. controls (4,1.4) .. (6,2);
\draw[ultra thick,color=blue] (2,0) .. controls (4,-0.6) .. (6,0);
\draw (5,0) .. controls (7,-0.6) .. (9,0);
\draw (5,2) .. controls (7,1.4) .. (9,2);
\draw (6,2) .. controls (8,1.4) .. (10,2);
\draw[style=dashed,color=red] (4,0) .. controls (6,-0.6) .. (8,0);
\draw[style=dashed,color=red] (6,0) .. controls (8,-0.6) .. (10,0);
\draw[style=dashed,color=red] (6,1) .. controls (8,0.4) .. (10,1);
\draw[ultra thick,color=blue] (4,1) .. controls (6,0.4) .. (8,1);
\draw[ultra thick,color=blue] (5,1) .. controls (7,0.4) .. (9,1);
\draw[ultra thick,color=blue] (4,2) .. controls (6,1.4) .. (8,2);
\draw (8,1) .. controls (9.5,0.6) .. (11,1);
\draw (10,1) .. controls (11.5,0.6) .. (13,1);
\draw (10,2) .. controls (11.5,1.6) .. (13,2);
\draw[style=dashed,color=red] (8,0) .. controls (9.5,-0.4) .. (11,0);
\draw[style=dashed,color=red] (8,2) .. controls (9.5,1.6) .. (11,2);
\draw[style=dashed,color=red] (9,2) .. controls (10.5,1.6) .. (12,2);
\draw[ultra thick,color=blue] (9,0) .. controls (10.5,-0.4) .. (12,0);
\draw[ultra thick,color=blue] (9,1) .. controls (10.5,0.6) .. (12,1);
\draw[ultra thick,color=blue] (10,0) .. controls (11.5,-0.4) .. (13,0);

\node at (0,2) [regular polygon, regular polygon sides=4,draw=black,fill=blue,scale=0.7]{};
\node at (0,1) [circle,draw=black,fill=black,scale=0.7]{};
\node at (0,0) [circle,draw=black,fill=black,scale=0.7]{};
\node at (1,2) [regular polygon, regular polygon sides=3,draw=black,fill=red,scale=0.5]{};
\node at (1,1) [regular polygon, regular polygon sides=3,draw=black,fill=red,scale=0.5]{};
\node at (1,0) [circle,draw=black,fill=black,scale=0.7]{};
\node at (2,2) [regular polygon, regular polygon sides=4,draw=black,fill=blue,scale=0.7]{};
\node at (2,1) [regular polygon, regular polygon sides=3,draw=black,fill=red,scale=0.5]{};
\node at (2,0) [regular polygon, regular polygon sides=4,draw=black,fill=blue,scale=0.7]{};
\node at (4,2) [regular polygon, regular polygon sides=4,draw=black,fill=blue,scale=0.7]{};
\node at (5,2) [circle,draw=black,fill=black,scale=0.7]{};
\node at (5,0) [circle,draw=black,fill=black,scale=0.7]{};
\node at (4,0) [regular polygon, regular polygon sides=3,draw=black,fill=red,scale=0.5]{};
\node at (6,0) [regular polygon, regular polygon sides=3,draw=black,fill=red,scale=0.5]{};
\node at (6,2) [circle,draw=black,fill=black,scale=0.7]{};
\node at (4,1) [regular polygon, regular polygon sides=4,draw=black,fill=blue,scale=0.7]{};
\node at (6,1) [regular polygon, regular polygon sides=3,draw=black,fill=red,scale=0.5]{};
\node at (5,1) [regular polygon, regular polygon sides=4,draw=black,fill=blue,scale=0.7]{};
\node at (9,0) [regular polygon, regular polygon sides=4,draw=black,fill=blue,scale=0.7]{};
\node at (8,1) [circle,draw=black,fill=black,scale=0.7]{};
\node at (10,1) [circle,draw=black,fill=black,scale=0.7]{};
\node at (8,0) [regular polygon, regular polygon sides=3,draw=black,fill=red,scale=0.5]{};
\node at (8,2) [regular polygon, regular polygon sides=3,draw=black,fill=red,scale=0.5]{};
\node at (10,2) [circle,draw=black,fill=black,scale=0.7]{};
\node at (9,1) [regular polygon, regular polygon sides=4,draw=black,fill=blue,scale=0.7]{};
\node at (9,2) [regular polygon, regular polygon sides=3,draw=black,fill=red,scale=0.5]{};
\node at (10,0) [regular polygon, regular polygon sides=4,draw=black,fill=blue,scale=0.7]{};
\end{tikzpicture}
\end{center}
\caption{The pattern for the three completely independent spanning trees of $TM(3,3,3q)$, with $q\ge 2$.}
\label{tor31}
\end{figure}
\begin{proof}
We define three completely independent spanning trees $T_1$, $T_2$ and $T_3$ in $TM(3p,3p',3q)$ as follows:
for $j\in\{0,1, 2\}$,\newline
$E(T_{j-1})=\{(i+j,j-i,i) (1+i+j,-i+j,i),(i+j,j-i,i)(i+j,1-i+j,i), 
(i+j,j-i,i) (i+j,j-i,1+i),(i+j,j-i,i) (i+j,-1-i+j,i),$
$(1+i+j,j-i,i) (2+i+j,j-i,i),(1+i+j,j-i,i) (1+i+j,j-i,1+i),(1+i+j,j-i,i) (1+i,j-i-1,i),$
$(i+j,1-i+j,i) (1+i+j,1-i+j,i),(i+j,1-i+j,i) (i+j,1-i+j,1+i)|i\in\{0,\ldots,pp' q-1 \}-(j,j+1,0)(j,j+1,-1)$.\newline
We require $gcd(p,p',q)=1$, in order that $T_1$, $T_2$, $T_3$ contain every vertex of $TM(3p,3p',3q)$, i.e. every edge is different for each value of $i$, $0\le i \le pp' q-1$.
Figure \ref{tor31} describes the pattern on three levels for these three spanning trees for $p=1$ and $p'=1$.

\end{proof}

\begin{prop}
For any integer $q\ge 3$, there exists three completely independent spanning trees in $TM(3,3,q)$.
\end{prop}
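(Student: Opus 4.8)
The plan is to split on the residue of $q$ modulo $3$ and reduce everything to a handful of explicit finite constructions, exactly as was done for $K_5\square C_n$ in Proposition~\ref{propk5n}. For $q\equiv 0\pmod 3$ there is nothing new to do: applying Proposition~\ref{tor333} with $p=p'=1$ (note $\gcd(1,1,q')=1$ for every $q'$) already gives three completely independent spanning trees in $TM(3,3,3q')$ for all $q'\ge 1$, which is the expected count for a $6$-regular, $6$-connected graph. So it remains to handle $q=3m+1$ and $q=3m+2$ with $m\ge 1$.

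For $q=3m+1$ I would use the periodic three-level pattern of Figure~\ref{tor31} on the $3(m-1)$ levels $4,5,\dots,q-1$ and fill in the four levels $0,1,2,3$ with an exceptional piece of three trees found by a short computer search (as the authors did for the $K_m\square C_n$ constructions); for $q=3m+2$ the same, with a five-level exceptional piece on levels $0,\dots,4$ and the periodic pattern on levels $5,\dots,q-1$. The exceptional pieces should themselves be valid triples of completely independent spanning trees of $TM(3,3,4)$ and $TM(3,3,5)$, so the base cases $q=4,5$ (i.e.\ $m=1$) are included. Correctness at each of the at most two seams --- the junction between levels $3$ and $4$ (resp.\ $4$ and $5$) and the wrap-around between level $q-1$ and level $0$ --- is then checked via Theorem~\ref{allf}: one verifies that the three edge sets are pairwise disjoint, that every vertex is an inner vertex of at most one of the three subgraphs, and that each subgraph is connected (hence a spanning tree). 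Since both the periodic pattern and the exceptional pieces are explicit, this is a finite local verification that can be recorded in figures and an appendix listing the edge sets, in the style of Propositions~\ref{propk53}--\ref{propk95}.

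The main obstacle is purely combinatorial design: the exceptional four- and five-level blocks must be engineered so that the boundary profile of each tree --- which vertex of the boundary level is internal, and with what degree --- agrees with what the periodic pattern expects on the neighbouring level, \emph{and} simultaneously on the other side with the wrap-around, all while keeping each of the three trees connected across the cut. Note that this two-sided compatibility is automatically self-consistent for $q=4,5$, since the periodic pattern is itself self-splicing. It is precisely this constraint that prevents a single period from extending to these residue classes and that forces one to resort to an explicit (or computer-assisted) construction rather than a uniform formula; once a compatible pair of blocks is in hand, the remaining verification is routine but tedious.
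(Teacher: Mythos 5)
Your strategy is exactly the one the paper follows: dispose of $q\equiv 0\pmod 3$ via Proposition~\ref{tor333} with $p=p'=1$, and for the other two residues splice the periodic three-level pattern of Figure~\ref{tor31} with an exceptional four-level (resp.\ five-level) block, checking edge-disjointness, the ``at most one tree in which a vertex is inner'' condition of Theorem~\ref{allf}, and connectivity at the two seams. You also correctly identify the real design constraint (matching boundary profiles of inner vertices and degrees on both sides of the exceptional block, including the wrap-around) and the fact that the base cases $q=4,5$ are exactly the exceptional blocks taken alone.

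The gap is that you never actually produce the four- and five-level blocks: their existence with the required boundary profile is the entire mathematical content of the proof for $q\not\equiv 0\pmod 3$, and asserting that ``a short computer search'' will find them is not a proof that they exist. The paper closes precisely this gap by exhibiting the blocks explicitly (Figures~\ref{tor32} and~\ref{tor33}, with edge sets in Appendices~\ref{ap:tor33} and~\ref{ap:tor34}), after which the verification is, as you say, finite and routine. So the proposal is the right plan, identical in structure to the paper's, but it is a plan rather than a complete argument until the two exceptional blocks are written down and checked.
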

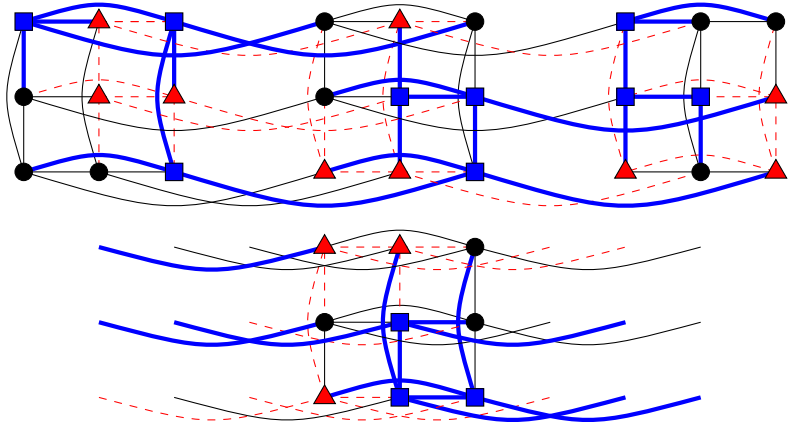
\begin{figure}[t]
\begin{center}
\begin{tikzpicture}
\draw (0,0) -- (1,0);
\draw (0,0) -- (0,1);
\draw (0,0) .. controls (-0.3,1) .. (0,2);
\draw (1,0) -- (2,0);
\draw (1,0) .. controls (0.7,1) .. (1,2);
\draw (0,1) -- (1,1);
\draw[style=dashed,color=red] (1,1) -- (2,1);
\draw[style=dashed,color=red] (1,1) -- (1,2);
\draw[style=dashed,color=red] (1,1) -- (1,0);
\draw[style=dashed,color=red] (2,1) -- (2,0);
\draw[style=dashed,color=red] (1,2) -- (2,2);
\draw[style=dashed,color=red] (1,2) -- (2,2);
\draw[style=dashed,color=red] (0,1) .. controls (1,1.3) .. (2,1);
\draw[ultra thick,color=blue] (2,2) .. controls (1.7,1) .. (2,0);
\draw[ultra thick,color=blue] (2,2) .. controls (1,2.3) .. (0,2);
\draw[ultra thick,color=blue] (0,0) .. controls (1,0.3) .. (2,0);
\draw[ultra thick,color=blue] (2,2) -- (2,1);
\draw[ultra thick,color=blue] (0,2) -- (0,1);
\draw[ultra thick,color=blue] (0,2) -- (1,2);
\draw (4,1) -- (4,2);
\draw (4,2) .. controls (5,2.3).. (6,2);
\draw (4,1) -- (5,1);
\draw (4,2) -- (5,2);
\draw (6,2) -- (6,1);
\draw (6,0) .. controls (5.7,1) .. (6,2);
\draw[style=dashed,color=red] (4,0) -- (5,0);
\draw[style=dashed,color=red] (5,0) .. controls (4.7,1).. (5,2);
\draw[style=dashed,color=red] (5,0) -- (6,0);
\draw[style=dashed,color=red] (5,2) -- (6,2);
\draw[style=dashed,color=red] (4,0) -- (4,1);
\draw[style=dashed,color=red] (4,0) .. controls (3.7,1).. (4,2);
\draw[ultra thick,color=blue] (6,0) -- (6,1);
\draw[ultra thick,color=blue] (5,1) -- (6,1);
\draw[ultra thick,color=blue] (5,0) -- (5,1);
\draw[ultra thick,color=blue] (5,1) -- (5,2);
\draw[ultra thick,color=blue] (4,0) .. controls (5,0.3).. (6,0);
\draw[ultra thick,color=blue] (4,1) .. controls (5,1.3).. (6,1);
\draw (9,2) -- (10,2);
\draw (9,0) .. controls (8.7,1).. (9,2);
\draw (9,0) -- (8,0);
\draw (9,0) -- (10,0);
\draw (9,1) -- (9,2);
\draw (10,1) -- (10,2);
\draw[style=dashed,color=red] (10,0) .. controls (9,0.3).. (8,0);
\draw[style=dashed,color=red] (10,0) -- (10,1);
\draw[style=dashed,color=red] (9,1) -- (10,1);
\draw[style=dashed,color=red] (10,1) .. controls (9,1.3).. (8,1);
\draw[style=dashed,color=red] (10,0) .. controls (9.7,1) .. (10,2);
\draw[style=dashed,color=red] (8,0) .. controls (7.7,1) .. (8,2);
\draw[ultra thick,color=blue] (8,1) -- (9,1);
\draw[ultra thick,color=blue] (8,1) -- (8,2);
\draw[ultra thick,color=blue] (8,0) -- (8,1);
\draw[ultra thick,color=blue] (9,0) -- (9,1);
\draw[ultra thick,color=blue] (8,2) -- (9,2);
\draw[ultra thick,color=blue] (10,2) .. controls (9,2.3).. (8,2);
\draw (0,0) .. controls (2,-0.6) .. (4,0);
\draw (1,0) .. controls (3,-0.6) .. (5,0);
\draw (0,1) .. controls (2,0.4) .. (4,1);
\draw[style=dashed,color=red] (1,1) .. controls (3,0.4) .. (5,1);
\draw[style=dashed,color=red] (2,1) .. controls (4,0.4) .. (6,1);
\draw[style=dashed,color=red] (1,2) .. controls (3,1.4) .. (5,2);
\draw[ultra thick,color=blue] (0,2) .. controls (2,1.4) .. (4,2);
\draw[ultra thick,color=blue] (2,2) .. controls (4,1.4) .. (6,2);
\draw[ultra thick,color=blue] (2,0) .. controls (4,-0.6) .. (6,0);
\draw (4,1) .. controls (6,0.4) .. (8,1);
\draw (4,2) .. controls (6,1.4) .. (8,2);
\draw[style=dashed,color=red] (5,0) .. controls (7,-0.6) .. (9,0);
\draw[style=dashed,color=red] (5,2) .. controls (7,1.4) .. (9,2);
\draw[ultra thick,color=blue] (6,0) .. controls (8,-0.6) .. (10,0);
\draw[ultra thick,color=blue] (6,1) .. controls (8,0.4) .. (10,1);

\node at (0,2) [regular polygon, regular polygon sides=4,draw=black,fill=blue,scale=0.7]{};
\node at (0,1) [circle,draw=black,fill=black,scale=0.7]{};
\node at (0,0) [circle,draw=black,fill=black,scale=0.7]{};
\node at (1,2) [regular polygon, regular polygon sides=3,draw=black,fill=red,scale=0.5]{};
\node at (1,1) [regular polygon, regular polygon sides=3,draw=black,fill=red,scale=0.5]{};
\node at (1,0) [circle,draw=black,fill=black,scale=0.7]{};
\node at (2,2) [regular polygon, regular polygon sides=4,draw=black,fill=blue,scale=0.7]{};
\node at (2,1) [regular polygon, regular polygon sides=3,draw=black,fill=red,scale=0.5]{};
\node at (2,0) [regular polygon, regular polygon sides=4,draw=black,fill=blue,scale=0.7]{};
\node at (6,1) [regular polygon, regular polygon sides=4,draw=black,fill=blue,scale=0.7]{};
\node at (4,2) [circle,draw=black,fill=black,scale=0.7]{};
\node at (4,1) [circle,draw=black,fill=black,scale=0.7]{};
\node at (4,0) [regular polygon, regular polygon sides=3,draw=black,fill=red,scale=0.5]{};
\node at (5,0) [regular polygon, regular polygon sides=3,draw=black,fill=red,scale=0.5]{};
\node at (6,2) [circle,draw=black,fill=black,scale=0.7]{};
\node at (6,0) [regular polygon, regular polygon sides=4,draw=black,fill=blue,scale=0.7]{};
\node at (5,2) [regular polygon, regular polygon sides=3,draw=black,fill=red,scale=0.5]{};
\node at (5,1) [regular polygon, regular polygon sides=4,draw=black,fill=blue,scale=0.7]{};
\node at (8,1) [regular polygon, regular polygon sides=4,draw=black,fill=blue,scale=0.7]{};
\node at (9,0) [circle,draw=black,fill=black,scale=0.7]{};
\node at (9,2) [circle,draw=black,fill=black,scale=0.7]{};
\node at (8,0) [regular polygon, regular polygon sides=3,draw=black,fill=red,scale=0.5]{};
\node at (10,0) [regular polygon, regular polygon sides=3,draw=black,fill=red,scale=0.5]{};
\node at (10,2) [circle,draw=black,fill=black,scale=0.7]{};
\node at (9,1) [regular polygon, regular polygon sides=4,draw=black,fill=blue,scale=0.7]{};
\node at (10,1) [regular polygon, regular polygon sides=3,draw=black,fill=red,scale=0.5]{};
\node at (8,2) [regular polygon, regular polygon sides=4,draw=black,fill=blue,scale=0.7]{};
\draw (6,-2) -- (6,-1);
\draw (6,-2) .. controls (5,-1.7).. (4,-2);
\draw (4,-2) -- (4,-3);
\draw (6,-2) -- (6,-3);
\draw (4,-2) -- (5,-2);
\draw (6,-1) .. controls (5,-0.7).. (4,-1);
\draw[style=dashed,color=red] (4,-1) -- (5,-1);
\draw[style=dashed,color=red] (4,-2) -- (4,-1);
\draw[style=dashed,color=red] (5,-2) -- (5,-1);
\draw[style=dashed,color=red] (4,-3) -- (5,-3);
\draw[style=dashed,color=red] (5,-1) -- (6,-1);
\draw[style=dashed,color=red] (4,-3) .. controls (3.7,-2) .. (4,-1);
\draw[ultra thick,color=blue] (5,-3) -- (6,-3);
\draw[ultra thick,color=blue] (5,-3) -- (5,-2);
\draw[ultra thick,color=blue] (5,-2) -- (6,-2);
\draw[ultra thick,color=blue] (5,-3) .. controls (4.7,-2) .. (5,-1);
\draw[ultra thick,color=blue] (6,-3) .. controls (5.7,-2) .. (6,-1);
\draw[ultra thick,color=blue] (6,-3) .. controls (5,-2.7).. (4,-3);
\draw (4,-2) .. controls (5.5,-2.4) .. (7,-2);
\draw (6,-2) .. controls (7.5,-2.4) .. (9,-2);
\draw (6,-1) .. controls (7.5,-1.4) .. (9,-1);
\draw[style=dashed,color=red] (4,-3) .. controls (5.5,-3.4) .. (7,-3);
\draw[style=dashed,color=red] (4,-1) .. controls (5.5,-1.4) .. (7,-1);
\draw[style=dashed,color=red] (5,-1) .. controls (6.5,-1.4) .. (8,-1);
\draw[ultra thick,color=blue] (5,-3) .. controls (6.5,-3.4) .. (8,-3);
\draw[ultra thick,color=blue] (5,-2) .. controls (6.5,-2.4) .. (8,-2);
\draw[ultra thick,color=blue] (6,-3) .. controls (7.5,-3.4) .. (9,-3);
\draw (2,-3) .. controls (3.5,-3.4) .. (5,-3);
\draw (2,-1) .. controls (3.5,-1.4) .. (5,-1);
\draw (3,-1) .. controls (4.5,-1.4) .. (6,-1);
\draw[style=dashed,color=red] (1,-3) .. controls (2.5,-3.4) .. (4,-3);
\draw[style=dashed,color=red] (3,-3) .. controls (4.5,-3.4) .. (6,-3);
\draw[style=dashed,color=red] (3,-2) .. controls (4.5,-2.4) .. (6,-2);
\draw[ultra thick,color=blue] (1,-2) .. controls (2.5,-2.4) .. (4,-2);
\draw[ultra thick,color=blue] (2,-2) .. controls (3.5,-2.4) .. (5,-2);
\draw[ultra thick,color=blue] (1,-1) .. controls (2.5,-1.4) .. (4,-1);
\node at (5,-3) [regular polygon, regular polygon sides=4,draw=black,fill=blue,scale=0.7]{};
\node at (4,-2) [circle,draw=black,fill=black,scale=0.7]{};
\node at (6,-2) [circle,draw=black,fill=black,scale=0.7]{};
\node at (4,-3) [regular polygon, regular polygon sides=3,draw=black,fill=red,scale=0.5]{};
\node at (4,-1) [regular polygon, regular polygon sides=3,draw=black,fill=red,scale=0.5]{};
\node at (6,-1) [circle,draw=black,fill=black,scale=0.7]{};
\node at (5,-2) [regular polygon, regular polygon sides=4,draw=black,fill=blue,scale=0.7]{};
\node at (5,-1) [regular polygon, regular polygon sides=3,draw=black,fill=red,scale=0.5]{};
\node at (6,-3) [regular polygon, regular polygon sides=4,draw=black,fill=blue,scale=0.7]{};
\end{tikzpicture}
\end{center}
\caption{The three completely independent spanning trees on the last four levels of $TM(3,3,q)$, for $q\equiv 1\pmod 3$ and $q>2$.}
\label{tor32}
\end{figure}
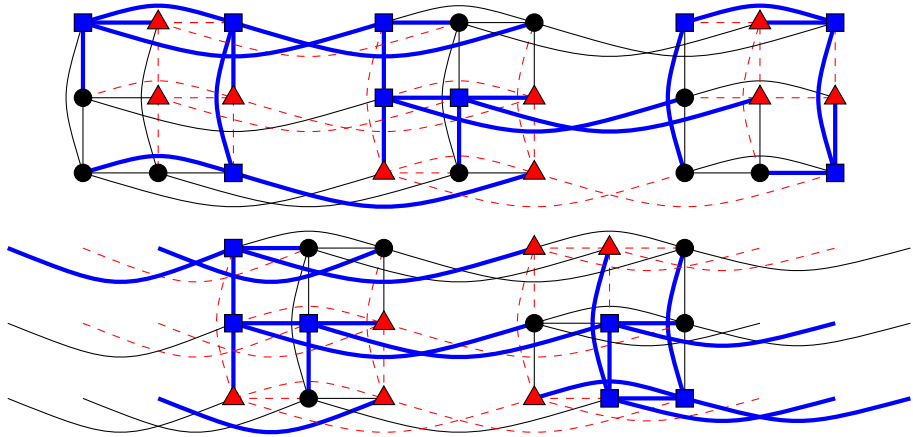
\begin{figure}[t]
\begin{center}
\begin{tikzpicture}
\draw (0,0) -- (1,0);
\draw (0,0) -- (0,1);
\draw (0,0) .. controls (-0.3,1) .. (0,2);
\draw (1,0) -- (2,0);
\draw (1,0) .. controls (0.7,1) .. (1,2);
\draw (0,1) -- (1,1);
\draw[style=dashed,color=red] (1,1) -- (2,1);
\draw[style=dashed,color=red] (1,1) -- (1,2);
\draw[style=dashed,color=red] (1,1) -- (1,0);
\draw[style=dashed,color=red] (2,1) -- (2,0);
\draw[style=dashed,color=red] (1,2) -- (2,2);
\draw[style=dashed,color=red] (1,2) -- (2,2);
\draw[style=dashed,color=red] (0,1) .. controls (1,1.3) .. (2,1);
\draw[ultra thick,color=blue] (2,2) .. controls (1.7,1) .. (2,0);
\draw[ultra thick,color=blue] (2,2) .. controls (1,2.3) .. (0,2);
\draw[ultra thick,color=blue] (0,0) .. controls (1,0.3) .. (2,0);
\draw[ultra thick,color=blue] (2,2) -- (2,1);
\draw[ultra thick,color=blue] (0,2) -- (0,1);
\draw[ultra thick,color=blue] (0,2) -- (1,2);
\draw (5,2) -- (6,2);
\draw (5,0) .. controls (4.7,1).. (5,2);
\draw (5,0) -- (6,0);
\draw (5,2) -- (5,1);
\draw (6,2) -- (6,1);
\draw (4,2) .. controls (5,2.3) .. (6,2);
\draw[style=dashed,color=red] (6,0) -- (6,1);
\draw[style=dashed,color=red] (4,0) .. controls (5,0.3) .. (6,0);
\draw[style=dashed,color=red] (4,0) -- (5,0);
\draw[style=dashed,color=red] (4,1) .. controls (5,1.3) .. (6,1);
\draw[style=dashed,color=red] (4,0) .. controls (3.7,1).. (4,2);
\draw[style=dashed,color=red] (6,0) .. controls (5.7,1).. (6,2);;
\draw[ultra thick,color=blue] (4,1) -- (4,2);
\draw[ultra thick,color=blue] (4,1) -- (5,1);
\draw[ultra thick,color=blue] (4,0) -- (4,1);
\draw[ultra thick,color=blue] (5,0) -- (5,1);
\draw[ultra thick,color=blue] (5,1) -- (6,1);
\draw[ultra thick,color=blue] (4,2) -- (5,2);
\draw (8,0) -- (9,0);
\draw (8,0) -- (8,1);
\draw (9,0) -- (9,1);
\draw (8,1) -- (8,2);
\draw (10,1) .. controls (9,1.3).. (8,1);
\draw (10,0) .. controls (9,0.3).. (8,0);
\draw[style=dashed,color=red] (9,1) -- (10,1);
\draw[style=dashed,color=red] (9,1) -- (9,2);
\draw[style=dashed,color=red] (8,1) -- (9,1);
\draw[style=dashed,color=red] (8,2) -- (9,2);
\draw[style=dashed,color=red] (10,1) -- (10,2);
\draw[style=dashed,color=red] (9,0) .. controls (8.7,1) .. (9,2);
\draw[ultra thick,color=blue] (9,0) -- (10,0);
\draw[ultra thick,color=blue] (9,2) -- (10,2);
\draw[ultra thick,color=blue] (10,0) -- (10,1);
\draw[ultra thick,color=blue] (8,0) .. controls (7.7,1) .. (8,2);
\draw[ultra thick,color=blue] (10,0) .. controls (9.7,1) .. (10,2);
\draw[ultra thick,color=blue] (10,2) .. controls (9,2.3).. (8,2);
\draw (0,0) .. controls (2,-0.6) .. (4,0);
\draw (1,0) .. controls (3,-0.6) .. (5,0);
\draw (0,1) .. controls (2,0.4) .. (4,1);
\draw[style=dashed,color=red] (1,1) .. controls (3,0.4) .. (5,1);
\draw[style=dashed,color=red] (2,1) .. controls (4,0.4) .. (6,1);
\draw[style=dashed,color=red] (1,2) .. controls (3,1.4) .. (5,2);
\draw[ultra thick,color=blue] (0,2) .. controls (2,1.4) .. (4,2);
\draw[ultra thick,color=blue] (2,2) .. controls (4,1.4) .. (6,2);
\draw[ultra thick,color=blue] (2,0) .. controls (4,-0.6) .. (6,0);
\draw (5,2) .. controls (7,1.4) .. (9,2);
\draw (6,2) .. controls (8,1.4) .. (10,2);
\draw[style=dashed,color=red] (4,0) .. controls (6,-0.6) .. (8,0);
\draw[style=dashed,color=red] (6,0) .. controls (8,-0.6) .. (10,0);
\draw[ultra thick,color=blue] (4,1) .. controls (6,0.4) .. (8,1);
\draw[ultra thick,color=blue] (5,1) .. controls (7,0.4) .. (9,1);
\node at (0,2) [regular polygon, regular polygon sides=4,draw=black,fill=blue,scale=0.7]{};
\node at (0,1) [circle,draw=black,fill=black,scale=0.7]{};
\node at (0,0) [circle,draw=black,fill=black,scale=0.7]{};
\node at (1,2) [regular polygon, regular polygon sides=3,draw=black,fill=red,scale=0.5]{};
\node at (1,1) [regular polygon, regular polygon sides=3,draw=black,fill=red,scale=0.5]{};
\node at (1,0) [circle,draw=black,fill=black,scale=0.7]{};
\node at (2,2) [regular polygon, regular polygon sides=4,draw=black,fill=blue,scale=0.7]{};
\node at (2,1) [regular polygon, regular polygon sides=3,draw=black,fill=red,scale=0.5]{};
\node at (2,0) [regular polygon, regular polygon sides=4,draw=black,fill=blue,scale=0.7]{};
\node at (4,1) [regular polygon, regular polygon sides=4,draw=black,fill=blue,scale=0.7]{};
\node at (5,2) [circle,draw=black,fill=black,scale=0.7]{};
\node at (5,0) [circle,draw=black,fill=black,scale=0.7]{};
\node at (4,0) [regular polygon, regular polygon sides=3,draw=black,fill=red,scale=0.5]{};
\node at (6,0) [regular polygon, regular polygon sides=3,draw=black,fill=red,scale=0.5]{};
\node at (6,2) [circle,draw=black,fill=black,scale=0.7]{};
\node at (4,2) [regular polygon, regular polygon sides=4,draw=black,fill=blue,scale=0.7]{};
\node at (6,1) [regular polygon, regular polygon sides=3,draw=black,fill=red,scale=0.5]{};
\node at (5,1) [regular polygon, regular polygon sides=4,draw=black,fill=blue,scale=0.7]{};
\node at (0+8,2) [regular polygon, regular polygon sides=4,draw=black,fill=blue,scale=0.7]{};
\node at (0+8,1) [circle,draw=black,fill=black,scale=0.7]{};
\node at (0+8,0) [circle,draw=black,fill=black,scale=0.7]{};
\node at (1+8,2) [regular polygon, regular polygon sides=3,draw=black,fill=red,scale=0.5]{};
\node at (1+8,1) [regular polygon, regular polygon sides=3,draw=black,fill=red,scale=0.5]{};
\node at (1+8,0) [circle,draw=black,fill=black,scale=0.7]{};
\node at (2+8,2) [regular polygon, regular polygon sides=4,draw=black,fill=blue,scale=0.7]{};
\node at (2+8,1) [regular polygon, regular polygon sides=3,draw=black,fill=red,scale=0.5]{};
\node at (2+8,0) [regular polygon, regular polygon sides=4,draw=black,fill=blue,scale=0.7]{};

\draw (5-2,2-3) -- (6-2,2-3);
\draw (5-2,0-3) .. controls (4.7-2,1-3).. (5-2,2-3);
\draw (5-2,0-3) -- (6-2,0-3);
\draw (5-2,2-3) -- (5-2,1-3);
\draw (6-2,2-3) -- (6-2,1-3);
\draw (4-2,2-3) .. controls (5-2,2.3-3) .. (6-2,2-3);
\draw[style=dashed,color=red] (6-2,0-3) -- (6-2,1-3);
\draw[style=dashed,color=red] (4-2,0-3) .. controls (5-2,0.3-3) .. (6-2,0-3);
\draw[style=dashed,color=red] (4-2,0-3) -- (5-2,0-3);
\draw[style=dashed,color=red] (4-2,1-3) .. controls (5-2,1.3-3) .. (6-2,1-3);
\draw[style=dashed,color=red] (4-2,0-3) .. controls (3.7-2,1-3).. (4-2,2-3);
\draw[style=dashed,color=red] (6-2,0-3) .. controls (5.7-2,1-3).. (6-2,2-3);
\draw[ultra thick,color=blue] (4-2,1-3) -- (4-2,2-3);
\draw[ultra thick,color=blue] (4-2,1-3) -- (5-2,1-3);
\draw[ultra thick,color=blue] (4-2,0-3) -- (4-2,1-3);
\draw[ultra thick,color=blue] (5-2,0-3) -- (5-2,1-3);
\draw[ultra thick,color=blue] (5-2,1-3) -- (6-2,1-3);
\draw[ultra thick,color=blue] (4-2,2-3) -- (5-2,2-3);
\draw (10-2,1-3) -- (10-2,2-3);
\draw (10-2,1-3) .. controls (9-2,1.3-3).. (8-2,1-3);
\draw (8-2,1-3) -- (8-2,0-3);
\draw (10-2,1-3) -- (10-2,0-3);
\draw (8-2,1-3) -- (9-2,1-3);
\draw (10-2,2-3) .. controls (9-2,2.3-3).. (8-2,2-3);
\draw[style=dashed,color=red] (8-2,2-3) -- (9-2,2-3);
\draw[style=dashed,color=red] (8-2,1-3) -- (8-2,2-3);
\draw[style=dashed,color=red] (9-2,1-3) -- (9-2,2-3);
\draw[style=dashed,color=red] (8-2,0-3) -- (9-2,0-3);
\draw[style=dashed,color=red] (9-2,2-3) -- (10-2,2-3);
\draw[style=dashed,color=red] (8-2,0-3) .. controls (7.7-2,1-3) .. (8-2,2-3);
\draw[ultra thick,color=blue] (9-2,0-3) -- (10-2,0-3);
\draw[ultra thick,color=blue] (9-2,0-3) -- (9-2,1-3);
\draw[ultra thick,color=blue] (9-2,1-3) -- (10-2,1-3);
\draw[ultra thick,color=blue] (9-2,0-3) .. controls (8.7-2,1-3) .. (9-2,2-3);
\draw[ultra thick,color=blue] (10-2,0-3) .. controls (9.7-2,1-3) .. (10-2,2-3);
\draw[ultra thick,color=blue] (10-2,0-3) .. controls (9-2,0.3-3).. (8-2,0-3);
\draw (0-1,0-3) .. controls (2-1.5,-0.6-3) .. (4-2,0-3);
\draw (1-1,0-3) .. controls (3-1.5,-0.6-3) .. (5-2,0-3);
\draw (0-1,1-3) .. controls (2-1.5,0.4-3) .. (4-2,1-3);
\draw[style=dashed,color=red] (1-1,1-3) .. controls (3-1.5,0.4-3) .. (5-2,1-3);
\draw[style=dashed,color=red] (2-1,1-3) .. controls (4-1.5,0.4-3) .. (6-2,1-3);
\draw[style=dashed,color=red] (1-1,2-3) .. controls (3-1.5,1.4-3) .. (5-2,2-3);
\draw[ultra thick,color=blue] (0-1,2-3) .. controls (2-1.5,1.4-3) .. (4-2,2-3);
\draw[ultra thick,color=blue] (2-1,2-3) .. controls (4-1.5,1.4-3) .. (6-2,2-3);
\draw[ultra thick,color=blue] (2-1,0-3) .. controls (4-1.5,-0.6-3) .. (6-2,0-3);
\draw (5-2,0-3) .. controls (7-2,-0.6-3) .. (9-2,0-3);
\draw (5-2,2-3) .. controls (7-2,1.4-3) .. (9-2,2-3);
\draw (6-2,2-3) .. controls (8-2,1.4-3) .. (10-2,2-3);
\draw[style=dashed,color=red] (4-2,0-3) .. controls (6-2,-0.6-3) .. (8-2,0-3);
\draw[style=dashed,color=red] (6-2,0-3) .. controls (8-2,-0.6-3) .. (10-2,0-3);
\draw[style=dashed,color=red] (6-2,1-3) .. controls (8-2,0.4-3) .. (10-2,1-3);
\draw[ultra thick,color=blue] (4-2,1-3) .. controls (6-2,0.4-3) .. (8-2,1-3);
\draw[ultra thick,color=blue] (5-2,1-3) .. controls (7-2,0.4-3) .. (9-2,1-3);
\draw[ultra thick,color=blue] (4-2,2-3) .. controls (6-2,1.4-3) .. (8-2,2-3);
\draw (8-2,1-3) .. controls (9.5-2,0.6-3) .. (11-2,1-3);
\draw (10-2,1-3) .. controls (11.5-2,0.6-3) .. (13-2,1-3);
\draw (10-2,2-3) .. controls (11.5-2,1.6-3) .. (13-2,2-3);
\draw[style=dashed,color=red] (8-2,0-3) .. controls (9.5-2,-0.4-3) .. (11-2,0-3);
\draw[style=dashed,color=red] (8-2,2-3) .. controls (9.5-2,1.6-3) .. (11-2,2-3);
\draw[style=dashed,color=red] (9-2,2-3) .. controls (10.5-2,1.6-3) .. (12-2,2-3);
\draw[ultra thick,color=blue] (9-2,0-3) .. controls (10.5-2,-0.4-3) .. (12-2,0-3);
\draw[ultra thick,color=blue] (9-2,1-3) .. controls (10.5-2,0.6-3) .. (12-2,1-3);
\draw[ultra thick,color=blue] (10-2,0-3) .. controls (11.5-2,-0.4-3) .. (13-2,0-3);
\node at (4-2,2-3) [regular polygon, regular polygon sides=4,draw=black,fill=blue,scale=0.7]{};
\node at (5-2,2-3) [circle,draw=black,fill=black,scale=0.7]{};
\node at (5-2,0-3) [circle,draw=black,fill=black,scale=0.7]{};
\node at (4-2,0-3) [regular polygon, regular polygon sides=3,draw=black,fill=red,scale=0.5]{};
\node at (6-2,0-3) [regular polygon, regular polygon sides=3,draw=black,fill=red,scale=0.5]{};
\node at (6-2,2-3) [circle,draw=black,fill=black,scale=0.7]{};
\node at (4-2,1-3) [regular polygon, regular polygon sides=4,draw=black,fill=blue,scale=0.7]{};
\node at (6-2,1-3) [regular polygon, regular polygon sides=3,draw=black,fill=red,scale=0.5]{};
\node at (5-2,1-3) [regular polygon, regular polygon sides=4,draw=black,fill=blue,scale=0.7]{};
\node at (9-2,0-3) [regular polygon, regular polygon sides=4,draw=black,fill=blue,scale=0.7]{};
\node at (8-2,1-3) [circle,draw=black,fill=black,scale=0.7]{};
\node at (10-2,1-3) [circle,draw=black,fill=black,scale=0.7]{};
\node at (8-2,0-3) [regular polygon, regular polygon sides=3,draw=black,fill=red,scale=0.5]{};
\node at (8-2,2-3) [regular polygon, regular polygon sides=3,draw=black,fill=red,scale=0.5]{};
\node at (10-2,2-3) [circle,draw=black,fill=black,scale=0.7]{};
\node at (9-2,1-3) [regular polygon, regular polygon sides=4,draw=black,fill=blue,scale=0.7]{};
\node at (9-2,2-3) [regular polygon, regular polygon sides=3,draw=black,fill=red,scale=0.5]{};
\node at (10-2,0-3) [regular polygon, regular polygon sides=4,draw=black,fill=blue,scale=0.7]{};
\end{tikzpicture}
\end{center}
\caption{The three completely independent spanning trees on the last five levels of $TM(3,3,q)$, for $q\equiv 2\pmod 3$ and $q>2$.}
\label{tor33}
\end{figure}
\begin{proof}
First, if $q\equiv 0\pmod{3}$, then Proposition \ref{tor333} allows us to conclude. 
For $q\equiv 1\pmod 3$ ($q\equiv 2\pmod 3$, respectively), we define three completely independent spanning trees by using the pattern of Proposition \ref{tor333} for every level except the last four (five, respectively) ones. 
If $q\equiv 1\pmod 3$, the trees are completed on the last four levels as depicted in Figure \ref{tor32} (the corresponding edge sets are given in Appendix \ref{ap:tor33}).
If $q\equiv 2\pmod 3$, the trees are completed on the last five levels as depicted in Figure \ref{tor33} (the corresponding edge sets are given in Appendix \ref{ap:tor34}).
\end{proof}
\section{Conclusion}
We conclude this paper by listing a few open problems:
\begin{enumerate}
 \item Determine conditions which ensure that there exist $r$ completely independent spanning trees in a graph.
 \item Does any $2r$-connected graph with sufficiently large girth admit $r$ completely independent spanning trees?
 \item Is it true that in every $4$-regular graph which is $4$-connected, there exist $2$ completely independent spanning trees?
 \item Does the 6-dimensional hypercube $Q_6=C_4\Box C_4 \Box C_4$ admit $3$ completely independent spanning trees?
\end{enumerate}

\newpage
\begin{appendices}
\section{Edge sets of the trees from Section \ref{sec:cartprod}}
\subsection{Three completely independent spanning trees in $K_{5}\square C_4$}\label{dek54}

$E(T_1)=\{u^{0}_{0} u^{0}_{3}, u^{0}_{0} u^{0}_{2}, u^{0}_{3} u^{0}_{1}, u^{0}_{3} u^{0}_{4}, u^{1}_{3} u^{1}_{1}, u^{1}_{3} u^{1}_{4}, u^{2}_{2} u^{2}_{1}, u^{2}_{2} u^{2}_{4}, u^{3}_{0} u^{3}_{2},  u^{3}_{0} u^{3}_{3}, u^{3}_{2} u^{3}_{1},  u^{3}_{2} u^{3}_{4},$ \newline
$ u^{0}_{0} u^{1}_{0}, u^{0}_{3} u^{1}_{3}, u^{1}_{2} u^{2}_{2}, u^{1}_{3} u^{2}_{3}, u^{2}_{0} u^{3}_{0}, u^{2}_{2} u^{3}_{2}, u^{3}_{0} u^{4}_{0}\};\newline$
$E(T_2)=\{u^{0}_{1} u^{0}_{0}, u^{0}_{1} u^{0}_{2}, u^{1}_{0} u^{1}_{4}, u^{1}_{0} u^{1}_{2}, u^{1}_{0} u^{1}_{3}, u^{1}_{4} u^{1}_{1}, u^{2}_{3} u^{2}_{4}, u^{2}_{3} u^{2}_{1}, u^{2}_{3} u^{2}_{2},  u^{2}_{4} u^{2}_{0}, u^{3}_{1} u^{3}_{3},  u^{3}_{1} u^{3}_{0},$ \newline
$ u^{3}_{1} u^{3}_{4}, u^{3}_{3} u^{3}_{2}, u^{0}_{4} u^{1}_{4}, u^{1}_{4} u^{2}_{4}, u^{2}_{3} u^{3}_{3}, u^{3}_{1} u^{4}_{1}, u^{3}_{3} u^{4}_{3}\};\newline$
$E(T_3)=\{u^{0}_{2} u^{0}_{4}, u^{0}_{2} u^{0}_{3}, u^{0}_{4} u^{0}_{0}, u^{0}_{4} u^{0}_{1}, u^{1}_{1} u^{1}_{2}, u^{1}_{1} u^{1}_{0}, u^{1}_{2} u^{1}_{3}, u^{1}_{2} u^{1}_{4}, u^{2}_{0} u^{2}_{1},  u^{2}_{0} u^{2}_{2}, u^{2}_{0} u^{2}_{3},  u^{2}_{1} u^{2}_{4},$ \newline
$ u^{3}_{4} u^{3}_{0}, u^{3}_{4} u^{3}_{3}, u^{0}_{2} u^{1}_{2}, u^{1}_{1} u^{2}_{1}, u^{2}_{1} u^{3}_{1}, u^{3}_{2} u^{4}_{2}, u^{3}_{4} u^{4}_{4}\}.$

\subsection{Three completely independent spanning trees in $K_{5}\square C_5$}\label{dek55}

$E(T_1)=\{u^{0}_{0} u^{0}_{3}, u^{0}_{0} u^{0}_{2}, u^{0}_{3} u^{0}_{1}, u^{0}_{3} u^{0}_{4}, u^{1}_{0} u^{1}_{3}, u^{1}_{0} u^{1}_{2}, u^{1}_{0} u^{1}_{4}, u^{1}_{3} u^{1}_{1}, u^{2}_{2} u^{2}_{0},  u^{2}_{2} u^{2}_{1}, u^{3}_{2} u^{3}_{4},  u^{3}_{2} u^{3}_{0},$ \newline
$u^{3}_{4} u^{3}_{1}, u^{3}_{4} u^{3}_{3}, u^{4}_{0} u^{4}_{2}, u^{4}_{0} u^{4}_{1}, u^{4}_{2} u^{4}_{3}, u^{4}_{2} u^{4}_{4}, u^{0}_{0} u^{1}_{0} , u^{1}_{3} u^{2}_{3}, u^{2}_{2} u^{3}_{2}, u^{2}_{4} u^{3}_{4}, u^{3}_{2} u^{4}_{2}, u^{4}_{0} u^{5}_{0} \}$; \newline
$E(T_2)=\{u^{0}_{1} u^{0}_{0}, u^{0}_{1} u^{0}_{2}, u^{1}_{4} u^{1}_{2}, u^{1}_{4} u^{1}_{3}, u^{2}_{0} u^{2}_{4}, u^{2}_{0} u^{2}_{3}, u^{2}_{4} u^{2}_{1}, u^{2}_{4} u^{2}_{2}, u^{3}_{0} u^{3}_{3},  u^{3}_{0} u^{3}_{4}, u^{3}_{3} u^{3}_{1},  u^{3}_{3} u^{3}_{2},$ \newline
$u^{4}_{1} u^{4}_{3}, u^{4}_{1} u^{4}_{2}, u^{4}_{1} u^{4}_{4}, u^{4}_{3} u^{4}_{0}, u^{0}_{1} u^{1}_{1}, u^{0}_{4} u^{1}_{4}, u^{1}_{0} u^{2}_{0} , u^{1}_{4} u^{2}_{4}, u^{2}_{0} u^{3}_{0}, u^{3}_{3} u^{4}_{3}, u^{4}_{1} u^{5}_{1}, u^{4}_{3} u^{5}_{3} \}$;
\newline
$E(T_3)=\{u^{0}_{2} u^{0}_{4}, u^{0}_{2} u^{0}_{3}, u^{0}_{4} u^{0}_{0}, u^{0}_{4} u^{0}_{1}, u^{1}_{1} u^{1}_{2}, u^{1}_{1} u^{1}_{0}, u^{1}_{1} u^{1}_{4}, u^{1}_{2} u^{1}_{3}, u^{2}_{1} u^{2}_{3},  u^{2}_{1} u^{2}_{0}, u^{2}_{3} u^{2}_{2},  u^{2}_{3} u^{2}_{4},$ \newline
$u^{3}_{1} u^{3}_{0}, u^{3}_{1} u^{3}_{2}, u^{4}_{4} u^{4}_{0}, u^{4}_{4} u^{4}_{3}, u^{0}_{2} u^{1}_{2}, u^{1}_{1} u^{2}_{1}, u^{2}_{1} u^{3}_{1} , u^{2}_{3} u^{3}_{3}, u^{3}_{1} u^{4}_{1}, u^{3}_{4} u^{4}_{4}, u^{4}_{2} u^{5}_{2}, u^{4}_{4} u^{5}_{4} \}$.

\subsection{Four completely independent spanning trees in $K_{7}\square C_3$}\label{dek73}
$E(T_1)=\{u^{0}_{0} u^{0}_{1},u^{0}_{0} u^{0}_{3}, u^{0}_{0} u^{0}_{5},
u^{0}_{0} u^{0}_{6}, u^{1}_{0}  u^{1}_{2},
u^{1}_{0}  u^{1}_{4}, u^{1}_{0} u^{1}_{5},u^{1}_{2} u^{1}_{1},
 u^{1}_{2} u^{1}_{3}, u^{1}_{2} u^{1}_{6}, u^{2}_{2} u^{2}_{4}, u^{2}_{2} u^{2}_{5},$\newline
$u^{2}_{2} u^{2}_{6}, u^{2}_{4} u^{2}_{0}, u^{2}_{4} u^{2}_{1}, u^{2}_{4} u^{2}_{3},
u^{0}_{0} u^{1}_{0}, u^{1}_{2} u^{2}_{2}, u^{0}_{2} u^{2}_{2}, 
u^{0}_{4} u^{2}_{4}\}$; \newline
$E(T_2)=\{u^{0}_{1} u^{0}_{2}, u^{0}_{1} u^{0}_{3},u^{0}_{1} u^{0}_{5}, u^{0}_{2} u^{0}_{0},
u^{0}_{2} u^{0}_{4}, u^{0}_{2}  u^{0}_{6},
u^{1}_{6}  u^{1}_{0}, u^{1}_{6} u^{1}_{3},u^{1}_{6} u^{1}_{4},
u^{1}_{6} u^{1}_{5}, u^{2}_{1} u^{2}_{0},  u^{2}_{1} u^{2}_{2},$\newline
$u^{2}_{1} u^{2}_{3},u^{2}_{1} u^{2}_{6}, u^{2}_{6} u^{2}_{4}, u^{2}_{6} u^{2}_{5},
u^{0}_{1} u^{1}_{1}, u^{0}_{2} u^{1}_{2}, u^{1}_{6} u^{2}_{6}, 
u^{0}_{1} u^{2}_{1}\}$; \newline
$E(T_3)=\{u^{0}_{3} u^{0}_{2},u^{0}_{3} u^{0}_{4}, u^{0}_{3} u^{0}_{5},
u^{0}_{4} u^{0}_{0}, u^{0}_{4}  u^{0}_{1}, u^{0}_{4}  u^{0}_{6},
u^{1}_{1}  u^{1}_{0}, u^{1}_{1} u^{1}_{3},u^{1}_{1} u^{1}_{4},
u^{1}_{1} u^{1}_{6}, u^{1}_{4} u^{1}_{2},  u^{1}_{4} u^{1}_{5},$\newline
$u^{2}_{3} u^{2}_{0},u^{2}_{3} u^{2}_{2}, u^{2}_{3} u^{2}_{5}, u^{2}_{3} u^{2}_{6},
u^{0}_{4} u^{1}_{4}, u^{1}_{1} u^{2}_{1}, u^{1}_{4} u^{2}_{4}, 
u^{0}_{3} u^{2}_{3}\}$; \newline
$E(T_4)=\{u^{0}_{5} u^{0}_{2},u^{0}_{5} u^{0}_{4}, u^{0}_{5} u^{0}_{6},
u^{0}_{6} u^{0}_{1}, u^{0}_{6}  u^{0}_{3},
u^{1}_{3}  u^{1}_{0}, u^{1}_{3} u^{1}_{4},u^{1}_{3} u^{1}_{5},
u^{1}_{5} u^{1}_{1}, u^{1}_{5} u^{1}_{2},  u^{2}_{0} u^{2}_{2},u^{2}_{0} u^{2}_{5},$\newline
$u^{2}_{0} u^{2}_{6}, u^{2}_{5} u^{2}_{1}, u^{2}_{5} u^{2}_{4},
u^{0}_{5} u^{1}_{5}, u^{0}_{6} u^{1}_{6}, u^{1}_{3} u^{2}_{3}, u^{1}_{5} u^{2}_{5},
u^{0}_{0} u^{2}_{0}\}$.
\subsection{Four completely independent spanning trees in $K_{7}\square C_4$}
\label{dek74}
$E(T_1)=\{u^{0}_{0} u^{0}_{1},u^{0}_{0} u^{0}_{3}, u^{0}_{0} u^{0}_{5},
u^{0}_{0} u^{0}_{6}, u^{1}_{0}  u^{1}_{1},
u^{1}_{0}  u^{1}_{1}, u^{1}_{0} u^{1}_{2},u^{1}_{0} u^{1}_{3},
 u^{1}_{2} u^{1}_{5}, u^{1}_{2} u^{1}_{6},  u^{2}_{2} u^{2}_{3} , u^{2}_{2} u^{2}_{5},u^{2}_{2} u^{2}_{6},$\newline
$ u^{2}_{5} u^{2}_{0}, u^{2}_{5} u^{2}_{1}, u^{2}_{5} u^{2}_{4},
u^{3}_{4} u^{3}_{3}, u^{3}_{4} u^{3}_{5}, u^{3}_{4} u^{3}_{6}, u^{3}_{5} u^{3}_{0},
u^{3}_{5} u^{3}_{1}, u^{0}_{0} u^{1}_{0}, u^{0}_{2} u^{1}_{2}, u^{1}_{2} u^{2}_{2}, 
u^{2}_{2} u^{3}_{2}, u^{2}_{5} u^{3}_{5}, u^{0}_{4} u^{3}_{4}\}$ \newline
$E(T_2)=\{u^{0}_{1} u^{0}_{2},u^{0}_{1} u^{0}_{4}, u^{0}_{1} u^{0}_{5},
u^{0}_{2} u^{0}_{0}, u^{0}_{2}  u^{0}_{3},
u^{0}_{2}  u^{0}_{6}, u^{1}_{1} u^{1}_{2},u^{1}_{1} u^{1}_{4},
u^{1}_{1} u^{1}_{6}, u^{1}_{6} u^{1}_{0},  u^{1}_{6} u^{1}_{3},u^{1}_{6} u^{1}_{5}, u^{2}_{4} u^{2}_{2},$\newline
$ u^{2}_{4} u^{2}_{3}, u^{2}_{4} u^{2}_{6},
u^{2}_{6} u^{2}_{0}, u^{2}_{6} u^{2}_{5}, u^{3}_{2} u^{3}_{0}, u^{3}_{2} u^{3}_{3},
u^{3}_{2} u^{3}_{4}, u^{3}_{2} u^{3}_{5}, u^{0}_{1} u^{1}_{1}, u^{1}_{1} u^{2}_{1}, 
u^{1}_{6} u^{2}_{6}, u^{2}_{6} u^{3}_{6}, u^{0}_{1} u^{3}_{1},
u^{0}_{2} u^{3}_{2} \}$ \newline
$E(T_3)=\{u^{0}_{3} u^{0}_{4},u^{0}_{3} u^{0}_{1}, u^{0}_{3} u^{0}_{5},
u^{0}_{4} u^{0}_{0}, u^{0}_{4}  u^{0}_{2},
u^{0}_{4}  u^{0}_{6}, u^{1}_{4} u^{1}_{2},u^{1}_{4} u^{1}_{5},
u^{1}_{4} u^{1}_{6}, u^{1}_{5} u^{1}_{0},  u^{1}_{5} u^{1}_{1},u^{2}_{1} u^{2}_{0}, u^{2}_{1} u^{2}_{2},$\newline
$ u^{2}_{1} u^{2}_{4}, u^{2}_{1} u^{2}_{6},
u^{3}_{1} u^{3}_{0}, u^{3}_{1} u^{3}_{2}, u^{3}_{1} u^{3}_{3}, u^{3}_{1} u^{3}_{4},
u^{3}_{3} u^{3}_{5}, u^{3}_{3} u^{3}_{7}, u^{0}_{3} u^{1}_{3}, u^{0}_{4} u^{1}_{4}, 
u^{1}_{5} u^{2}_{5}, u^{2}_{1} u^{3}_{1}, u^{2}_{3} u^{3}_{3},
u^{0}_{3} u^{3}_{3} \}$ \newline
$E(T_4)=\{u^{0}_{5} u^{0}_{2},u^{0}_{5} u^{0}_{4}, u^{0}_{5} u^{0}_{6},
u^{0}_{6} u^{0}_{1}, u^{0}_{6}  u^{0}_{3},
u^{1}_{3}  u^{1}_{1}, u^{1}_{3} u^{1}_{2},u^{1}_{3} u^{1}_{4},
u^{1}_{3} u^{1}_{5}, u^{2}_{0} u^{2}_{2},  u^{2}_{0} u^{2}_{3},u^{2}_{0} u^{2}_{4}, u^{2}_{3} u^{2}_{1},$\newline
$ u^{2}_{3} u^{2}_{5}, u^{2}_{3} u^{2}_{6},
u^{3}_{0} u^{3}_{3}, u^{3}_{0} u^{3}_{4}, u^{3}_{0} u^{3}_{6}, u^{3}_{6} u^{3}_{1},
u^{3}_{6} u^{3}_{2}, u^{3}_{6} u^{3}_{5}, u^{0}_{6} u^{1}_{6}, u^{1}_{0} u^{2}_{0}, 
u^{1}_{3} u^{2}_{3}, u^{2}_{0} u^{3}_{0}, u^{0}_{0} u^{3}_{0},
u^{0}_{6} u^{3}_{6} \}$.
\subsection{Five completely independent spanning trees in $K_{9}\square C_4$}
\label{dek94}
$E(T_1)=\{u^{0}_{0} u^{0}_{2},u^{0}_{0} u^{0}_{4}, u^{0}_{0} u^{0}_{5},
u^{0}_{0}  u^{0}_{8}, u^{0}_{5} u^{0}_{1}, u^{0}_{5} u^{0}_{3}, u^{0}_{5} u^{0}_{6}, u^{0}_{5} u^{0}_{7},
u^{1}_{0} u^{1}_{2}, u^{1}_{0} u^{1}_{4}, u^{1}_{0}  u^{1}_{6},
u^{1}_{0}  u^{1}_{7}, u^{1}_{4} u^{1}_{1},$\newline
$u^{1}_{4} u^{1}_{3}, u^{1}_{4} u^{1}_{5}, u^{1}_{4} u^{1}_{8},
 u^{2}_{0} u^{2}_{3} , u^{2}_{0} u^{2}_{4},u^{2}_{0} u^{2}_{6},
u^{2}_{4} u^{2}_{1}, u^{2}_{4} u^{2}_{8}, u^{2}_{8} u^{2}_{2},
u^{2}_{8} u^{2}_{5}, u^{2}_{8} u^{2}_{7},
u^{3}_{8} u^{3}_{1}, u^{3}_{8} u^{3}_{2}, u^{3}_{8} u^{3}_{3}, u^{3}_{8} u^{3}_{6},$\newline
$u^{3}_{8} u^{3}_{7}, u^{0}_{0} u^{1}_{0},u^{1}_{0} u^{2}_{0},
u^{2}_{4} u^{3}_{4},u^{2}_{8} u^{3}_{8},u^{0}_{0} u^{3}_{0},u^{0}_{5} u^{3}_{5}\};$ \newline
$E(T_2)=\{u^{0}_{3} u^{0}_{0},u^{0}_{3} u^{0}_{4}, u^{0}_{3} u^{0}_{7},
u^{0}_{3}  u^{0}_{8}, u^{0}_{4} u^{0}_{1}, u^{0}_{4} u^{0}_{2}, u^{0}_{4} u^{0}_{5}, u^{0}_{4} u^{0}_{6},
u^{1}_{1} u^{1}_{0}, u^{1}_{1} u^{1}_{2}, u^{1}_{1}  u^{1}_{6},
u^{1}_{1}  u^{1}_{7}, u^{1}_{1} u^{1}_{8},$\newline
$u^{2}_{1} u^{2}_{0}, u^{2}_{1} u^{2}_{2}, u^{2}_{1} u^{2}_{5},
 u^{2}_{1} u^{2}_{7} , u^{2}_{1} u^{2}_{8},u^{2}_{5} u^{2}_{4},
u^{2}_{5} u^{2}_{6}, u^{3}_{3} u^{3}_{2}, u^{3}_{3} u^{3}_{5},
u^{3}_{3} u^{3}_{6}, u^{3}_{3} u^{3}_{7},
u^{3}_{5} u^{3}_{0}, u^{3}_{5} u^{3}_{1}, u^{3}_{5} u^{3}_{4}, u^{3}_{5} u^{3}_{8},$\newline
$u^{0}_{3} u^{1}_{3}, u^{0}_{4} u^{1}_{4},u^{1}_{1} u^{2}_{1},
u^{1}_{5} u^{2}_{5},u^{2}_{3} u^{3}_{3},u^{2}_{5} u^{3}_{5},u^{0}_{3} u^{3}_{3}\};$ \newline
$E(T_3)=\{u^{0}_{2} u^{0}_{1},u^{0}_{2} u^{0}_{3}, u^{0}_{2} u^{0}_{5},
u^{0}_{2}  u^{0}_{7}, u^{0}_{7} u^{0}_{0}, u^{0}_{7} u^{0}_{4}, u^{0}_{7} u^{0}_{6}, u^{0}_{7} u^{0}_{8},
u^{1}_{2} u^{1}_{3}, u^{1}_{2} u^{1}_{4}, u^{1}_{2}  u^{1}_{5},
u^{1}_{2}  u^{1}_{6}, u^{1}_{3} u^{1}_{0},$\newline
$ u^{1}_{3} u^{1}_{1}, u^{1}_{3} u^{1}_{7},
 u^{1}_{3} u^{1}_{8} , u^{2}_{3} u^{2}_{1},u^{2}_{3} u^{2}_{4},
u^{2}_{3} u^{2}_{5}, u^{2}_{3} u^{2}_{6}, u^{2}_{6} u^{2}_{8},
u^{3}_{0} u^{3}_{2}, u^{3}_{0} u^{3}_{3},
u^{3}_{0} u^{3}_{4}, u^{3}_{0} u^{3}_{7}, u^{3}_{0} u^{3}_{8},u^{3}_{7} u^{3}_{1},u^{3}_{7} u^{3}_{5},$\newline
$ u^{3}_{7} u^{3}_{6},u^{0}_{2} u^{1}_{2},
u^{1}_{2} u^{2}_{2},u^{1}_{3} u^{2}_{3},u^{2}_{0} u^{3}_{0},u^{2}_{7} u^{3}_{7}, u^{0}_{7} u^{3}_{7} \};$ \newline
$E(T_4)=\{u^{0}_{6} u^{0}_{0},u^{0}_{6} u^{0}_{1}, u^{0}_{6} u^{0}_{2},
u^{0}_{6} u^{0}_{3}, u^{0}_{6}  u^{0}_{8},
u^{1}_{5} u^{1}_{0}, u^{1}_{5} u^{1}_{1}, u^{1}_{5}  u^{1}_{3},
u^{1}_{5}  u^{1}_{7}, u^{1}_{5} u^{1}_{8},u^{1}_{7} u^{1}_{2},
 u^{1}_{7} u^{1}_{4}, u^{1}_{7} u^{1}_{6},$\newline
$ u^{2}_{6} u^{2}_{1} , u^{2}_{6} u^{2}_{2},u^{2}_{6} u^{2}_{4},
u^{2}_{6} u^{2}_{7}, u^{2}_{6} u^{2}_{8}, u^{2}_{7} u^{2}_{0},
u^{2}_{7} u^{2}_{3}, u^{2}_{7} u^{2}_{5},
u^{3}_{4} u^{3}_{2}, u^{3}_{4} u^{3}_{3}, u^{3}_{4} u^{3}_{6}, u^{3}_{4} u^{3}_{7},
u^{3}_{4} u^{3}_{8}, u^{3}_{6} u^{3}_{0}, u^{3}_{6} u^{3}_{1},$\newline
$u^{3}_{6} u^{3}_{5},u^{0}_{5} u^{1}_{5}, u^{0}_{7} u^{1}_{7},u^{1}_{7} u^{2}_{7},
u^{2}_{6} u^{3}_{6},u^{0}_{4} u^{3}_{4}, u^{0}_{6} u^{3}_{6} \};$ \newline
$E(T_5)=\{u^{0}_{1} u^{0}_{0},u^{0}_{1} u^{0}_{3}, u^{0}_{1} u^{0}_{7},
u^{0}_{1} u^{0}_{8}, u^{0}_{8}  u^{0}_{2},
u^{0}_{8} u^{0}_{4}, u^{0}_{8} u^{0}_{5}, u^{1}_{6}  u^{1}_{3},
u^{1}_{6}  u^{1}_{4}, u^{1}_{6} u^{1}_{5},u^{1}_{6} u^{1}_{8},
 u^{1}_{8} u^{1}_{0}, u^{1}_{8} u^{1}_{2},$\newline
$ u^{1}_{8} u^{1}_{7} , u^{2}_{2} u^{2}_{0},u^{2}_{2} u^{2}_{3},
u^{2}_{2} u^{2}_{4}, u^{2}_{2} u^{2}_{5}, u^{2}_{2} u^{2}_{7},
u^{3}_{1} u^{3}_{0}, u^{3}_{1} u^{3}_{2},
u^{3}_{1} u^{3}_{3}, u^{3}_{1} u^{3}_{4}, u^{3}_{2} u^{3}_{5}, u^{3}_{2} u^{3}_{6},
u^{3}_{2} u^{3}_{7}, u^{0}_{1} u^{1}_{1},u^{0}_{6} u^{1}_{6},$\newline
$u^{0}_{8} u^{1}_{8},u^{1}_{6} u^{2}_{6},u^{1}_{8} u^{2}_{8}, u^{2}_{1} u^{3}_{1},u^{2}_{2} u^{3}_{2},
u^{0}_{1} u^{3}_{1},u^{0}_{8} u^{3}_{8} \}.$ 
\subsection{Five completely independent spanning trees in $K_{9}\square C_4$}\label{dek95}
$E(T_1)=\{u^{0}_{0} u^{0}_{2},u^{0}_{0} u^{0}_{3}, u^{0}_{0} u^{0}_{5},
u^{0}_{0} u^{0}_{6}, u^{0}_{0}  u^{0}_{7},
u^{0}_{6} u^{0}_{1}, u^{0}_{6} u^{0}_{4}, u^{0}_{6}  u^{0}_{8},
u^{1}_{6}  u^{1}_{1}, u^{1}_{6} u^{1}_{2},u^{1}_{6} u^{1}_{5},
 u^{1}_{6} u^{1}_{7}, u^{1}_{6} u^{1}_{8},$\newline
$ u^{2}_{3} u^{2}_{0} , u^{2}_{3} u^{2}_{1},u^{2}_{3} u^{2}_{4},
u^{2}_{3} u^{2}_{6}, u^{2}_{3} u^{2}_{8}, u^{2}_{4} u^{2}_{2},
u^{2}_{4} u^{2}_{5}, u^{2}_{4} u^{2}_{7},
u^{3}_{4} u^{3}_{0}, u^{3}_{4} u^{3}_{2}, u^{3}_{4} u^{3}_{3}, u^{3}_{4} u^{3}_{7},
u^{3}_{7} u^{3}_{1}, u^{3}_{7} u^{3}_{5},u^{3}_{7} u^{3}_{8},$\newline
$u^{4}_{4} u^{4}_{0},u^{4}_{4} u^{4}_{2},u^{4}_{4} u^{4}_{6},
u^{4}_{4} u^{4}_{8},u^{4}_{6} u^{4}_{1},u^{4}_{6} u^{4}_{3},u^{4}_{6} u^{4}_{5},
u^{0}_{0} u^{1}_{0}, u^{0}_{6} u^{1}_{6}, u^{1}_{3} u^{2}_{3}, u^{1}_{4} u^{2}_{4}, 
u^{2}_{4} u^{3}_{4}, u^{3}_{4} u^{4}_{4}, u^{3}_{6} u^{4}_{6}, u^{3}_{7} u^{4}_{7},  u^{0}_{6} u^{4}_{6}\};$ \newline
$E(T_2)=\{u^{0}_{7} u^{0}_{1},u^{0}_{7} u^{0}_{2}, u^{0}_{7} u^{0}_{4},
u^{0}_{7} u^{0}_{6}, u^{0}_{7}  u^{0}_{8},
u^{1}_{3} u^{1}_{1}, u^{1}_{3} u^{1}_{4}, u^{1}_{3}  u^{1}_{6},
u^{1}_{3}  u^{1}_{7}, u^{1}_{3} u^{1}_{8},u^{1}_{7} u^{1}_{0},
 u^{1}_{7} u^{1}_{2}, u^{1}_{7} u^{1}_{6},$\newline
$ u^{2}_{0} u^{2}_{1} , u^{2}_{0} u^{2}_{4},u^{2}_{0} u^{2}_{5},
u^{2}_{0} u^{2}_{7}, u^{2}_{0} u^{2}_{8}, u^{2}_{7} u^{2}_{2},
u^{2}_{7} u^{2}_{3}, u^{2}_{7} u^{2}_{6},
u^{3}_{0} u^{3}_{1}, u^{3}_{0} u^{3}_{5}, u^{3}_{0} u^{3}_{8}, u^{3}_{1} u^{3}_{2},
u^{3}_{1} u^{3}_{3}, u^{3}_{1} u^{3}_{4},u^{3}_{1} u^{3}_{6},$\newline
$u^{4}_{0} u^{4}_{1},u^{4}_{0} u^{4}_{5},u^{4}_{0} u^{4}_{6},
u^{4}_{0} u^{4}_{7},u^{4}_{5} u^{4}_{2},u^{4}_{5} u^{4}_{3},u^{4}_{5} u^{4}_{4},u^{4}_{5} u^{4}_{8},
u^{0}_{3} u^{1}_{3}, u^{0}_{7} u^{1}_{7}, u^{1}_{7} u^{2}_{7}, u^{2}_{0} u^{3}_{0}, 
u^{2}_{7} u^{3}_{7}, u^{3}_{0} u^{4}_{0}, u^{0}_{0} u^{4}_{0},  u^{0}_{5} u^{4}_{5}\};$ \newline
$E(T_3)=\{u^{0}_{4} u^{0}_{0},u^{0}_{4} u^{0}_{3}, u^{0}_{4} u^{0}_{7},
u^{0}_{4} u^{0}_{8}, u^{0}_{7}  u^{0}_{1},
u^{0}_{7} u^{0}_{2}, u^{0}_{7} u^{0}_{5}, u^{0}_{7}  u^{0}_{6},
u^{1}_{2}  u^{1}_{1}, u^{1}_{2} u^{1}_{3},u^{1}_{2} u^{1}_{4},
 u^{1}_{2} u^{1}_{5}, u^{1}_{2} u^{1}_{8},$\newline
$ u^{1}_{4} u^{1}_{0} , u^{1}_{4} u^{1}_{6},u^{1}_{4} u^{1}_{7},
u^{2}_{1} u^{2}_{2}, u^{2}_{1} u^{2}_{4}, u^{2}_{1} u^{2}_{5},
u^{2}_{1} u^{2}_{7}, u^{2}_{1} u^{2}_{8},
u^{2}_{2} u^{2}_{0}, u^{2}_{2} u^{2}_{3}, u^{2}_{2} u^{2}_{6}, u^{3}_{2} u^{3}_{0},
u^{3}_{2} u^{3}_{3}, u^{3}_{2} u^{3}_{5},u^{3}_{2} u^{3}_{7},$\newline
$u^{3}_{5} u^{3}_{1},u^{3}_{5} u^{3}_{4},u^{3}_{5} u^{3}_{6},
u^{3}_{5} u^{3}_{8},u^{4}_{2} u^{4}_{0},u^{4}_{2} u^{4}_{1},u^{4}_{2} u^{4}_{3},u^{4}_{2} u^{4}_{6},u^{4}_{2} u^{4}_{8},
u^{0}_{4} u^{1}_{4}, u^{1}_{2} u^{2}_{2}, u^{2}_{2} u^{3}_{2}, u^{3}_{2} u^{4}_{2}, 
u^{3}_{5} u^{4}_{5}, u^{0}_{4} u^{4}_{4},  u^{0}_{7} u^{4}_{7}\};$ \newline
$E(T_4)=\{u^{0}_{1} u^{0}_{0},u^{0}_{1} u^{0}_{3}, u^{0}_{1} u^{0}_{4},
u^{0}_{1} u^{0}_{8}, u^{0}_{3}  u^{0}_{2},
u^{0}_{3} u^{0}_{5}, u^{0}_{3} u^{0}_{6}, u^{0}_{3}  u^{0}_{7},
u^{1}_{0}  u^{1}_{1}, u^{1}_{0} u^{1}_{2},u^{1}_{0} u^{1}_{3},
 u^{1}_{0} u^{1}_{5}, u^{1}_{0} u^{1}_{6},$\newline
$ u^{1}_{1} u^{1}_{4} , u^{1}_{1} u^{1}_{7},u^{1}_{1} u^{1}_{8},
u^{2}_{8} u^{2}_{2}, u^{2}_{8} u^{2}_{4}, u^{2}_{8} u^{2}_{5},
u^{2}_{8} u^{2}_{6}, u^{2}_{8} u^{2}_{7},
u^{3}_{3} u^{3}_{0}, u^{3}_{3} u^{3}_{5}, u^{3}_{3} u^{3}_{7}, u^{3}_{3} u^{3}_{8},
u^{3}_{8} u^{3}_{1}, u^{3}_{8} u^{3}_{2},u^{3}_{8} u^{3}_{4},$\newline
$u^{3}_{8} u^{3}_{6},u^{4}_{3} u^{4}_{0},u^{4}_{3} u^{4}_{4},
u^{4}_{3} u^{4}_{7},u^{4}_{3} u^{4}_{8},u^{4}_{7} u^{4}_{2},u^{4}_{7} u^{4}_{5},u^{4}_{7} u^{4}_{6},u^{0}_{1} u^{1}_{1},
u^{1}_{0} u^{2}_{0}, u^{1}_{1} u^{2}_{1}, u^{2}_{3} u^{3}_{3}, u^{2}_{8} u^{3}_{8}, 
u^{3}_{3} u^{4}_{3}, u^{0}_{1} u^{4}_{1},  u^{0}_{3} u^{4}_{3}\};$ \newline
$E(T_5)=\{u^{0}_{2} u^{0}_{1},u^{0}_{2} u^{0}_{4}, u^{0}_{2} u^{0}_{6},
u^{0}_{2} u^{0}_{8}, u^{0}_{8}  u^{0}_{0},
u^{0}_{8} u^{0}_{3}, u^{0}_{8} u^{0}_{7}, u^{1}_{5}  u^{1}_{1},
u^{1}_{5}  u^{1}_{3}, u^{1}_{5} u^{1}_{4},u^{1}_{5} u^{1}_{8},
 u^{1}_{8} u^{1}_{0}, u^{1}_{8} u^{1}_{7},$\newline
$ u^{2}_{5} u^{2}_{2} , u^{2}_{5} u^{2}_{3},u^{2}_{5} u^{2}_{6},
u^{2}_{5} u^{2}_{7}, u^{2}_{6} u^{2}_{0}, u^{2}_{6} u^{2}_{1},
u^{2}_{6} u^{2}_{4}, u^{3}_{6} u^{3}_{0},
u^{3}_{6} u^{3}_{2}, u^{3}_{6} u^{3}_{3}, u^{3}_{6} u^{3}_{4}, u^{3}_{6} u^{3}_{7},
u^{4}_{1} u^{4}_{3}, u^{4}_{1} u^{4}_{4},u^{4}_{1} u^{4}_{5},$\newline
$u^{4}_{1} u^{4}_{8},u^{4}_{8} u^{4}_{0},u^{4}_{8} u^{4}_{6},
u^{4}_{8} u^{4}_{7},u^{0}_{2} u^{1}_{2},u^{0}_{5} u^{1}_{5},u^{0}_{8} u^{1}_{8},u^{1}_{5} u^{2}_{5},u^{1}_{6} u^{2}_{6},
u^{1}_{8} u^{2}_{8},u^{2}_{5} u^{3}_{5}, u^{2}_{6} u^{3}_{6}, u^{3}_{1} u^{4}_{1}, u^{3}_{8} u^{4}_{8}, 
u^{0}_{2} u^{4}_{2}, u^{0}_{8} u^{4}_{8}\}.$

 \section{Edge sets of the trees from Section \ref{sec:3Dgrid}}
\subsection{Three completely independent spanning trees in the last four levels of $TM(3,3,q)$}\label{ap:tor33}
$E(T_1)=\{(0,0,0)(1,0,0), (0,0,0) (0,1,0),(0,0,0)(0,2,0),(0,0,0) (0,0,1),$\newline
$(1,0,0)(1,2,0),(1,0,0) (2,0,0), (1,0,0) (1,0,1), (0,1,0) (1,1,0), (0,1,0) (0,1,1),$\newline
$(0,1,1)(1,1,1),(0,1,1) (0,2,1), (0,1,1) (0,1,2), (0,2,1) (1,2,1), (0,2,1) (2,2,1),$\newline
$(0,2,1)(0,2,2),(2,2,1) (2,0,1), (2,2,1) (2,1,1), (1,0,2) (0,0,2), (1,0,2) (2,0,2),$\newline
$(1,0,2)(1,2,2),(1,0,2) (1,0,3), (1,2,2) (2,2,2), (1,2,2) (1,1,2), (1,2,2) (1,2,3),$\newline
$(2,2,2)(2,1,2),(2,2,2) (2,2,3), (0,1,3) (1,1,3), (0,1,3) (0,0,3), (0,1,3) (2,1,3),$\newline
$(0,1,3)(0,1,4),(2,1,3) (2,0,3), (2,1,3) (2,2,3), (2,1,3) (2,1,4), (2,2,3) (0,2,3),$\newline
$(2,2,3) (2,2,4)\};$\newline
$E(T_2)=\{(1,1,0)(2,1,0), (1,1,0) (1,0,0),(1,1,0)(1,2,0),(1,1,0) (1,1,1),$\newline
$(2,1,0)(0,1,0),(2,1,0) (2,0,0), (2,1,0) (2,1,1), (1,2,0) (2,2,0), (1,2,0) (1,2,1),$\newline
$(0,0,1)(1,0,1),(0,0,1) (0,1,1), (0,0,1) (0,2,1), (1,0,1) (2,0,1), (1,0,1) (1,2,1),$\newline
$(1,0,1)(1,0,2),(1,2,1) (2,2,1), (1,2,1) (1,2,2), (0,0,2) (2,0,2), (0,0,2) (0,2,2),$\newline
$(0,0,2)(0,0,3),(2,0,2) (2,1,2), (2,0,2) (2,2,2), (2,0,2) (2,0,3), (2,1,2) (0,1,2),$\newline
$(2,1,2)(1,1,2),(2,1,2) (2,1,3), (0,0,3) (1,0,3), (0,0,3) (0,2,3), (0,0,3) (0,0,4),$\newline
$(0,2,3) (1,2,3), (0,2,3) (0,1,3), (0,2,3) (0,2,4), (1,2,3) (2,2,3),(1,2,3) (1,1,3),$\newline
$(1,2,3) (1,2,4)\};$\newline
$E(T_3)=\{(2,0,0)(0,0,0), (2,0,0) (2,2,0),(2,0,0)(2,0,1),(0,2,0) (1,2,0),$\newline
$(0,2,0)(2,2,0),(0,2,0) (0,1,0), (0,2,0) (0,2,1), (2,2,0) (2,1,0), (2,2,0) (2,2,1),$\newline
$(2,0,1)(0,0,1),(2,0,1) (2,1,1), (2,0,1) (2,0,2), (1,1,1) (2,1,1), (1,1,1) (1,0,1),$\newline
$(1,1,1)(1,2,1),(2,1,1) (0,1,1), (2,1,1) (2,1,2), (0,1,2) (1,1,2), (0,1,2) (0,0,2),$\newline
$(0,1,2)(0,2,2),(0,1,2) (0,1,3), (1,1,2) (1,0,2), (1,1,2) (1,1,3), (0,2,2) (1,2,2),$\newline
$(0,2,2)(2,2,2),(0,2,2) (0,2,3), (1,0,3) (2,0,3), (1,0,3) (1,1,3), (1,0,3) (1,2,3),$\newline
$(1,0,3) (1,0,4), (2,0,3) (0,0,3), (2,0,3) (2,2,3), (2,0,3) (2,0,4),(1,1,3) (2,1,3),$\newline
$(1,1,3) (1,1,4)\}.$
\subsection{Three completely independent spanning trees in the last five levels of $TM(3,3,q)$}\label{ap:tor34}
$E(T_1)=\{(0,0,0)(1,0,0), (0,0,0) (0,1,0),(0,0,0)(0,2,0),(0,0,0) (0,0,1),$\newline
$(1,0,0)(1,2,0),(1,0,0) (2,0,0), (1,0,0) (1,0,1), (0,1,0) (1,1,0), (0,1,0) (0,1,1),$\newline
$(1,0,1)(2,0,1),(1,0,1) (1,2,1), (1,2,1) (2,2,1), (1,2,1) (1,1,1), (1,2,1) (1,2,2),$\newline
$(2,2,1)(0,2,1),(2,2,1) (2,1,1), (2,2,1) (2,2,2), (0,0,2) (1,0,2), (0,0,2) (2,0,2),$\newline
$(0,0,2)(0,1,2),(0,0,2),(0,0,3), (1,0,2) (1,1,2), (1,0,2) (1,0,3), (0,1,2) (2,1,2),$\newline
$(0,1,2) (0,2,2),(0,1,2) (0,1,3), (1,0,3) (2,0,3),(1,0,3)(1,2,3),(1,0,3) (1,0,4), $\newline
$(1,2,3) (2,2,3), (1,2,3) (1,1,3), (1,2,3) (1,2,4),(2,2,3)(0,2,3),(2,2,3)(2,1,3),$\newline
$(2,2,3) (2,2,4), (0,1,4) (1,1,4), (0,1,4) (0,0,4), (0,1,4) (2,1,4), (0,1,4)(0,1,5),$\newline
$(2,1,4) (2,0,4), (2,1,4) (2,2,4), (2,1,4) (2,1,5), (2,2,4) (0,2,4),(2,2,4) (2,2,5)\};$\newline
$E(T_2)=\{(1,1,0)(2,1,0), (1,1,0) (1,0,0),(1,1,0)(1,2,0),(1,1,0) (1,1,1),$\newline
$(2,1,0)(0,1,0),(2,1,0) (2,0,0), (2,1,0) (2,1,1), (1,2,0) (2,2,0), (1,2,0) (1,2,1),$\newline
$(0,0,1)(1,0,1),(0,0,1) (2,0,1), (0,0,1) (0,2,1), (0,0,1) (0,0,2), (2,0,1) (2,1,1),$\newline
$(2,0,1)(2,2,1),(2,0,1) (2,0,2), (2,1,1) (0,1,1), (1,1,2) (0,1,2), (1,1,2) (2,1,2),$\newline
$(1,1,2) (1,2,2),(1,1,2)(1,1,3), (2,1,2) (2,2,2), (2,1,2) (2,1,3), (1,2,2) (0,2,2),$\newline
$(1,2,2) (1,0,2),(1,2,2) (1,2,3),(0,0,3) (1,0,3),(0,0,3) (2,0,3), (0,0,3) (0,2,3),$\newline
$(0,0,3)(0,0,4),(2,0,3) (2,1,3), (2,0,3) (2,2,3), (2,0,3) (2,0,4), (2,1,3) (0,1,3),$\newline
$(2,1,3) (2,1,4), (0,0,4) (1,0,4), (0,0,4) (0,2,4), (0,0,4) (0,0,5),(0,2,4) (1,2,4), $\newline
$(0,2,4) (0,1,4), (0,2,4) (0,2,5), (1,2,4) (2,2,4),(1,2,4) (1,1,4),(1,2,4) (1,2,5)\};$\newline
$E(T_3)=\{(2,0,0)(0,0,0), (2,0,0) (2,2,0),(2,0,0)(2,0,1),(0,2,0) (1,2,0),$\newline
$(0,2,0)(2,2,0),(0,2,0) (0,1,0), (0,2,0) (0,2,1), (2,2,0) (2,1,0), (2,2,0) (2,2,1),$\newline
$(0,1,1) (1,1,1),(0,1,1)(0,0,1), (0,1,1) (0,2,1),(0,1,1) (0,1,2), (1,1,1) (2,1,1),$\newline
$(1,1,1) (1,0,1),(1,1,1)(1,1,2), (0,2,1) (1,2,1), (2,0,2) (1,0,2),(2,0,2) (2,1,2), $\newline
$(2,0,2) (2,2,2),(2,0,2) (2,0,3),(0,2,2) (2,2,2), (0,2,2) (0,0,2),(0,2,2) (0,2,3),$\newline
$(2,2,2) (1,2,2),(2,2,2) (2,2,3),(0,1,3) (1,1,3), (0,1,3) (0,0,3),(0,1,3)(0,2,3),$\newline
$(0,1,3) (0,1,4),(1,1,3)(2,1,3),(1,1,3) (1,0,3), (1,1,3) (1,1,4), (0,2,3) (1,2,3),$\newline
$(0,2,3) (0,2,4), (1,0,4) (2,0,4), (1,0,4) (1,1,4), (1,0,4) (1,2,4),(1,0,4) (1,0,5),$\newline
$(2,0,4) (0,0,4), (2,0,4) (2,2,4), (2,0,4) (2,0,5),(1,1,4) (2,1,4),(1,1,4) (1,1,5)\}.$
\end{appendices}

\end{document}